\documentclass[aps,prl,reprint,superscriptaddress]{revtex4-2}
\usepackage[T1]{fontenc}
\usepackage[utf8]{inputenc}
\usepackage{amsmath,amssymb,amsthm,bm,mathtools}
\usepackage{array,booktabs,graphicx,microtype,xcolor}
\usepackage[hidelinks]{hyperref}
\newcommand{\tr}{\operatorname{tr}}
\newcommand{\Law}{\operatorname{Law}}
\newcommand{\D}{\mathcal{D}}
\newcommand{\Hc}{\mathcal{H}}
\newcommand{\C}{\mathcal{C}}
\newcommand{\M}{\mathcal{M}}
\newcommand{\A}{\mathfrak{A}}
\newcommand{\Wone}{W_{1,\mathrm{tr}}}
\newcommand{\dd}{\,\mathrm{d}}
\newcommand{\E}{\mathbb{E}}
\newcommand{\Prob}{\mathbb{P}}
\newcommand{\CP}{\mathbb{CP}}
\newcommand{\RP}{\mathbb{RP}}
\newcolumntype{P}[1]{>{\raggedright\arraybackslash}p{#1}}
\newtheorem{theorem}{Theorem}
\newtheorem{lemma}{Lemma}
\newtheorem{proposition}{Proposition}
\newtheorem{corollary}{Corollary}
\begin{document}

\title{Distinct Feedback-Strength Requirements for Quantum-State Ensemble Preparation under Channel-Equivalent Monitoring}
\author{Haitao Huang}
\email{huanghaitao4396@gmail.com}
\affiliation{School of Computer Science and Engineering, Macau University of Science and Technology, Macau 999078, China}
\author{Qinglin Zhao}
\email[Corresponding author: ]{qlzhao@must.edu.mo}
\affiliation{School of Computer Science and Engineering, Macau University of Science and Technology, Macau 999078, China}
\affiliation{Zhuhai MUST Science and Technology Research Institute, Zhuhai, China}
\author{Xiaoyu Li}
\email{xiaoyuuestc@uestc.edu.cn}
\affiliation{School of Information and Software Engineering, University of Electronic Science and Technology of China, Chengdu, China}

\begin{abstract}
Different measurements of the same environment can preserve the average quantum channel yet change the feedback strength needed to prepare a state ensemble.  We establish this distinction for a multiqubit register under complete Pauli monitoring, targeting the uniform ensemble of real-amplitude pure states.  With unit detection efficiency, public records, and bounded Hamiltonian feedback, fixed transverse noise requires a minimum peak strength proportional to $\epsilon^{-1}$ at distributional error $\epsilon$.  Tangent noise instead admits a construction requiring only $O[\sqrt{\log(1/\epsilon)}]$.  At fixed preparation time, we determine the optimum over all allowed feedback policies and derive a geometric lower bound beyond radial symmetry.  The separation identifies the measurement realization as part of the control-resource specification for monitored ensemble preparation.
\end{abstract}

\maketitle

A quantum-state ensemble is specified by its constituent states and their sampling probabilities.  During continuous monitoring, this distribution is assembled from trajectories conditioned on the measurement record.  Different readouts of the same environment can leave the discarded-record channel unchanged while producing different conditioned trajectories \cite{WisemanDiosi2001,ChiaWiseman2011,Pinol2024Unravellings,GaonaReyes2025UnravellingLimits}.  When monitoring must remain active, measurement design requires distinguishing a poorly chosen controller from a budget that no allowed controller can overcome.  We make this distinction for the \emph{optimal control resources} of a prescribed terminal ensemble.

The separation is explicit for a multiqubit register targeting the uniform ensemble of real-amplitude pure states.  The resource is the maximum Hamiltonian strength during preparation.  At fixed dimension and time, target-preserving monitoring admits an $O[\sqrt{\log(1/\epsilon)}]$ construction, whereas fixed transverse noise imposes a $\Theta(\epsilon^{-1})$ minimum peak strength at terminal ensemble error $\epsilon$.  After successful capture, one readout permits sampling within the family without normal feedback; the other requires sustained confinement (Fig.~\ref{fig:instrument}).  An exact reduction yields the all-controller optimum and a budget window favoring tangent monitoring; a geometric lower bound extends beyond radial symmetry.

Monitoring affects estimation and purification \cite{Jacobs2003Purification,RuskovKorotkovMolmer2010,RuskovCombesMolmerWiseman2012,JiangWangMartinWhaley2020,HacohenGourgy2016Noncommuting}, and its influence on optimal feedback cost is already established.  Wiseman and Doherty optimize the environment measurement for stationary linear-quadratic-Gaussian feedback at fixed system--environment dynamics \cite{WisemanDoherty2005}.  Ohzeki and Jordan construct a measured Schr\"odinger bridge under a chosen monitored reference diffusion \cite{OhzekiJordan2026MeasuredBridge}.  Here the new result is a peak-strength--accuracy separation for a nonlinear terminal ensemble task, with a finite-time optimum over all allowed causal feedback.  Its same-budget window certifies reachability for one readout and impossibility for another with the identical average channel.  The readouts vary backaction geometry and record information jointly.

\begin{figure*}[t]
 \includegraphics[width=\textwidth]{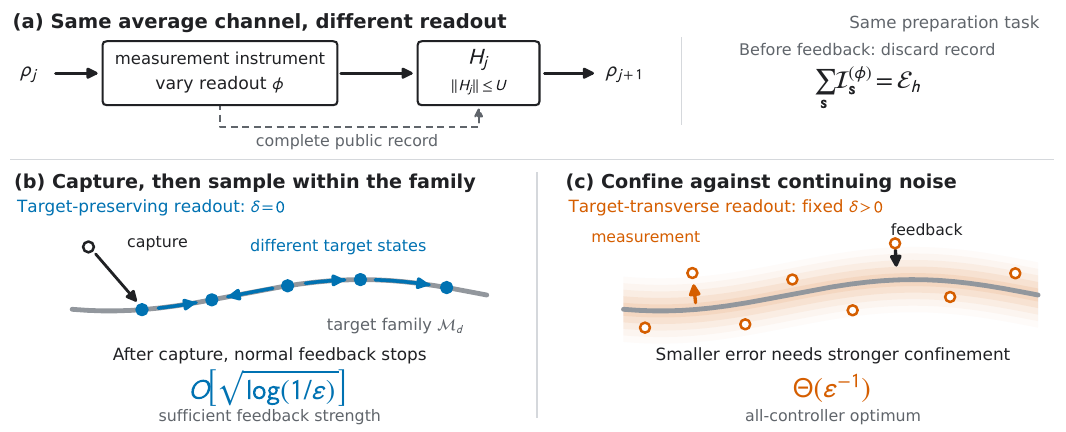}
 \caption{\textbf{Same average channel, different feedback scaling.}
 (a) Readout phases vary at fixed discarded-record channel.  The solid line carries the system; the dashed complete public record conditions the subsequent Hamiltonian.
 (b) At $\delta=0$, successful capture is followed by monitored motion within $\M_d$; normal feedback stops.
 (c) At fixed $\delta>0$, measurement continually displaces states from $\M_d$.  Inward feedback maintains a characteristic distance $\ell_\perp\sim\kappa\delta/U$ at large $U$.  Gray curves depict the same target-family slice in pure-state space, rather than a trajectory.  Dots and shading illustrate states and confinement, not sampled data or hard support.  The scalings show a sufficient construction in (b) and the all-controller optimum in (c), as $\epsilon\to0$ at fixed dimension and time.  Task and control conditions are matched; geometry and record information vary jointly.}
 \label{fig:instrument}
\end{figure*}

\paragraph{Task and channel-equivalent monitoring.}
Let $d=2^n$.  One run starts from a public complex-Haar state and returns an $n$-qubit register together with the public measurement record.  At unit efficiency the record and declared policy determine the conditional state $\rho_T$, so repeated runs produce $\Law(\rho_T)$.  The target is the invariant law $\nu_d$ on
$\M_d=\RP^{d-1}\subset\CP^{d-1}$: pure states with real amplitudes in a fixed basis, up to global phase.  For $n\ge2$ this family contains entangled states and has real dimension $d-1$ inside a $2(d-1)$-dimensional pure-state space.  Calibration uses record-conditioned measurements across repeated preparations or prespecified record bins \cite{WisemanVaccaro2001PREnsembles}.
Real states and real-Kraus operations also form the operational resource theory of imaginarity \cite{Wu2021OperationalImaginarity}.  Our tangent instrument preserves this established real-state structure; the resource considered here is the strength needed to prepare its ensemble under ongoing monitoring.

The prescribed monitoring remains active at fixed strength throughout $[0,T]$, including the terminal interval.  The public filtration $\mathcal F_t^{\rm obs}$ contains the initial label, an optional independent seed, and all records through $t$.  A predictable controller uses only information available before each new measurement increment.  With $\hbar=1$, we allow
\begin{equation}
 \A_U=\{H_t=H_t^\dagger:H_t\ \text{predictable},\ \|H_t\|_{\rm op}\le U\}.
\label{eq:controls}
\end{equation}
Thus $U$ is the essential supremum of the Hamiltonian strength over time and public records; it is the single feedback resource varied below.  The control enters through $-i[H_t,\rho_t]\dd t$.  Fixed fresh-ancilla preparation, readout, and reset are included in the matched instrument that realizes Eq.~(\ref{eq:kraus}).  The control class excludes same-increment current feedback, hidden records, additional reset channels, postselection, and a free terminal pulse.  We compare ensemble laws with trace-Wasserstein distance $\Wone$: the smallest mean trace distance $d_{\rm tr}(\rho,\sigma)=\|\rho-\sigma\|_1/2$ over couplings of the two laws.  For the readout parameter $\delta$ defined below, the optimal error and minimum peak strength are
\begin{align}
 E^*_{d,\delta}(U,T)&=\inf_{H\in\A_U}
 \Wone(\Law^{\delta,H}[\rho_T],\nu_d),\nonumber\\
 \C_{d,\delta}(\epsilon)&=\inf\{U:E^*_{d,\delta}(U,T)\le\epsilon\}.
\label{eq:cost}
\end{align}

For every nonidentity Hermitian Pauli string $P$, use the binary instrument with outcome $s=\pm1$,
\begin{equation}
 K_{s,h}^{(P,\phi)}=
 \frac{\sqrt{1-p_h}I+s\sqrt{p_h}e^{-i\phi}P}{\sqrt2},
 \quad p_h=\frac{1-e^{-2\Gamma h/d^2}}2.
\label{eq:kraus}
\end{equation}
Set $e^{-i\phi_P}=1$ for $P^{\mathsf T}=P$ and
$e^{-i\phi_P}=\sqrt\delta-i\sqrt{1-\delta}$ for $P^{\mathsf T}=-P$.
The discarded-record factors commute and give, for every $\delta\in[0,1]$,
\begin{equation}
 \mathcal E_h^{(\delta)}=
 \exp\!\left[h\Gamma\left(\frac{I}{d}\tr-\mathrm{id}\right)\right].
\label{eq:macrochannel}
\end{equation}
Equation~(\ref{eq:macrochannel}) is the channel-matching constraint used throughout.
The public alphabets and record rates are also matched.  The readout phase changes both the local Fisher information of the public record and the conditional covariance.  Equal record access therefore allows different information content.  In the continuous limit, the conditional state follows a diffusive quantum filter driven by independent public innovations.  The Supplemental Material gives this filter and the exact Fisher quadratic form; equality of the latter forces equal $\delta$ within this family.

\paragraph{Control-strength separation.}
Here tangent and normal refer to the target family within pure-state space, rather than to an individual trajectory.  We compare transverse measurement diffusion; curvature-induced It\^o drift and inward Hamiltonian feedback play different roles.  At the target, $\delta$ sets the transverse diffusion strength: $\delta=0$ preserves the family once reached, while every $\delta>0$ produces transverse fluctuations requiring continued repair (Fig.~\ref{fig:instrument}).  This distinction determines the accuracy dependence of the required feedback.

\begin{theorem}[channel-equivalent control-strength separation]
\label{thm:multi}
Fix $n\ge2$, $d=2^n$, and $T,\Gamma>0$ under the contract above.  For every fixed $0<\delta\le1$,
\begin{equation}
 \C_{d,\delta}(\epsilon)=\Theta(\epsilon^{-1})
 \quad(\epsilon\downarrow0).
\label{eq:fixedcost}
\end{equation}
For the tangent instrument $\delta=0$, first-hit capture followed by stopping the normal controller gives
\begin{equation}
 \C_{d,0}(\epsilon)\le
 \frac{\pi}{4T}+\sqrt{\frac{2\kappa}{T}\log\frac1\epsilon},
 \quad \kappa=\frac{\Gamma}{2d}.
\label{eq:tangentresult}
\end{equation}
This is a constructive $O[\sqrt{\log(1/\epsilon)}]$ upper bound; a matching tangent lower asymptotic is open.  When $\delta$ varies with accuracy, a budget of order $\sqrt{\log(1/\epsilon)}$ suffices exactly when
$\delta=O[\epsilon\sqrt{\log(1/\epsilon)}]$.
\end{theorem}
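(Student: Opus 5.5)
The plan is to pass to the continuous-limit diffusive filter and reduce the problem to a one-dimensional radial process measuring the distance of the conditional pure state from $\M_d$. Fix a smooth normal coordinate, for instance $r_t=\sin^2\theta_t$, where $\theta_t$ is the Fubini--Study distance from $\psi_t$ to $\RP^{d-1}$. A concrete choice is $r=(1-|\langle\bar\psi|\psi\rangle|)/2$, which depends on $\psi$ and its complex conjugate. Apply It\^o's formula to $r_t$ under the filter. I expect three contributions:
\begin{itemize}
\item a measurement diffusion normal to the family, with coefficient of order $\kappa\delta$ at $r=0$ and $\kappa\delta+O(r)$ nearby;
\item a curvature It\^o drift;
\item the Hamiltonian term, whose normal component is bounded by $2U\sqrt{r(1-r)}$.
\end{itemize}
The bound on the Hamiltonian term follows from the fact that $\|H\|_{\rm op}\le U$ caps the Fubini--Study speed at $U$. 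A pure-imaginary antisymmetric generator ($H=iA$ with $A$ real antisymmetric, i.e.\ $H$ imaginary Hermitian, so $H^{\mathsf T}=-H$) rotates a phase toward realness and saturates this cap. Invariance of $\nu_d$ and of the tangent dynamics under the real orthogonal group $O(d)$ then means that only $\Law(r_T)$ and the tangential law matter. The tangential law is uniform once the complex-Haar start is symmetrized by the public seed.

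\textbf{Fixed $\delta>0$: lower bound.} First use $\Wone(\Law\rho_T,\nu_d)\ge c\,\E[\theta_T]\ge c\,\E\sqrt{r_T}$, since every coupling must move $\rho_T$ onto $\M_d$. Then show that, under any predictable control with speed $\le U$, the normal diffusion keeps $\E\sqrt{r_T}\gtrsim \kappa\delta/U$. The comparison argument is a stationary balance: the radial variable $y=\theta$ obeys $dy\le (U+\text{drift})\dd t$ inward, with noise of strength $\sqrt{\kappa\delta}$ that cannot be suppressed. During the terminal interval of length $\tau\sim 1/U$, the noise produces displacement $\sim\sqrt{\kappa\delta\tau}$. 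Optimizing over $\tau$ (the feedback can undo at most $U\tau$) gives $\E\theta_T\gtrsim\kappa\delta/U$.

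I would make this rigorous with a supermartingale or Lyapunov function $f(\theta)=\theta+\alpha\theta^2$ (or a reflected Ornstein--Uhlenbeck comparison), showing that $\E f(\theta_T)$ cannot drop below $c\kappa\delta/U$ whatever the drift sign. This gives $\C\gtrsim\epsilon^{-1}$.

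\textbf{Fixed $\delta>0$: upper bound.} For the matching bound, take the bang-bang inward feedback $H_t=U\,\hat n_t$, where $\hat n_t$ is the normalized generator of the geodesic back toward $\M_d$. The radial SDE then has drift $-U$ and diffusion $\kappa\delta$, with a stationary exponential profile of mean $\sim\kappa\delta/U$ (this is the $\ell_\perp$ of Fig.~\ref{fig:instrument}). The initial complex-Haar distance is reached within time $O(1/U)\ll T$. Meanwhile the tangential motion driven by the tangent measurements mixes to within $e^{-cT}$ of uniform on $\RP^{d-1}$. An $O(d)$-equivariant choice of $\hat n$ actually makes the tangential marginal exactly uniform. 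Hence $\Wone\lesssim\kappa\delta/U$.

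\textbf{Tangent case $\delta=0$.} Here $\M_d$ is invariant under the filter, so once a trajectory hits $\M_d$ we stop feedback and the state stays there. By the $O(d)$ symmetry of the instrument and of the seed-randomized initial law, the hitting location is $O(d)$-invariant, hence $\nu_d$-distributed, and it remains so under the invariant tangent dynamics. The error is therefore at most $\Prob(\text{not captured by }T)$.

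To bound this probability, drive $\theta$ inward at rate $U$. The maximum initial angle is $\pi/4$, which explains the $\pi/(4T)$ term: deterministically, that speed suffices. The normal noise off the family is $\sqrt{2\kappa}$ times a Brownian motion at worst. A reflection, or Gaussian tail, bound on $\sup_t(\sqrt{2\kappa}B_t-(U-\pi/4T)t)$ exceeding $0$ at time $T$ gives $\exp[-(U-\pi/4T)^2T/(2\kappa)]$. Setting this equal to $\epsilon$ yields Eq.~(\ref{eq:tangentresult}).

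\textbf{Varying $\delta$.} Combining the two bounds, $\Wone\asymp\kappa\delta/U+(\text{capture failure})$. With $U\sim\sqrt{\log(1/\epsilon)}$, the first term is at most $\epsilon$ iff $\delta=O(\epsilon\sqrt{\log(1/\epsilon)})$.

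\textbf{Main obstacle.} I expect the all-controller lower bound for $\delta>0$ to be the hardest step. It requires a uniform lower bound on the normal diffusion coefficient near $\M_d$ that holds for every state, not just radially symmetric ones, and it must handle control that concentrates its effort in the final $O(1/U)$ window. I would first try the test-function approach: pick $g$ with $\mathcal L g\ge -C$ under any admissible drift, and quantify the short-time spreading of $\theta$ in the last window.
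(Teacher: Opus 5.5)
Your overall architecture matches the paper: the support bound $\Wone\ge\E\sin q_T$, $O(d)$ covariance with projection coupling, capture-and-stop at $\delta=0$, and combining the two sides for the calibration window. The genuine gap is the all-controller lower bound for fixed $\delta>0$, which you leave as a heuristic that does not close as written. The balancing window is $h\asymp\kappa\delta/U^2$ (from $\sqrt{\kappa\delta h}\sim Uh$), not $1/U$. In your picture (noise $\sqrt{\kappa\delta}$ against inward drift $U$), the quadratic test function needs $\alpha\gtrsim U/(\kappa\delta)$ for its It\^o term $\alpha\kappa\delta$ to beat the drift. A lower bound on $\E[\theta_T+\alpha\theta_T^2]$ then controls $\E\theta_T$ only up to a further factor $\kappa\delta/U$.

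The missing idea is that no state-by-state lower bound on the normal noise needs to be fought for. It\^o's formula for $z=\psi^{\mathsf T}\psi$, with the Pauli completeness relations split by transpose parity, gives an exact scalar equation for $q=\tfrac12\arccos|z|$. For every state and every controller, $\dd q=[b_{d,\delta}(q)+v_t]\dd t+\sqrt{\kappa f_\delta(q)}\,\dd W$, with $|v_t|\le U$ and $f_\delta\ge\delta$. Here $W$ is the normalized radial projection of the public innovations and is Brownian in that controller's own filtration. The lower bound is then a one-dimensional pathwise comparison. The maximal-inward reference $R$ (drift $b_{d,\delta}-U$), driven by the same $W$, satisfies $R_t\le q_t$ by localized Tanaka, Gr\"onwall, and endpoint inaccessibility. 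Start $R$ in its invariant law, coupled below the Haar radial law by monotone likelihood ratio. This gives $\Wone\ge L_{d,\delta}(U)$ at every $T$, for arbitrary record memory. The explicit density yields $L_{d,\delta}(U)=\frac{\kappa\delta(d-1)}{2U}[1+O_d(r^{-2})]$ uniformly in $r=U/(\kappa\sqrt\delta)$. That uniformity is exactly what the necessity half of the calibration claim needs as $\delta(\epsilon)\to0$. A terminal-window anticoncentration bound at the generality you describe (the paper's geometric theorem) has constants depending on a normal condition number that degenerates like $1/\delta$ on a fixed tube. So your step ``$\Wone\asymp\kappa\delta/U$'' with varying $\delta$ is unsupported.

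On the constructive side there are concrete errors. An imaginary antisymmetric $H$ gives $e^{-iHt}\in O(d)$, which preserves $\psi^{\mathsf T}\psi$ and hence $q$. It moves parallel to $\M_d$ with zero normal velocity. The saturating inward generator is real symmetric: $H_N=U(ef^{\mathsf T}+fe^{\mathsf T})$ in the frame $\psi=\cos q\,e+i\sin q\,f$, giving $\dot q=-U$ and $H_N(O\psi)=OH_N(\psi)O^{\mathsf T}$. That covariance, together with the $O(d)$-invariance of the instrument and of the complex-Haar prior, makes the projected law exactly $\nu_d$. No seed symmetrization or tangential mixing is needed.

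For $\delta>0$ and $d\ge4$ the layer is not exponential. The It\^o term $\kappa\delta(d-2)\cot(2q)$ makes $q=0$ inaccessible, and the profile is Gamma-like with mean $\approx\kappa\delta(d-1)/(2U)$. Capture must therefore be into a tube $q_*\asymp\kappa\delta/U$, above which this drift is at most $U/2$. After capture you need a bound on $\E\sin q_T$ uniform in the remaining time and in $\delta$, and a stationary mean does not supply it. The paper compares with the process reflected at $q_*$, started from its truncated stationary law, and proves uniform truncated moments.

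In the tangent case the radial noise is $\sqrt\kappa\sin(2q)\le\sqrt\kappa$, so the bracket is at most $\kappa t$. With your $\sqrt{2\kappa}$ the exponent becomes $(UT-\pi/4)^2/(4\kappa T)$, and the stated constant does not follow. You also need $b_{d,0}=-\kappa[\tan(2q)+\sin(2q)\cos(2q)]\le0$ before the exponential martingale inequality applies, and you never check that sign.
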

The lower bound includes feedback with arbitrary memory of the public record.  The calibration condition specifies the readout accuracy needed to retain the tangent construction's resource scaling.

\paragraph{Why the bound is unavoidable and attainable.}
The proof reduces both the unavoidable error and an attaining feedback to the distance from the target.  For $z=\psi^{\mathsf T}\psi$, that distance is
\begin{equation}
 q=\tfrac12\arccos|z|\in[0,\pi/4],\qquad
 d_{\rm tr}([\psi],\M_d)=\sin q.
\label{eq:distance}
\end{equation}
After choosing a global phase, write $\psi=\cos q\,e+i\sin q\,f$ with real orthonormal $e,f$.  The closest target ray is $[e]$.  The controlled radial coordinate obeys
\begin{align}
 \dd q_t={}&[b_{d,\delta}(q_t)+v_t]\dd t
 +\sqrt{\kappa f_\delta(q_t)}\dd W_t,\nonumber\\
 |v_t|\le{}&U,\qquad f_\delta(q)=\delta+(1-\delta)\sin^2(2q).
\label{eq:radial}
\end{align}
The drift $b_{d,\delta}$ is fixed by the instrument; its expression and derivation are given in the Supplemental Material.  In particular, $f_\delta(0)=\delta$: transverse fluctuations vanish on the target only for the tangent instrument.  Hamiltonian feedback changes the drift by at most $U$; it cannot cancel the instantaneous quadratic variation at a fixed state.

Let $R_0$ have Haar radial density
$p_{\rm H}(q)=2(d-1)\sin^{d-2}(2q)\cos(2q)$, and let
\begin{equation}
 \dd R_t=[b_{d,\delta}(R_t)-U]\dd t
 +\sqrt{\kappa f_\delta(R_t)}\dd W_t.
\label{eq:reference}
\end{equation}
\emph{All-controller bound.}  The reference uses the largest allowed inward drift.  It is absorbed at zero for $\delta=0$; for $n\ge2$ and $\delta>0$, neither endpoint is reached.  A localized one-dimensional comparison, performed in each controller's own public filtration, gives $R_t\le q_t$ because $v_t\ge-U$.  Every terminal coupling to a target-supported law therefore costs at least $\E\sin R_T$.  This argument includes arbitrary causal memory of the measurement record.

\emph{Attaining feedback and the ensemble law.}  In the real frame above, the public Hamiltonian
\begin{equation}
 H_N=U(ef^{\mathsf T}+fe^{\mathsf T})
 \quad(0<q<\pi/4).
\label{eq:HN}
\end{equation}
has norm $U$ and realizes $v=-U$.  Its phase-independent form and endpoint construction are given in the Supplemental Material.  At $\delta=0$ it is stopped after the first target hit.  The prior, measurement law, and this feedback are covariant under real orthogonal transformations, $O(d)$.  Consequently nearest-real-state projection has exactly law $\nu_d$, and coupling each state to its projection costs $\E\sin R_T$.  This supplies the correct relative sampling weights as well as proximity to the target family.  Thus, for every $\delta\in[0,1]$ and $U\ge0$,
\begin{equation}
 E^*_{d,\delta}(U,T)=\E[\sin R_T].
\label{eq:finiteexact}
\end{equation}

\emph{Accuracy cost.}  For $\delta>0$, let $L_{d,\delta}(U)$ denote the invariant expectation of $\sin R$ for Eq.~(\ref{eq:reference}).  A stationary comparison gives, for every $H\in\A_U$ and every target law supported on $\M_d$,
\begin{align}
 \Wone(\Law^{\delta,H}[\rho_T],\nu)&\ge L_{d,\delta}(U),\nonumber\\
 L_{d,\delta}(U)&=\frac{\kappa\delta(d-1)}{2U}
 [1+O_d(\kappa^2\delta/U^2)].
\label{eq:lower}
\end{align}
Here $L_{d,\delta}$ is a stationary lower bound at every finite $T$, and $E^*_{d,\delta}(U,T)\to L_{d,\delta}(U)$ as $T\to\infty$.  The invariant density is derived in the Supplemental Material.

For residual normal noise, capture must be followed by continued inward control.  Its error budget is $\Pr(\tau_*>t_c)+B_d\kappa\delta/U$, where $\tau_*$ is the first entry into a target tube, $t_c<T$ is a fixed capture deadline, and $B_d$ depends only on dimension.  The first term bounds failed capture; the second bounds the mean distance after capture uniformly in the remaining time.  At sufficiently large $U/(\kappa\sqrt\delta)$, this construction matches the inverse-accuracy lower bound.  For $\delta=0$, exact capture is permanent and only its failure probability remains.

\paragraph{Common-budget comparison and numerical evaluation.}
The bounds give a finite-accuracy comparison under a common budget.  For
$0<\epsilon<L_{d,\delta}(0)$ define
\begin{align}
 U_{\rm N}^{\rm lb}(\epsilon;\delta)
 &=\inf\{U:L_{d,\delta}(U)\le\epsilon\},\nonumber\\
 U_{\rm T}^{\rm ub}(\epsilon)
 &=\frac{\pi}{4T}+\sqrt{\frac{2\kappa}{T}\log\frac1\epsilon}.
\label{eq:budgetthresholds}
\end{align}
$L_{d,\delta}$ is continuous and strictly decreasing, so the first
quantity is the unique solution of $L_{d,\delta}(U)=\epsilon$.  Whenever
$U_{\rm T}^{\rm ub}<U_{\rm N}^{\rm lb}$, every peak budget in
$[U_{\rm T}^{\rm ub},U_{\rm N}^{\rm lb})$ admits the tangent capture
strategy with error at most $\epsilon$, whereas every allowed controller
for the $\delta$ instrument has error greater than $\epsilon$ (Fig.~\ref{fig:resource}).  These are sufficient and necessary thresholds, not exact finite-accuracy costs.
For $d=4$, $\Gamma=4$, $T=1$, and $\epsilon=0.01$, the tangent sufficient threshold is $2.9314$, while the $\delta=1$ necessary threshold is approximately $74.975$.  Thus $U=4$, well inside this window, suffices for tangent preparation and is excluded for every normal-instrument controller.

\begin{figure*}[t]
 \includegraphics[width=\textwidth]{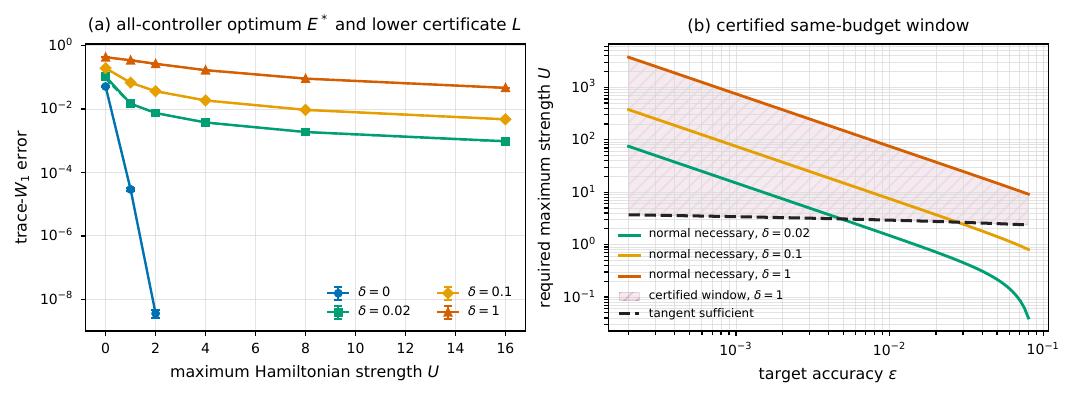}
 \caption{\textbf{Continuous-time resource separation for $d=4$.}
 Here $\Gamma=4$, $\kappa=0.5$, and $T=1$.
 (a) Deterministic evaluations of the finite-time all-controller optimum $E^*(U,T)$ (markers and solid lines).  Dashed curves are the analytic stationary lower certificates $L(U)$ for $\delta>0$; the resolved $\delta=0$ points show the tangent case.  Error bars are twice the sum of successive spatial and temporal differences and are resolution proxies, not rigorous error bounds.
 (b) Inverting $L(U)=\epsilon$ gives necessary budgets for the normal instruments (solid).  The black dashed curve is the sufficient tangent construction in Eq.~(\ref{eq:tangentresult}).  The shaded $\delta=1$ region illustrates the analytically certified same-budget window: the tangent construction succeeds while every allowed $\delta=1$ policy fails.  Boundaries are drawn by floating-point quadrature and interpolation, not interval arithmetic; neither is an exact finite-accuracy optimum.  Both panels use the same prior, target, horizon, and control contract.}
 \label{fig:resource}
\end{figure*}

Figure~\ref{fig:resource} evaluates the scalar optimum and the common-budget bounds for $d=4$.  At $U=8$, $E^*_{4,0.02}=0.00189$ with resolution proxy $3.0\times10^{-5}$, whereas $E^*_{4,1}=0.09121$ with proxy $2.0\times10^{-4}$.  Their excesses above $L(U)$ are smaller than these proxies, so the additional transient costs are unresolved.  Independent discretizations support the resolved separation.  The Supplemental Material gives the main-figure method and pointwise refinement values; auxiliary finite-step data and failed checks remain available in the reproduction files.

\paragraph{Robustness and geometric scope.}
The lower bound also controls perturbations of the target law.  If $\nu_d^\sigma$ obeys
$\Wone(\nu_d^\sigma,\nu_d)\le\sigma$, then every admissible policy satisfies
\begin{equation}
 \Wone(\Law^{\delta,H}[\rho_T],\nu_d^\sigma)
 \ge [L_{d,\delta}(U)-\sigma]_+.
\label{eq:thicktarget}
\end{equation}
Thus error at most $\epsilon$ requires $L_{d,\delta}(U)\le\epsilon+\sigma$.  For fixed $d,\kappa,\delta>0$ and $\epsilon+\sigma\downarrow0$, the certified necessary peak is
$\kappa\delta(d-1)/[2(\epsilon+\sigma)]\,[1+O_d((\epsilon+\sigma)^2/\delta)]$; at fixed nonzero $\sigma$, this lower bound no longer forces a $1/\epsilon$ divergence.
Here $\sigma$ measures distance between laws, not a pointwise tube width.  Finite-accuracy separation persists when $0\le\sigma<\epsilon$ and
$U_{\rm T}^{\rm ub}(\epsilon-\sigma)\le U<U_{\rm N}^{\rm lb}(\epsilon+\sigma;\delta)$:
the tangent construction succeeds for the perturbed target, while every normal policy fails.  The Supplemental Material derives this window by the triangle inequality; it does not solve a general fixed-width target problem.

For complete-Pauli monitoring with detection efficiency $\eta<1$, the public conditional state is generally mixed.  The Supplemental Material proves a controller-uniform positive trace-$W_1$ floor to every pure target family at each fixed $\eta<1$.  Hence high-accuracy preparation requires $1-\eta=O(\epsilon)$.  At $d=4$, $\Gamma=4$, $T=1$, and $\eta=0.99$, this floor is $3.7487\times10^{-3}$, excluding error $10^{-3}$ at any peak.  This bounds the applicability of the ideal separation; it is not a sufficient finite-efficiency preparation condition.

The operator norm is one realization of a more intrinsic resource.  In the Fubini--Study convention used here, Hamiltonian motion has metric speed
$|\dot X_t|_{\rm FS}=\Delta_{\psi_t}H_t\le\|H_t\|_{\rm op}$, so Eq.~(\ref{eq:controls}) bounds the finite-variation speed available to repair target-normal displacement.  The lower theorem below uses only that induced state-space speed bound; constructive attainability still requires a Hamiltonian that realizes the desired vector field, as Eq.~(\ref{eq:HN}) does in the present geometry.

The normal-noise mechanism does not require radial symmetry.  Let $\M$ be a compact embedded target in a compact state manifold.  Assume the local tubular and global embedded-increment bounds stated in the Supplemental Material: uniform finite-variation rate at most $B_0+c_HU$, bounded covariance, and target-normal quadratic-variation rate at least $\lambda_\perp>0$.  The global bounds control arrival from outside the tube.  Then, for sufficiently large $U$,
\begin{equation}
 W_{1,g}(\Law[X_T],\nu)\ge
 c\frac{\lambda_\perp}{B_0+c_HU},
 \qquad \operatorname{supp}\nu\subset\M.
\label{eq:general}
\end{equation}
During a terminal window $h\asymp\lambda_\perp/(B_0+c_HU)^2$, normal martingale displacement is of order $\sqrt{\lambda_\perp h}$, whereas bounded repair is of order $(B_0+c_HU)h$.  The Supplemental Material states the stopping, covariance, tube, and terminal-law assumptions needed to make this comparison uniform over all causal controls.

A pair of oppositely oriented spin-coherent states gives a second realization, on an orbit of intermediate K\"ahler angle (a proper-slant orbit).  For
$[|j_+,\bm n\rangle\otimes|j_-,-\bm n\rangle]$ with $0<j_-<j_+$, monitoring $K_a$ or $iK_a$ gives the same generator $\gamma\sum_a\D[K_a]$.  The first instrument has constant normal rate $4\gamma j_+j_-/(j_++j_-)$, while the second is tangent.  From the same orbit point to the same smooth positive target density on the orbit, Eq.~(\ref{eq:general}) gives an all-controller $\Omega(1/\epsilon)$ normal cost, whereas tangent heat-kernel mixing and bounded transport realize the target at a finite peak independent of $\epsilon$.  This extension uses an orbit-supported prior; ambient-Haar capture and a matching normal upper bound are separate questions.

The Supplemental Material also states single-excitation and bounded-support graph realizations, each with its explicit prior, locality, and control contract.

\paragraph{Implications for quantum diffusion.}
The instrument in Eq.~(\ref{eq:kraus}), followed by feedback in Eq.~(\ref{eq:controls}), defines a constrained quantum diffusion layer: fixed weak system--ancilla interactions, a public measurement record, and a bounded record-conditioned Hamiltonian.  For every fixed bounded Lipschitz Markov controller, the discrete chain converges weakly to the continuous filter studied here, including its terminal $\Wone$ law (Supplemental Material).  The trainable reverse transport, when used, is the bounded system Hamiltonian.  Monitored diffusion generators with Hamiltonian denoising provide related algorithmic settings \cite{Liu2025MBQDM,Bompais2026Reverse,Gabbassov2026ReverseSSE}; theorem coverage requires the stated instrument and control restrictions.  Unrestricted system--ancilla QuDDPM circuits lie outside this class \cite{Zhang2024QuDDPM}.  The measurement-current Hamiltonian of Dubey and John contains a same-increment stochastic term \cite{DubeyJohn2026ScoreFeedback}, as does no-knowledge cancellation \cite{Szigeti2014NoKnowledge}; these use a different resource from bounded $H_t\dd t$.  Joint measurement--control optimization addresses complementary endpoint objectives \cite{KarmakarJordan2026Pontryagin}.

For continuously monitored ensemble preparation, the feedback-free channel alone does not determine the peak-control accuracy cost.  The same average noise channel can support either inverse-accuracy feedback costs or a tangent construction with only square-root-logarithmic growth.  This distinction follows from a bound over all allowed controls and an explicit ensemble-generating feedback.  Measurement implementation therefore enters the resource specification alongside the channel and the preparation task.  The present readouts vary geometry and record information jointly; separating their contributions at exactly matched information remains open.

\nocite{AnnbyAndersson2022FiniteBandwidth,BoutenVanHandelJames2007,Brown2025GaugeUnravellings,Chen1990SlantImmersions,ChenBaoChoi2024,KruhnerXu2023Density,Lecocq2021EfficientMeasurement,Saiphet2021DelayedFeedback,WisemanBouten2008Purification,LionsSznitman1984,Zvonkin1974,LiangAminiMason2022GHZ,PiccittoRussomannoRossini2022,BartheEtAl2025ContinuousGenerators,KobayashiYamamoto2019ControlLimit,OConnorMaGenoni2025FidelityBound}
\bibliography{refs}

\clearpage
\onecolumngrid
\hypersetup{pageanchor=false}
\setcounter{page}{1}
\renewcommand{\thepage}{S\arabic{page}}
\setcounter{section}{0}
\renewcommand{\thesection}{S\Roman{section}}
\renewcommand{\theHsection}{supp.\arabic{section}}
\setcounter{equation}{0}
\renewcommand{\theequation}{S\arabic{equation}}
\renewcommand{\theHequation}{supp.\arabic{equation}}
\setcounter{figure}{0}
\renewcommand{\thefigure}{S\arabic{figure}}
\renewcommand{\theHfigure}{supp.\arabic{figure}}
\setcounter{table}{0}
\renewcommand{\thetable}{S\arabic{table}}
\renewcommand{\theHtable}{supp.\arabic{table}}
\setcounter{lemma}{0}
\renewcommand{\thelemma}{S\arabic{lemma}}
\renewcommand{\theHlemma}{supp.\arabic{lemma}}
\setcounter{proposition}{0}
\renewcommand{\theproposition}{S\arabic{proposition}}
\renewcommand{\theHproposition}{supp.\arabic{proposition}}
\setcounter{corollary}{0}
\renewcommand{\thecorollary}{S\arabic{corollary}}
\renewcommand{\theHcorollary}{supp.\arabic{corollary}}
\hypersetup{pageanchor=true}

\begin{center}
{\large\bfseries Supplemental Material for ``Distinct Feedback-Strength Requirements for Quantum-State Ensemble Preparation under Channel-Equivalent Monitoring''\par}
\vspace{0.7em}
{Haitao Huang$^{1}$, Qinglin Zhao$^{1,2}$, and Xiaoyu Li$^{3}$\par}
\vspace{0.4em}
{\small\itshape
$^{1}$School of Computer Science and Engineering, Macau University of Science and Technology, Macau 999078, China\par
$^{2}$Zhuhai MUST Science and Technology Research Institute, Zhuhai, China\par
$^{3}$School of Information and Software Engineering, University of Electronic Science and Technology of China, Chengdu, China\par}
\vspace{0.3em}
{\small Corresponding author: Qinglin Zhao (\href{mailto:qlzhao@must.edu.mo}{qlzhao@must.edu.mo})\par}
\end{center}
\vspace{0.8em}
The material follows the proof of the main result.  First,
\hyperref[sec:S-shared]{the common control and record definitions} specify
the task.  The \hyperref[sec:S-multiqubit]{complete-Pauli multiqubit section}
then gives the model, radial reduction, all-controller comparison,
attaining ensemble law, and capture--maintenance estimates in one sequence.
The \hyperref[sec:S-qubit]{single-qubit specialization} treats the reflected
boundary and gives explicit crossover formulas.  The
\hyperref[sec:S-general-window]{geometric terminal-window principle}
and its realizations follow the main proof.  Nonuniform targets,
\hyperref[sec:S-efficiency]{finite efficiency}, and
\hyperref[sec:S-instruments]{finite-step instruments} state their additional
assumptions.  Finally,
\hyperref[sec:S-diagnostics]{numerical diagnostics and data provenance}
retain all reported checks and limitations.
Supplemental equations are numbered independently of the main text.

\paragraph*{Costs, errors, and empirical laws.}
$\C_\delta$ is the continuous-time infimum over all allowed controls;
$E_\delta^*(U,T)$ and $E_{d,\delta}^*(U,T)$ are the exact qubit and complete-basis finite-time optima for their uniform targets;
$L_\delta(U)$ is an analytic lower certificate;
$E_\delta^H(T)$ is the finite-time error of one specified controller;
$\widehat\mu_T$ is a finite empirical law; and
$\C_{\delta,h}$ is the discrete cost under the declared post-triple control slot.
The finite-time theorem identifies $E_\delta^*$ with one reference-process expectation; no empirical or finite-step object is equated with it.
Throughout, $\epsilon$ is the requested terminal accuracy, $\varepsilon$ is a
smoothing parameter, $\delta$ is the target-normal instrument weight, $\eta$
is the record efficiency, and $h$ is a finite instrument step.

\section{Common task, public records, and admissible feedback}
\label{sec:S-shared}

Work on a filtered probability space
$(\Omega,\mathcal F,(\mathcal F_t)_{0\le t\le T},\Prob)$ satisfying the usual conditions.
At time zero it carries a normalized state $\psi_0$ drawn from complex Haar measure and, optionally, a declared classical random seed $\xi$ independent of $\psi_0$ and of future measurement innovations.
For an $n$-qubit register let $d=2^n$ and let $\mathcal P_n^\circ$
index the $d^2-1$ nonidentity Hermitian Pauli strings.  Write
$\gamma_P$ for the rate of channel $P$ and
$C_{P,\delta}=e^{-i\phi_{P,\delta}}P$ for its dimensionless readout operator.
Complete-Pauli monitoring uses $\gamma_P=\Gamma/d^2$; the corresponding
radial scale is $\kappa=\Gamma/(2d)$.  For a single qubit these rates
coincide: $\gamma_P=\kappa=\Gamma/4$.
The public filtration is
\begin{equation*}
 \mathcal F_t^{\rm obs}=\sigma\{\psi_0,\xi,Y_{P,s}:P\in\mathcal P_n^\circ,0\le s\le t\}^{\rm aug},
\end{equation*}
where $Y_P$ are the observed records and ``aug'' denotes the usual completion and right-continuous augmentation.
The continuous records and innovations are related by
\begin{equation}
 \dd Y_{P,t}=\sqrt{\gamma_P}\,\tr[(C_{P,\delta}+C_{P,\delta}^\dagger)\rho_t]\dd t
 +\dd W_{P,t}.
\label{eq:S-record}
\end{equation}
At unit efficiency, the conditional state is reconstructed by
\begin{align}
 \dd\rho_t={}&\sum_{P\in\mathcal P_n^\circ}\gamma_P\D[P]\rho_t\dd t
 -i[H_t,\rho_t]\dd t\nonumber\\
 &+\sum_{P\in\mathcal P_n^\circ}\sqrt{\gamma_P}\,
 \Hc_{C_{P,\delta}}(\rho_t)\dd W_{P,t},
\label{eq:S-filter-common}
\end{align}
where $\D[L]\rho=L\rho L^\dagger-\{L^\dagger L,\rho\}/2$ and
$\Hc_C(\rho)=C\rho+\rho C^\dagger-\tr[(C+C^\dagger)\rho]\rho$.
A pure initial condition remains pure under this filter, so
$\rho_t=|\psi_t\rangle\langle\psi_t|$.

\begin{lemma}[public innovations]
\label{lem:S-innovations}
Under the physical measure of the unit-efficiency filter, the processes $W_P$ defined by Eq.~(\ref{eq:S-record}) are independent standard Brownian motions relative to $(\mathcal F_t^{\rm obs})$.
Every coefficient obtained by applying a Borel function to the conditional state and a predictable feedback is consequently predictable in that filtration.
\end{lemma}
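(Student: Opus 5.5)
The plan is to run the standard innovations argument of quantum filtering under the physical (reference-to-physical) measure, then read off predictability from how the feedback is defined. First I would fix the construction. On a reference probability space with a $\Prob_0$-Brownian family $(Y_P)$ that is independent of $(\psi_0,\xi)$, the linear (unnormalized) Zakai-type equation for $\tilde\rho_t$ is driven by $\dd Y_P$, with the feedback $H_t$ predictable in $\mathcal F^{\rm obs}$. Its norm $Z_t=\tr\tilde\rho_t$ solves
\[
\dd Z_t=Z_t\sum_P\sqrt{\gamma_P}\,\tr[(C_{P,\delta}+C_{P,\delta}^\dagger)\rho_t]\dd Y_{P,t}.
\]
The integrands are bounded, because $\|C_{P,\delta}\|_{\rm op}=1$ and $\rho_t$ is a density matrix. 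Novikov's condition then holds trivially on $[0,T]$, so $Z$ is a true martingale, and the physical measure is $\dd\Prob=Z_T\dd\Prob_0$. The normalized state $\rho_t=\tilde\rho_t/Z_t$ satisfies Eq.~(\ref{eq:S-filter-common}) once $W_P$ is defined by Eq.~(\ref{eq:S-record}); this is Itô's formula applied to the quotient.

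The main step is Girsanov's theorem. Under $\Prob$, the processes $W_{P,t}=Y_{P,t}-\int_0^t\sqrt{\gamma_P}\tr[(C+C^\dagger)\rho_s]\dd s$ are continuous local martingales relative to $(\mathcal F^{\rm obs}_t)$. Their brackets $\langle W_P,W_{P'}\rangle_t=\delta_{PP'}t$ are inherited from $Y$, since a finite-variation drift does not change quadratic covariation. Lévy's characterization then makes $(W_P)$ a standard multidimensional Brownian motion in $\mathcal F^{\rm obs}$, and joint Gaussianity with identity covariance gives independence. The one point to check is that the drift is adapted to the observation filtration itself and not to a larger one. This holds because $\rho_t$ is a strong solution that is a functional of $(\psi_0,\xi,Y|_{[0,t]})$ and the predictable $H$, so $W_P$ is built entirely from public data. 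I would also verify that $W_P$ is independent of $\mathcal F^{\rm obs}_0=\sigma(\psi_0,\xi)$. This follows because $Z$ restricted to time $0$ equals $1$, so the law of $(\psi_0,\xi)$ is the same under $\Prob$ as under $\Prob_0$.

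I expect the main obstacle to be well-posedness, specifically the existence of a strong, pathwise-unique solution with general predictable feedback (possibly path-dependent and discontinuous in time). I would handle it by fixing the controller as a predictable process $H_t(\omega)$ rather than a state-feedback law. The linear SDE then has random bounded coefficients that are globally Lipschitz in $\tilde\rho$, so strong existence and uniqueness follow from the standard theory. Purity preservation, as stated before the lemma, then holds for each solution. For feedback given as a Borel function of the state, I would restrict to controls that are specified as predictable processes in the filtration generated by $Y$. This is exactly the contract of Eq.~(\ref{eq:controls}), and it avoids weak-solution issues.

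Finally, the second sentence of the lemma follows directly from measurability. The state $\rho_t$ is continuous and adapted, hence predictable. $H_t$ is predictable by assumption. Composing a Borel map with a predictable process $(\rho_t,H_t)$ gives a predictable process. As a result, every drift and diffusion coefficient in the radial reduction is a legitimate integrand against the $W_P$, which is what the later comparison arguments use.
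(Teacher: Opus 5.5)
Your argument is correct, but it takes a different route from the paper's. The paper works directly under the physical measure. It takes from quantum filtering theory (citing Bouten--van Handel--James) that the conditional-mean term in Eq.~(\ref{eq:S-record}) is the $\mathcal F^{\rm obs}_t$-compensator of the record. From this it concludes that the innovations are continuous local martingales, reads off $[W_P,W_Q]_t=\delta_{PQ}t$ from the independent output quadratures, and finishes with L\'evy's characterization. The predictability sentence is left unproved.

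You instead build the closed loop on a reference space where the record is an exogenous Brownian motion. You define the physical measure by the density $Z_T$ and obtain the innovation property from multidimensional Girsanov, which already yields a Brownian motion in $\mathcal F^{\rm obs}_t$, so the L\'evy step is only implicit. Your route supplies what the paper's leaves unstated:
\begin{itemize}
\item For an arbitrary record-predictable controller, the linear equation has bounded random Lipschitz coefficients. Strong well-posedness is therefore immediate, and there is no circularity between the filtration generated by $Y$ and the controlled state.
\item The compensator property becomes a consequence rather than an input.
\item $\E_{\Prob_0}[Z_T\mid\mathcal F_0]=Z_0=1$ shows that the Haar prior and the seed law are unchanged.
\item Your measurability argument (continuous adapted implies predictable, then Borel composition) actually proves the second sentence of the lemma.
\end{itemize}
The price is twofold. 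You must accept or cite the identification of the physical record law with $Z_T\,\dd\Prob_0$, which is the standard quantum reference-probability (Kallianpur--Striebel) result. You should also note that the linear equation preserves positivity with $\tr\tilde\rho_t>0$, so that $\rho_t=\tilde\rho_t/Z_t$ is defined.

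One small misattribution: independence of $W$ from $\mathcal F^{\rm obs}_0$ does not follow from $Z_0=1$. It follows because Girsanov makes $W$ a Brownian motion relative to a filtration that contains $\mathcal F^{\rm obs}_0$. The fact $Z_0=1$ only preserves the initial law.
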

\begin{proof}
The conditional-mean term in Eq.~(\ref{eq:S-record}) is the predictable compensator of the observed quadrature.
The innovations are therefore continuous $\mathcal F_t^{\rm obs}$-local martingales.
Independent output quadratures give
$[W_P,W_Q]_t=\delta_{PQ}t$.
The multidimensional L\'evy characterization then makes $(W_P)_{P\in\mathcal P_n^\circ}$ a standard Brownian motion in the augmented public filtration \cite{BoutenVanHandelJames2007}.
In particular its increments after time zero are independent of $\mathcal F_0^{\rm obs}$.
\end{proof}

For $U\ge0$, $\A_U$ is the set of $\mathcal F_t^{\rm obs}$-predictable Hermitian processes $H_t$ satisfying $\|H_t\|_{\rm op}\le U$ almost surely for Lebesgue-almost every $t$.
Monitoring remains active at its fixed strength throughout $[0,T]$, including the terminal interval; the model has no noise-free final correction slot.
The control acts through $H_t\dd t$; hence $A_t=\int_0^tH_s\dd s$ has bounded variation on the finite horizon.
The declared instrument includes the fixed preparation, measurement, and reset of each fresh ancilla used in its dilation.  The control class excludes Hamiltonians proportional to the contemporaneous increment $\dd Y_t$, additional system-reset or adaptive-ancilla channels, postselection, and hidden records.

For instrument label $\delta$, control $H\in\A_U$, and a fixed initial law $\mu_0$ independent of the later policy choice, write $\mu_T^{\delta,H}=\Law^{\delta,H}(\rho_T)$.
The ground metric and its Wasserstein lift are
\begin{align*}
 d_{\rm tr}(\rho,\sigma)&=\tfrac12\|\rho-\sigma\|_1,\\
 W_{1,\rm tr}(\mu,\nu)&=\inf_{\Pi\in\Gamma(\mu,\nu)}
 \int d_{\rm tr}(\rho,\sigma)\,\Pi(\dd\rho,\dd\sigma).
\end{align*}
The optimal peak cost is
\begin{equation}
 \C_\delta(\epsilon;\nu,\mu_0,T)
 =\inf\{U\ge0:\exists H\in\A_U,
 \ W_{1,\rm tr}(\mu_T^{\delta,H},\nu)\le\epsilon\}.
\label{eq:S-cost}
\end{equation}
For one specified controller define
\begin{equation}
 E_\delta^H(T;\nu):=W_{1,\rm tr}(\mu_T^{\delta,H},\nu).
\label{eq:S-policyerror}
\end{equation}
For the single-qubit uniform real-state target $\nu_R$ on $\mathbb{RP}^1$, define
\begin{equation}
 E_\delta^*(U,T):=\inf_{H\in\A_U}E_\delta^H(T;\nu_R).
\label{eq:S-finitevalue}
\end{equation}
For $N$ stored terminal states, $\widehat\mu_T=N^{-1}\sum_{i=1}^N\delta_{\rho_T^{(i)}}$ denotes the empirical law.
Neither $E_\delta^H$ nor a statistic of $\widehat\mu_T$ is the infimum in Eq.~(\ref{eq:S-cost}).
$E_\delta^*$ is the infimum over controllers; specified or learned controller errors are denoted separately.
The phrase ``all-feedback lower bound'' means that the inequality holds for every member of $\A_U$.
Equal record access below means that both controllers receive every public outcome at the same update times.
It does not mean equal state information: the information content of the record is one of the instrument properties being compared.

\subsection{Delivered object and verification}

One run delivers a classical--quantum pair: the terminal system and the public data
\begin{equation*}
 \mathcal R_T^{\rm pub}=(\psi_0,\xi,
 \{Y_{P,s}:P\in\mathcal P_n^\circ,0\le s\le T\},\delta,\mathcal P_H),
\end{equation*}
where $\mathcal P_H$ is the public feedback policy; if that policy is not supplied, the actual applied-control log must replace it.
The instrument calibration, measurement record, and applied control map
$\mathcal R_T^{\rm pub}$ to the conditional label $\rho_T$.
The law in Eq.~(\ref{eq:S-cost}) is the distribution of these record-conditioned labels over repeated runs.
A verifier can replay the controlled filter on all records.
Calibration of the reported labels must be tested across repeated runs, for example by record-dependent terminal measurements or by tomography within predeclared record bins followed by a conditional-likelihood test.
No protocol here assigns complete tomography to one system carrying a unique continuous record label.
Classical goodness-of-fit or optimal-transport estimators may then test the labelled empirical law against the announced target.
The record is therefore part of the delivered resource, rather than an unobserved mathematical device.

This distinction is operationally necessary here.
For the complex-Haar prior and the orthogonally invariant law $\nu_d$ on $\mathbb{RP}^{d-1}$,
\begin{equation*}
 \int\rho\,\mu_0(\dd\rho)=\int\rho\,\nu_d(\dd\rho)=I/d.
\end{equation*}
Consequently a receiver given only unlabelled, independently prepared terminal registers cannot distinguish these two ensembles from their one-copy average state.
No nonlinear conditional-state statistic is being presented as an ordinary observable without its instrument and record \cite{GaonaReyes2025UnravellingLimits}.
The free initial description and optional seed also mean that Eq.~(\ref{eq:S-cost}) measures monitored-ensemble control, not the production of irreducibly quantum randomness.

If $\nu$ is supported on a closed set $\M$, then every $\Pi\in\Gamma(\mu,\nu)$ obeys
\begin{equation*}
 \int d_{\rm tr}(\rho,\sigma)\Pi(\dd\rho,\dd\sigma)
 \ge\int d_{\rm tr}(\rho,\M)\mu(\dd\rho).
\end{equation*}
Taking the infimum gives
\begin{equation}
 W_{1,\rm tr}(\mu,\nu)\ge\E_\mu d_{\rm tr}(\rho,\M).
\label{eq:S-support}
\end{equation}

\section{Complete-Pauli multiqubit theorem on \texorpdfstring{$\mathbb{RP}^{d-1}$}{RP(d-1)}}
\label{sec:S-multiqubit}

The main theorem concerns $n\ge2$ system qubits; measurement ancillas are additional registers.
Use the common filter~(\ref{eq:S-filter-common}) with $d=2^n$ and $\gamma_P=\Gamma/d^2$.
Set
\begin{equation}
 L_P=\frac{\sqrt\Gamma}{d}P,
 \qquad \kappa=\frac{\Gamma}{2d}.
\label{eq:S-multi-L}
\end{equation}
Write $\mathcal S=\{P:P^{\mathsf T}=P\}$ and
$\mathcal A=\{P:P^{\mathsf T}=-P\}$.
The informational quadrature is used for $P\in\mathcal S$, while for
$P\in\mathcal A$ we use
\begin{equation}
 C_P=(\sqrt\delta-i\sqrt{1-\delta})L_P,
 \qquad 0\le\delta\le1.
\label{eq:S-multi-phase}
\end{equation}
The phase does not change $\D[L_P]$, and the Pauli twirl gives
\begin{equation}
 \sum_{P\in\mathcal P_n^\circ}\D[L_P]\rho
 =\Gamma\left(\frac{I}{d}\tr\rho-\rho\right)
\label{eq:S-multi-channel}
\end{equation}
for every $\delta$.
Thus the two endpoint instruments share one feedback-free depolarizing channel.

Here $C_P$ includes the coupling amplitude, unlike the dimensionless $C_{P,\delta}$ in Eq.~(\ref{eq:S-filter-common}).  Set $C_P=L_P$ on $\mathcal S$ and use Eq.~(\ref{eq:S-multi-phase}) on $\mathcal A$.
The public records and innovations are
\begin{align}
 \dd Y_{P,t}&=\tr[(C_P+C_P^\dagger)\rho_t]\dd t+\dd W_{P,t},\nonumber\\
 \mathcal F_t^{\rm obs,(d)}&=
 \sigma\{\psi_0,\xi,Y_{P,s}:P\in\mathcal P_n^\circ,0\le s\le t\}^{\rm aug}.
\label{eq:S-multi-record}
\end{align}
Under the physical unit-efficiency filter, the $d^2-1$ processes $W_P$ are independent standard Brownian motions in this augmented filtration.
Throughout this section, $\A_U$ means the Hermitian, $\mathcal F_t^{\rm obs,(d)}$-predictable controls with $\|H_t\|_{\rm op}\le U$; hence the multiqubit controller uses all and only the declared public records.

The instruments are channel matched, but they are not information matched.
For a local state perturbation $\rho_\theta=\rho+\theta X+o(\theta)$,
with $X=X^\dagger$ and $\tr X=0$, the drift of record $P$ and its
fixed-state, per-unit-time local classical Fisher quadratic form are
\begin{align}
 \mu_P(\rho)&=\frac{2\sqrt\Gamma}{d}\cos\phi_P\,\tr(P\rho),\nonumber\\
 \mathcal I_\rho[X]
 &=\frac{4\Gamma}{d^2}\sum_{P\ne I}\cos^2\phi_P\,[\tr(PX)]^2.
\label{eq:S-record-Fisher}
\end{align}
For the transpose split used here this becomes
\begin{equation}
 \mathcal I_{\rho,\delta}[X]
 =\frac{4\Gamma}{d^2}\left\{
 \sum_{P\in\mathcal S}[\tr(PX)]^2
 +\delta\sum_{P\in\mathcal A}[\tr(PX)]^2\right\}.
\label{eq:S-record-Fisher-delta}
\end{equation}
The orthogonality of Pauli strings shows that equality of the full quadratic
form for every $X$ forces the same weights and hence the same $\delta$ within
this one-parameter family.  Allowing opposite phase signs can preserve
Eq.~(\ref{eq:S-record-Fisher}) while reversing a tangent--normal cross
covariance, but it leaves the diagonal normal quadratic variation unchanged.
Accordingly, the theorem compares complete conditional instruments---record
information and backaction geometry jointly---and does not claim an
information-matched normal-rate separation.

The target is the full real-state family
\begin{equation}
 \M_d=\mathbb{RP}^{d-1}\subset\mathbb{CP}^{d-1},
 \qquad \nu_d=\text{its invariant probability law}.
\label{eq:S-multi-target}
\end{equation}
For a normalized pure state $\psi$, define
\begin{equation}
 z=\psi^{\mathsf T}\psi,\qquad
 q=\frac12\arccos|z|\in[0,\pi/4].
\label{eq:S-multi-q}
\end{equation}
After a global phase choice,
$\psi=\cos q\,e+i\sin q\,f$ with real orthonormal $e,f$.
The closest real ray is $[e]$ and
\begin{equation}
 d_{\rm tr}([\psi],\M_d)=\sin q.
\label{eq:S-multi-distance}
\end{equation}
For $n\ge2$, $\M_d$ has real dimension $d-1$, contains entangled states, and sits inside the full $2(d-1)$-dimensional space $\mathbb{CP}^{d-1}$ occupied by the prior below.

\begin{lemma}[Multiqubit radial reduction]
\label{lem:S-multi-radial}
Start from complex Haar measure on $\mathbb{CP}^{d-1}$ and use any public predictable Hamiltonian satisfying $\|H_t\|_{\rm op}\le U$.
On the interior, the complete filter induces
\begin{align}
 \dd q_t={}&[b_{d,\delta}(q_t)+v_t]\dd t
 +\sqrt{\kappa f_\delta(q_t)}\dd W_t,
 \qquad |v_t|\le U,\nonumber\\
 f_\delta(q)={}&\delta+(1-\delta)\sin^2(2q),\nonumber\\
 b_{d,\delta}(q)={}&\kappa\{\delta(d-2)\cot(2q)-\tan(2q)
 -(1-\delta)\sin(2q)\cos(2q)\}.
\label{eq:S-multi-radial}
\end{align}
The Haar radial density is
\begin{equation}
 p_{d,\rm H}(q)=2(d-1)\sin^{d-2}(2q)\cos(2q).
\label{eq:S-multi-Haar}
\end{equation}
The feedback
\begin{equation}
 H_N(\psi)=
 \frac{2U}{\sqrt{1-|z|^2}}
 \operatorname{Im}_{\rm entry}\!\left(
 \frac{\bar z}{|z|}\psi\psi^{\mathsf T}\right)
\label{eq:S-multi-HN}
\end{equation}
is smooth for $0<q<\pi/4$, obeys $\|H_N\|_{\rm op}=U$, and realizes $v_t=-U$.
\end{lemma}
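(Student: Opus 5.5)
The plan is to compute $\dd q_t$ by It\^o's formula applied to $|z_t|^2$ with $z=\psi^{\mathsf T}\psi$, starting from the pure-state form of the filter~(\ref{eq:S-filter-common}). For a pure state the filter reads
\[
\dd\psi=\Bigl[-iH\dd t+\sum_P\bigl(-\tfrac12\gamma_P\psi+\text{mean-field corrections}\bigr)\dd t\Bigr]+\sum_P\bigl(C_P-\langle C_P\rangle_{\rm r}\bigr)\psi\,\dd W_P .
\]
Here the nonlinear normalisation terms come from $\Hc_{C}$. Since $z$ is invariant under global phase, I will fix the gauge $\psi=\cos q\,e+i\sin q\,f$ at the instant of evaluation. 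By $O(d)$ covariance of the instrument (the set $\{P\}$ splits into $\mathcal S,\mathcal A$ stably under conjugation by real orthogonal matrices up to the twirl), I may further take $e=e_1$, $f=e_2$. This reduces every sum over Pauli strings to sums of the matrix-element quantities $\sum_P (P)_{jk}(P)_{lm}$, which are given by the Pauli completeness relation $\sum_{P}P\otimes P = d\,\mathrm{SWAP}-I\otimes I$. Its transpose-symmetric and antisymmetric halves are
\[
\tfrac12\bigl(\sum_P P\otimes P\pm\sum_P P\otimes P^{\mathsf T}\bigr),
\qquad \sum_P P\otimes P^{\mathsf T}=d\,|\Phi\rangle\langle\Phi|\cdot d^{-1}\cdots .
\]
These are the partial-transpose (maximally entangled projector) identity, so both $\sum_{\mathcal S}$ and $\sum_{\mathcal A}$ of $P\otimes P$ are explicit combinations of SWAP, the unnormalised $|\Phi\rangle\langle\Phi|$, and $I$.

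\textbf{Step 1: diffusion and drift of $z$.} The noise part of $\dd z$ is $2\psi^{\mathsf T}(C_P-\langle\cdot\rangle)\psi\,\dd W_P$, and the It\^o drift collects $\psi^{\mathsf T}C_P^{\mathsf T}C_P\psi$-type terms plus cross terms. Using the identities above, the quadratic variation of $|z|$ and the deterministic drift can be written in terms of $|z|=\cos 2q$ alone. The phase $\sqrt\delta-i\sqrt{1-\delta}$ on $\mathcal A$ enters as follows. The real part produces record-driven noise, which is normal to $\M_d$. The imaginary part produces noise with $P^{\mathsf T}=-P$ and coefficient $i$, making $iP$ real antisymmetric, i.e.\ a real rotation that preserves reality and is therefore tangent. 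This is the source of $f_\delta(0)=\delta$ and of the weight $(1-\delta)\sin^2 2q$.

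\textbf{Step 2: change of variables.} Next I will apply It\^o to $q=\frac12\arccos|z|$. Converting $\dd|z|$ via $\dd q=-\dd|z|/(2\sin 2q)$ gives the diffusion coefficient $\kappa f_\delta$. The second-order term produces the $\cot(2q)$ and $\tan(2q)$ pieces of $b_{d,\delta}$. The $(d-2)\cot 2q$ factor arises from the $d-2$ real directions orthogonal to $e,f$ that feed normal fluctuations, and it is multiplied by $\delta$ since only the normal quadrature contributes there. I expect this bookkeeping, specifically getting exact constants in $b_{d,\delta}$ from the $\mathcal S/\mathcal A$ split, to be the main obstacle. To control it I will first check the case $d=2$ (where the $\cot$ term must vanish) and the case $\delta=1$, where the result must match the radial generator of isotropic diffusion on $\CP^{d-1}$.

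\textbf{Step 3: Haar density and feedback.} For the Haar density, I will compute the pushforward of Fubini--Study volume under $q$: the level set at $q$ is a bundle whose fibre volume scales like $\sin^{d-2}(2q)\cos(2q)$. Equivalently, I will verify that $p_{d,\rm H}$ is the normalised stationary density of the $\delta=1$ uncontrolled generator, since Haar is invariant under depolarising isotropic monitoring. The normalisation follows from integrating $\sin^{d-2}(2q)\cos(2q)$ over $[0,\pi/4]$.

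For the feedback, the Hamiltonian contribution is $\dd z=-2i\,\psi^{\mathsf T}H\psi\,\dd t$ (using $H^{\mathsf T}=\bar H$ only on the relevant part). In the gauge above, $H_N$ reduces to $U(ef^{\mathsf T}+fe^{\mathsf T})$. This is real symmetric with eigenvalues $\pm U$ and otherwise $0$, so $\|H_N\|_{\rm op}=U$. It generates a rotation in the $(e,if)$ plane that decreases $q$ at unit Fubini--Study speed $U$, so $v=-U$. Smoothness on $0<q<\pi/4$ follows from $|z|\in(0,1)$, which makes $\bar z/|z|$ and $(1-|z|^2)^{-1/2}$ smooth. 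Finally, $|v_t|\le U$ for a general control follows from $|\dot q|\le\Delta_\psi H\le\|H\|_{\rm op}$, because $q$ is $1$-Lipschitz in the Fubini--Study metric.
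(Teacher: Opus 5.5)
Your proposal is correct, and for the radial equation and the feedback it follows the paper's own argument. That argument is It\^o's formula for $z=\psi^{\mathsf T}\psi$, with every Pauli sum evaluated through the split completeness relations $\sum_{\mathcal S\cup\{I\}}P_{ij}P_{kl}=\tfrac d2(\delta_{il}\delta_{jk}+\delta_{ik}\delta_{jl})$ and $\sum_{\mathcal A}P_{ij}P_{kl}=\tfrac d2(\delta_{il}\delta_{jk}-\delta_{ik}\delta_{jl})$. It then passes to $s=|z|$ and $q=\tfrac12\arccos s$, writes $H_N$ in the real frame as $U(ef^{\mathsf T}+fe^{\mathsf T})$, and bounds $|v_t|\le U$ by the Fubini--Study speed. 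Your partial-transpose line should read $\sum_{P\neq I}P\otimes P^{\mathsf T}=d\,|\Phi\rangle\langle\Phi|-I\otimes I$ with unnormalized $|\Phi\rangle=\sum_i|ii\rangle$; half the sum and difference with $d\,\mathrm{SWAP}-I\otimes I$ give exactly those relations. Differentiating the gauge-invariant $|z|^2=\tr(\rho\rho^{\mathsf T})$ is a harmless and slightly cleaner variant: the individual moments of $z$ depend on the global-phase convention of the pure-state equation, while $\dd|z|$ does not. The paper handles the same issue by differentiating in fixed coordinates before choosing the phase.

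The genuine difference is the Haar density. The paper obtains it from a $2\times2$ real Wishart reduction of the real and imaginary Gaussian parts of a Haar vector, with no dynamics involved. Your tube-volume route is equally dynamics-free: along $\cos q\,e+i\sin q\,f$, the in-plane rotation of $(e,f)$ has horizontal length $\cos 2q$. Each of the $d-2$ further real directions contributes an orthogonal pair of lengths $\cos q$ and $\sin q$.

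Your stationarity route needs two inputs you did not state. First, it needs $U(d)$-covariance of the whole $\delta=1$ instrument, not merely of its depolarizing average channel. For $\delta<1$ the uncontrolled zero-flux density acquires the factor $f_\delta^{-(d+2)/2}$, so Haar is not invariant. Second, it needs uniqueness of the radial invariant law, which holds because both endpoints are inaccessible for $d\ge4$ and the speed measure is finite. Only then does $\exp(\int 2b_{d,1}/\kappa)\propto\sin^{d-2}(2q)\cos(2q)$ identify the Haar marginal. Even so, this route reuses the radial coefficients, so it is a consistency check rather than an independent derivation.

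Finally, one heuristic in Step~1 is misattributed. The $\sqrt{1-\delta}$ component of an $\mathcal A$ readout generates a real rotation, and because $\psi^{\mathsf T}C\psi=0$ for antisymmetric $C$, it adds nothing to the noise of $z$ at any $q$. In $f_\delta=\sin^2(2q)+\delta\cos^2(2q)$, the first term comes from the symmetric informational channels and the second from the informational part of $\mathcal A$. The completeness computation produces this automatically, but it matters for the bookkeeping you flagged as the main obstacle.
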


\begin{proof}
The identity may be added to the Pauli basis because its conditional and unconditional fields vanish.
For Hilbert--Schmidt orthogonal bases of the real symmetric and purely imaginary antisymmetric subspaces,
\begin{align}
 \sum_{P\in\mathcal S}P_{ij}P_{kl}
 &=\frac d2(\delta_{il}\delta_{jk}+\delta_{ik}\delta_{jl}),\nonumber\\
 \sum_{P\in\mathcal A}P_{ij}P_{kl}
 &=\frac d2(\delta_{il}\delta_{jk}-\delta_{ik}\delta_{jl}).
\label{eq:S-multi-completeness}
\end{align}
Apply It\^o's formula first to $z=\psi^{\mathsf T}\psi$ in fixed complex coordinates and only then choose the pointwise phase with $z=|z|$.
The drift and quadratic covariations are
\begin{align}
 \operatorname{drift}(z)&=-2\kappa(d-1)
 [\delta-i\sqrt{\delta(1-\delta)}]|z|,\nonumber\\
 \frac{\dd\langle\operatorname{Re}z\rangle}{\dd t}
 &=4\kappa(1-|z|^2)[1-(1-\delta)|z|^2],\nonumber\\
 \frac{\dd\langle\operatorname{Im}z\rangle}{\dd t}
 &=4\kappa(1-|z|^2),\qquad
 \frac{\dd\langle\operatorname{Re}z,\operatorname{Im}z\rangle}{\dd t}=0.
\label{eq:S-multi-zmoments}
\end{align}
The transformations $s=|z|$ and $q=\arccos(s)/2$ give Eq.~(\ref{eq:S-multi-radial}).
The Hamiltonian part has Fubini--Study speed at most $\|H_t\|_{\rm op}$, hence $|v_t|\le U$.
In the displayed real frame, Eq.~(\ref{eq:S-multi-HN}) becomes
$U(ef^{\mathsf T}+fe^{\mathsf T})$ and direct differentiation gives $\dot q=-U$.
Finally, a $2\times2$ real Wishart reduction of the real and imaginary Gaussian parts of a Haar vector gives Eq.~(\ref{eq:S-multi-Haar}).
\end{proof}

For $0<\delta\le1$, define
\begin{equation}
 J_\delta(q)=\int_0^q\frac{\dd r}{f_\delta(r)}
 =\frac1{2\sqrt\delta}
 \arctan\frac{\tan(2q)}{\sqrt\delta}
\label{eq:S-multi-J}
\end{equation}
and the normalized density
\begin{equation}
 \pi_{d,\delta,U}(q)\propto
 \sin^{d-2}(2q)\cos(2q)f_\delta(q)^{-(d+2)/2}
 \exp[-2UJ_\delta(q)/\kappa].
\label{eq:S-multi-pi}
\end{equation}
Let
\begin{equation}
 L_{d,\delta}(U)=\int_0^{\pi/4}\sin q\,
 \pi_{d,\delta,U}(q)\dd q.
\label{eq:S-multi-Lbound}
\end{equation}

\begin{lemma}[Uniform truncated boundary-layer moments]
\label{lem:S-multi-truncated-moments}
Fix $d\ge2$, let $p\in\{1,2\}$, and set
$A_d=\max\{1,d-2\}$, $\ell=\kappa\delta/U$,
$r=U/(\kappa\sqrt\delta)$, and $q_*=A_d\ell$.
There are finite constants $r_{d,p}$ and $B_{d,p}$, depending only on
$d$ and $p$, such that, whenever $0<\delta\le1$, $r\ge r_{d,p}$,
and $q_*\le\pi/4$,
\begin{equation}
 \int_{q_*}^{\pi/4}\sin^p q\,
 \pi_{d,\delta,U}(\dd q\mid q\ge q_*)
 \le B_{d,p}\ell^p.
\label{eq:S-multi-truncated-moments}
\end{equation}
\end{lemma}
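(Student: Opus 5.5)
Write the unnormalized density on $[q_*,\pi/4]$ as $g(q)=\sin^{d-2}(2q)\cos(2q)f_\delta(q)^{-(d+2)/2}e^{-2UJ_\delta(q)/\kappa}$. The plan is to bound the conditional moment $\int_{q_*}^{\pi/4}\sin^p q\,g\,\dd q\big/\int_{q_*}^{\pi/4}g\,\dd q$ by a Laplace-type comparison. The governing observation is that the exponent has derivative
\begin{equation*}
\frac{\dd}{\dd q}\Bigl[-\frac{2UJ_\delta}{\kappa}\Bigr]=-\frac{2U}{\kappa f_\delta(q)} .
\end{equation*}
For $q\lesssim\sqrt\delta$ one has $f_\delta\asymp\delta$, so this derivative is $\asymp 1/\ell$. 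Beyond that scale $f_\delta$ grows like $4q^2$, and the decay weakens. We therefore split $[q_*,\pi/4]$ at $q_1=c\sqrt\delta$, with a small absolute constant $c$. On $[q_*,q_1]$ we have $\delta\le f_\delta\le(1+4c^2)\delta$. Since $r\ge r_{d,p}$ means $\ell=\sqrt\delta/r\ll\sqrt\delta$, this interval contains many boundary-layer widths.

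\textbf{Inner region.} On $[q_*,q_1]$ we use $\sin(2q)\asymp q$, $\cos(2q)\asymp1$, and the fact that $f_\delta^{-(d+2)/2}$ is comparable to the constant $\delta^{-(d+2)/2}$. The exponential factor lies between $e^{-2(q-q_*)/\ell}$ and $e^{-2(q-q_*)/((1+4c^2)\ell)}$, up to a common constant factor $e^{-2UJ_\delta(q_*)/\kappa}$. Hence $g$ is sandwiched between constant multiples of $q^{d-2}e^{-a(q-q_*)/\ell}$. This holds for $d=2$ as well, where $q^0=1$ and $A_d=1$.

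Substituting $q=q_*+\ell u$ turns the numerator and denominator into integrals of $(A_d+u)^{d-2}e^{-au}\,\ell^{d-1}$, times $\ell^p(A_d+u)^p$ for the numerator. The ratio is therefore at most $C_{d,p}\ell^p$. The choice $q_*=A_d\ell$ is what makes this uniform: the polynomial weight $(A_d+u)^{d-2}$ is comparable to $A_d^{d-2}$ over $u\in[0,1]$, which gives a lower bound on the denominator of order $A_d^{d-2}\ell^{d-1}$ times the common prefactor.

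\textbf{Outer region.} On $[q_1,\pi/4]$ it suffices to bound $\int_{q_1}^{\pi/4}g$ by $\ell^{p}$ times the denominator lower bound. Since $J_\delta$ is increasing, $J_\delta(q)-J_\delta(q_*)\ge J_\delta(q_1)-J_\delta(q_*)\ge (q_1-q_*)/((1+4c^2)\delta)$. The exponential factor is then at most $e^{-2U\cdot c'/(\kappa\sqrt\delta)}=e^{-2c' r}$ relative to the prefactor.

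The remaining factors are crude polynomials. Using $f_\delta\ge\delta$ gives $f_\delta^{-(d+2)/2}\le\delta^{-(d+2)/2}$, and $\sin^{d-2}(2q)\cos(2q)\le1$. The outer integral is therefore at most $\delta^{-(d+2)/2}e^{-2c'r}$ times the prefactor. The denominator is at least $c_d\,\delta^{-(d+2)/2}(A_d\ell)^{d-2}\ell$.

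The ratio is thus bounded by $C_d\,\ell^{-(d-1)}e^{-2c'r}$, which must in turn be at most $\ell^p$. Writing $\ell=\sqrt\delta/r\le 1/r$ does not directly help, because $\ell$ could be tiny through $\delta$ rather than through $r$. This is the main obstacle. The bound $f_\delta^{-(d+2)/2}\le\delta^{-(d+2)/2}$ is too lossy on the outer region, where in fact $f_\delta\gtrsim q^2$. The plan there is to use $f_\delta(q)\ge\max(\delta,(1-\delta)\sin^22q)$ together with $\sin^{d-2}(2q)\lesssim q^{d-2}$, so the outer integrand is bounded by $q^{d-2}(\delta+q^2)^{-(d+2)/2}$.

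Integrating this from $q_1\asymp\sqrt\delta$ gives $O(\delta^{-(d+1)/2})$ times the exponential. Against the denominator $\asymp\delta^{-(d+2)/2}\ell^{d-1}$, the ratio becomes $\sqrt\delta\,\ell^{-(d-1)}e^{-2c'r}=\delta^{1-d/2}\, r^{d-1} e^{-2c'r}$. This still carries a power of $\delta$ when $d>2$.

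To remove it, I would not freeze the exponential at $q_1$. Instead I would use that $J_\delta(q)\ge\min\{q/((1+4c^2)\delta),\,c''/\sqrt\delta\}$, which decays in $q$ on the scale $\ell$ until $q\sim\sqrt\delta$. Splitting the outer region dyadically in $q/\sqrt\delta$ then gives each piece a factor $e^{-c r}$ times a power of $r$ alone, because the $\delta$ powers now balance exactly in the variable $q/\sqrt\delta$. Choosing $r_{d,p}$ so that $r^{d-1+p}e^{-cr}\le1$ gives an outer contribution at most $\ell^p$.

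Adding the inner and outer contributions yields the stated bound with $B_{d,p}=C_{d,p}+1$. The hypothesis $q_*\le\pi/4$ guarantees that the conditioning interval is nonempty, and if $q_1>\pi/4$ only the inner estimate is needed.
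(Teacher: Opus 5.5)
There is a genuine gap in your outer-region estimate. You reduce the outer region to bounding $\int_{q_1}^{\pi/4}g$ by $\ell^p$ times the denominator. That reduction is false uniformly in $\delta$, because the truncated mass beyond $q_1=c\sqrt\delta$ is not small in $\delta$ at all. Use the paper's exact substitution $\theta=\arctan[\tan(2q)/\sqrt\delta]$. The weight becomes $\sin^{d-2}\theta\cos\theta\sqrt{\cos^2\theta+\delta\sin^2\theta}\,e^{-r\theta}\dd\theta$, and the region $q\ge c\sqrt\delta$ becomes $\theta\ge\theta_1(\delta)$ with $\theta_1(\delta)\to\arctan(2c)$. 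At fixed $r$, the outer mass fraction therefore tends to a positive constant as $\delta\to0$, whereas $\ell^p=\delta^{p/2}r^{-p}\to0$.

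So once $\sin^p q$ is replaced by $1$, no splitting of the outer region can produce the needed factor $\delta^{p/2}$. Your closing step turns a $\delta$-free ratio into ``at most $\ell^p$'' via $r^{d-1+p}e^{-cr}\le1$. That only yields $r^{-p}$, which exceeds $\ell^p$ by the factor $\delta^{-p/2}$. This is exactly the confusion between smallness in $r$ and smallness in $\delta$ that you flagged yourself.

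Separately, the residual $\delta$-power you diagnosed comes from an arithmetic slip. With $q=\sqrt\delta\,x$,
\[
 \int_{c\sqrt\delta}^{\infty}q^{d-2}(\delta+q^2)^{-(d+2)/2}\dd q
 =\delta^{-3/2}\int_c^\infty x^{d-2}(1+x^2)^{-(d+2)/2}\dd x ,
\]
not $O(\delta^{-(d+1)/2})$. The denominator is $\asymp\delta^{-(d+2)/2}\ell^{d-1}=\delta^{-3/2}r^{-(d-1)}$. So the $\delta$-powers already balance with the exponential frozen at $q_1$, and the dyadic decomposition is unnecessary.

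The repair is to keep $\sin^p q\le q^p$ in the outer integrand. Then
\[
 \int_{c\sqrt\delta}^{\pi/4}q^{d-2+p}(\delta+q^2)^{-(d+2)/2}\dd q\le C_{d,p}\,\delta^{(p-3)/2} .
\]
This is finite because the scaled integrand decays like $x^{p-4}$ and $p\le2$. Combined with the frozen factor $e^{-2c'r}$, the outer contribution is at most $C\ell^p r^{d-1+p}e^{-2c'r}$, which is what you need.

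Your inner-region Laplace comparison is correct and corresponds to the paper's normalizer bound on $[\theta_*,\theta_*+r^{-1}]$. The paper obtains the same factor $\delta^{p/2}$ from $\sin q_\delta(\theta)\le\sqrt\delta\,\theta$ on $[\theta_*,\pi/4]$ and $\sin q_\delta(\theta)\le C\min\{1,\sqrt\delta/\cos\theta\}$ on $[\pi/4,\pi/2]$. Its exact change of variables is the clean version of your rescaling $q=\sqrt\delta\,x$, and it makes the cancellation of $\delta$-powers automatic.
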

\begin{proof}
Use the exact change of variables
\begin{equation*}
 \theta=\arctan\!\frac{\tan(2q)}{\sqrt\delta},\qquad
 q_\delta(\theta)=\frac12\arctan(\sqrt\delta\tan\theta),
\end{equation*}
with the continuous endpoint value $q_\delta(\pi/2)=\pi/4$.
Writing $D_\delta(\theta)=\cos^2\theta+\delta\sin^2\theta$, the
unnormalized measure in Eq.~(\ref{eq:S-multi-pi}) becomes, up to a
$q$-independent factor,
\begin{equation}
 g_{d,\delta}(\theta)e^{-r\theta}\dd\theta,\qquad
 g_{d,\delta}=\sin^{d-2}\theta\cos\theta\sqrt{D_\delta(\theta)}.
\label{eq:S-multi-theta-density}
\end{equation}
Let $\theta_*$ be the image of $q_*$.  For all sufficiently large
$r$ (depending only on $d$), elementary bounds for $\tan$ and
$\arctan$ give
$A_d/r\le\theta_*\le4A_d/r$ and
$\theta_*+r^{-1}\le\pi/4$.  Therefore the truncated normalizer obeys
\begin{equation}
 \int_{\theta_*}^{\pi/2}g_{d,\delta}(\theta)e^{-r\theta}\dd\theta
 \ge c_d r^{-(d-1)},
\label{eq:S-multi-truncated-normalizer}
\end{equation}
because on $[\theta_*,\theta_*+r^{-1}]$ one has
$g_{d,\delta}(\theta)\ge c_d'\theta^{d-2}$ and $r\theta=O_d(1)$.

On $[\theta_*,\pi/4]$, $\sin q_\delta(\theta)\le
q_\delta(\theta)\le\sqrt\delta\,\theta$ and
$g_{d,\delta}(\theta)\le\theta^{d-2}$.  Hence its $p$th-moment
numerator is at most
$C_{d,p}\delta^{p/2}r^{-(d-1+p)}$.
On the remaining interval $\theta\in[\pi/4,\pi/2]$ put
$s=\cos\theta$.  Since $\sin\theta\ge1/\sqrt2$ and
$|\dd\theta|=|\dd s|/\sin\theta$, the bounds
\begin{equation*}
 \sin q_\delta(\theta)\le
 C\min\{1,\sqrt\delta/s\},\qquad
 g_{d,\delta}(\theta)|\dd\theta|
 \le C s\sqrt{s^2+\delta}|\dd s|
\end{equation*}
show, after splitting at $s=\sqrt\delta$, that the outer numerator is
at most $C_{d,p}\delta^{p/2}e^{-\pi r/4}$.
Dividing by Eq.~(\ref{eq:S-multi-truncated-normalizer}) and absorbing
$r^{d-1+p}e^{-\pi r/4}$ into the constant proves
Eq.~(\ref{eq:S-multi-truncated-moments}) uniformly in
$0<\delta\le1$.
\end{proof}

For the uniform target define
\begin{equation}
 E_{d,\delta}^*(U,T)=\inf_{H\in\A_U}
 W_{1,\mathrm{tr}}(\Law^{\delta,H}[\rho_T],\nu_d).
\label{eq:S-multi-value}
\end{equation}
Let $R_0$ have density~(\ref{eq:S-multi-Haar}) and, on the same scalar state space, set
\begin{equation}
 \dd R_t=[b_{d,\delta}(R_t)-U]\dd t
 +\sqrt{\kappa f_\delta(R_t)}\dd W_t.
\label{eq:S-multi-reference}
\end{equation}
For $\delta>0$, zero is reflected when $d=2$ and is inaccessible when $d\ge3$; the upper endpoint is inaccessible for every $d\ge2$.  For $\delta=0$, solve to the first hit of zero and keep $R$ there afterward.

\begin{theorem}[Finite-time complete-Pauli separation]
\label{thm:S-multiqubit}
Fix $n\ge2$, $d=2^n$, $T,\Gamma>0$, the complex-Haar prior, the unit-efficiency public record, and the peak class $\A_U$.  For every $\delta\in[0,1]$ and $U\ge0$,
\begin{equation}
 E_{d,\delta}^*(U,T)=\E\sin R_T.
\label{eq:S-multi-finite-exact}
\end{equation}
For every $0<\delta\le1$, every target law $\nu$ supported on $\M_d$, and every $H\in\A_U$,
\begin{equation}
 \Wone(\Law^{\delta,H}[\rho_T],\nu)
 \ge L_{d,\delta}(U).
\label{eq:S-multi-lower}
\end{equation}
Uniformly for $r=U/(\kappa\sqrt\delta)\to\infty$,
\begin{equation}
 L_{d,\delta}(U)=
 \frac{\kappa\delta(d-1)}{2U}[1+O_d(r^{-2})].
\label{eq:S-multi-lower-asymp}
\end{equation}
Furthermore $E_{d,\delta}^*(U,T)\to L_{d,\delta}(U)$ as $T\to\infty$.

For the uniform target, let $A_d=\max\{1,d-2\}$.  For each fixed $0<t_c<T$ there are finite $r_d,B_d$, depending only on $d$, such that the public controller~(\ref{eq:S-multi-HN}) obeys
\begin{align}
 \Wone(\Law^{\delta,H_N}[\rho_T],\nu_d)
 \le{}&\exp\!\left[-\frac{(Ut_c/2-\pi/4)_+^2}{2\kappa t_c}\right]
 +B_d\frac{\kappa\delta}{U}
\label{eq:S-multi-upper}
\end{align}
when $r\ge r_d$ and $U\ge4A_d\kappa\delta/\pi$.  Consequently, for fixed $d,T,\kappa$ and $\delta>0$,
\begin{equation}
 \C_{d,\delta}(\epsilon;\nu_d)
 =\Theta_{d,T,\kappa,\delta}(1/\epsilon).
\label{eq:S-multi-fixedcost}
\end{equation}
At $\delta=0$, stopping $H_N$ at the first target hit gives
\begin{equation}
 \C_{d,0}(\epsilon;\nu_d)
 \le\frac{\pi}{4T}+\sqrt{\frac{2\kappa}{T}\log\frac1\epsilon}.
\label{eq:S-multi-tangent}
\end{equation}
No matching lower asymptotic is asserted for Eq.~(\ref{eq:S-multi-tangent}).
\end{theorem}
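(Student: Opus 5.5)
The proof splits into a lower bound valid for every controller, a matching construction, and asymptotic consequences.

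\emph{Lower bound for every controller.} Lemma~\ref{lem:S-multi-radial} gives, for any $H\in\A_U$, the radial semimartingale $q_t$ with drift $b_{d,\delta}(q_t)+v_t$, $v_t\ge-U$, and the same diffusion coefficient $\sqrt{\kappa f_\delta}$ as the reference~(\ref{eq:S-multi-reference}). I would realize $R$ in the controller's own public filtration, driven by the same scalar innovation $W$ and started from $R_0=q_0$, which has the Haar radial law~(\ref{eq:S-multi-Haar}). Pathwise comparison then gives $R_t\le q_t$ up to $T$. Since $b_{d,\delta}$ and $\sqrt{f_\delta}$ are only locally Lipschitz and blow up at the endpoints, I would prove the comparison by localization: stop at exits from $[\varepsilon,\pi/4-\varepsilon]$, apply Yamada--Watanabe/Le~Gall local-time arguments to $(R-q)^+$, and let $\varepsilon\to0$. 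This uses the endpoint classification stated before the theorem: inaccessibility for $\delta>0$, and absorption at $0$ for $\delta=0$, which only helps since $R\ge0$ and $q\ge0$. Combining with Eq.~(\ref{eq:S-support}) and~(\ref{eq:S-multi-distance}) gives $\Wone\ge\E\sin q_T\ge\E\sin R_T$, so $E^*\ge\E\sin R_T$. For the stationary bound~(\ref{eq:S-multi-lower}), I would write the Fokker--Planck equation of $R$ and check that $\pi_{d,\delta,U}$ in~(\ref{eq:S-multi-pi}) is its zero-flux solution via the $J_\delta$ scale. I would then start a stationary copy $\tilde R_0\sim\pi$ with $\tilde R_0\le R_0$; this is possible because Haar stochastically dominates $\pi$, which follows from the monotone likelihood ratio in $U$ (at $U=0$ the extra factor $f_\delta^{-(d+2)/2}$ has to be handled). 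Monotone coupling then gives $\E\sin R_T\ge L_{d,\delta}(U)$. If that dominance fails, I would instead start the comparison process from $0$ and use a $T\to\infty$ monotonicity argument.

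\emph{Attainment and the exact value.} Under $H_N$ with $v\equiv-U$, $q_t$ solves~(\ref{eq:S-multi-reference}) exactly, so $q_T\overset{d}{=}R_T$. At $\delta=0$, $H_N$ is switched off after the first hit; tangent monitoring has $f_0(0)=0$ and keeps $q=0$. The prior, the filter, and $H_N$ are $O(d)$-covariant, so the law of the nearest real ray $[e_T]$ is $O(d)$-invariant on $\RP^{d-1}$ and hence equals $\nu_d$. Coupling $\rho_T$ with its projection costs $\E\sin q_T$, which proves~(\ref{eq:S-multi-finite-exact}). The $T\to\infty$ limit follows from ergodicity of the one-dimensional diffusion, since both endpoints are nonattracting and $\sin$ is bounded.

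\emph{Asymptotics and costs.} Substituting $\theta$ as in Lemma~\ref{lem:S-multi-truncated-moments}, $L$ becomes a Laplace-type integral of $g_{d,\delta}(\theta)e^{-r\theta}$. Expanding $\sin q_\delta(\theta)\approx\sqrt\delta\theta/2$ and $g\approx\theta^{d-2}$ gives $\E\theta=(d-1)/r$, hence $L\approx\sqrt\delta(d-1)/(2r)=\kappa\delta(d-1)/(2U)$. The next-order corrections are $O(\theta^2)=O(r^{-2})$, and the outer region is exponentially small. For the upper bound~(\ref{eq:S-multi-upper}), I would split on the capture time $\tau_*$ to level $q_*$. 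The event $\{\tau_*>t_c\}$ requires the martingale part (quadratic variation at most $\kappa t_c$) to exceed $Ut_c/2-\pi/4$ after absorbing the bounded part of the drift, which gives the Gaussian tail. After capture, I would compare with the stationary law conditioned on $q\ge q_*$, using strong Markov and monotone coupling to bound the mean by Lemma~\ref{lem:S-multi-truncated-moments} with $p=1$, giving $B_d\kappa\delta/U$. Inverting the two bounds gives~(\ref{eq:S-multi-fixedcost}). At $\delta=0$, the error is at most $\Prob(\text{no hit by }T)$. Since $b_{d,0}\le0$ and the initial distance is at most $\pi/4$, the tail bound $\exp[-(UT-\pi/4)^2/(2\kappa T)]\le\epsilon$ gives~(\ref{eq:S-multi-tangent}).

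\emph{Main obstacle.} I expect the main obstacle to be the rigorous comparison with singular coefficients and arbitrary predictable $v_t$, together with the post-capture maintenance bound uniform in the remaining time. For the latter I would first try bounding the capture-level start by $q_*$ and using stationarity of the conditioned reflected process at $q_*$.
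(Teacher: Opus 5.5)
Your proposal is correct and follows the paper's proof essentially step for step. The shared ingredients are: the localized Tanaka--Gr\"onwall comparison driven by each controller's own projected radial innovation; attainment by $H_N$ combined with the $O(d)$-equivariant projection coupling; a quantile coupling from the Haar start down to the zero-flux law $\pi_{d,\delta,U}$ for the certificate; and capture at $q_*$ followed by comparison with the diffusion reflected at $q_*$.

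The stochastic dominance you hedge on does hold for every $U\ge0$, including $U=0$. The relevant likelihood ratio is in $q$, not $U$: $\pi_{d,\delta,U}/p_{d,\rm H}\propto f_\delta^{-(d+2)/2}e^{-2UJ_\delta/\kappa}$ is nonincreasing in $q$ because $f_\delta$ and $J_\delta$ both increase. This matters, because your fallback of starting the reference at $0$ would not give the finite-$T$ bound: $\E\sin R^0_T$ increases to $L_{d,\delta}(U)$ from below, so it bounds $\E\sin R_T$ by something at most $L_{d,\delta}(U)$. Also, the $Ut_c/2$ in the capture tail comes from $b_{d,\delta}(q)\le\kappa\delta(d-2)/(2q)\le U/2$ on $q\ge q_*$, which is exactly why $q_*=A_d\kappa\delta/U$ is chosen.
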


\begin{proof}
For $\delta>0$, the scale density of the uncontrolled radial diffusion is, up to a positive constant,
\begin{equation}
 \mathfrak s'_0(q)=
 \frac{f_\delta(q)^{d/2}}{\sin^{d-2}(2q)\cos(2q)}.
\label{eq:S-multi-scale}
\end{equation}
It behaves as $q^{-(d-2)}$ at zero and as $(\pi/4-q)^{-1}$ at the upper endpoint.  Thus zero is regular and reflected for $d=2$, logarithmically inaccessible for $d=3$, and power-law inaccessible for $d\ge4$; the upper endpoint is inaccessible in all cases.  A bounded radial drift changes the scale density only by a locally bounded positive factor, equivalently by a finite-horizon Girsanov density when $\delta>0$.

Fix an arbitrary $H\in\A_U$.  Project its actual public innovations onto the normalized radial martingale direction; L\'evy's characterization gives a Brownian motion in that controller's public filtration.  Drive Eq.~(\ref{eq:S-multi-reference}) from the same random value $R_0=q_0$ with this Brownian motion.  Put $D_t=R_t-q_t$ and $a(q)=\sqrt{\kappa f_\delta(q)}$.  Let $\tau_m$ stop either process on leaving $[1/m,\pi/4-1/m]$.  On this interval $b_{d,\delta}$ and $a$ are Lipschitz, and the occupation-density formula gives $L^0_{t\wedge\tau_m}(D)=0$: the quadratic-variation density of $D$ is at most a constant times $D^2$.  Tanaka's formula and $v_t\ge-U$ therefore give
\begin{align*}
 D_{t\wedge\tau_m}^+
 &\le C_m\int_0^{t\wedge\tau_m}D_s^+\dd s+M_t^{(m)},\\
 M_t^{(m)}
 &=\int_0^{t\wedge\tau_m}\mathbf1_{\{D_s>0\}}
       [a(R_s)-a(q_s)]\dd W_s.
\end{align*}
The stopped stochastic integral is square integrable and has zero expectation; sharing a Brownian motion does not cancel its state-dependent integrand.  Thus
\begin{equation*}
 \E D_{t\wedge\tau_m}^+
 \le C_m\int_0^t\E D_{s\wedge\tau_m}^+\dd s.
\end{equation*}
Gr\"onwall and $D_0=0$ give $\E D_{t\wedge\tau_m}^+=0$.  Applying this at rational times and using continuity gives the stopped pathwise order.  Endpoint inaccessibility removes the localization for $\delta>0$.  For $\delta=0$ the order holds to the first lower-boundary hit; if $q$ hits first, continuity forces $R$ to hit simultaneously.  Thereafter $R=0\le q$.  Weak uniqueness of the scalar equation identifies the reference marginal without assuming that different controllers share an external Brownian motion.  Hence
\begin{equation}
 E_{d,\delta}^H(T;\nu_d)\ge\E\sin q_T\ge\E\sin R_T.
\label{eq:S-multi-finite-lower}
\end{equation}

For $d\ge4$ and $\delta>0$, Eq.~(\ref{eq:S-multi-HN}) is smooth along every interior finite path, is public, has norm $U$, and realizes $v=-U$; standard localization therefore gives a strong closed-loop solution with radial law~(\ref{eq:S-multi-reference}).  At $\delta=0$ the same statement holds to the target hit, after which the controller is switched off and the tangent measurement keeps the target invariant.  The optional $d=2$ specialization has a reflected lower boundary; its separate smooth-approximation argument is given in Theorem~\ref{thm:S-finite} and is not needed for the present $n\ge2$ result.

Conjugation by $O\in O(d)$ acts by separate orthogonal transformations on the real-symmetric and imaginary-antisymmetric monitor spaces, and $H_N(O\psi)=OH_N(\psi)O^{\mathsf T}$.  The Haar prior and whole closed-loop law are therefore $O(d)$ invariant.  Nearest real-state projection is equivariant off the zero-mass cut locus and has the unique invariant law $\nu_d$.  Coupling each trajectory to its projection costs $\sin R_T$ and proves the reverse inequality in Eq.~(\ref{eq:S-multi-finite-exact}).

For the stationary certificate, density~(\ref{eq:S-multi-pi}) is the zero-current invariant law of the maximal-inward process.  Its likelihood ratio with respect to~(\ref{eq:S-multi-Haar}) decreases because $f_\delta$ and $J_\delta$ increase.  A quantile coupling therefore initializes a stationary $Q_0\le q_0$; the same filtration-by-filtration comparison proves $Q_T\le q_T$ for every history-dependent controller and gives Eq.~(\ref{eq:S-multi-lower}).  For $\delta>0$, nondegenerate interior noise makes the scalar diffusion irreducible, while the reflected/inaccessible boundary classification and finite normalized zero-current speed density make it positive recurrent with a unique invariant law.  Standard one-dimensional diffusion convergence then gives the stated $T\to\infty$ limit.  The change $q=\kappa\delta x/(2U)$ and endpoint Laplace expansion produce Eq.~(\ref{eq:S-multi-lower-asymp}).

For the upper estimate put
\begin{equation}
 q_*:=A_d\frac{\kappa\delta}{U},
 \qquad \tau_*:=\inf\{t:q_t\le q_*\}.
\label{eq:S-multi-qstar}
\end{equation}
The condition on $U$ ensures $q_*\le\pi/4$.  Since
$\cot(2q)\le(2q)^{-1}$ and the remaining two terms in
$b_{d,\delta}$ are nonpositive,
\begin{equation}
 b_{d,\delta}(q)
 \le \frac{\kappa\delta(d-2)}{2q}
 \le \frac U2,
 \qquad q\ge q_*.
\label{eq:S-multi-inward-drift}
\end{equation}
On $\{\tau_*>t_c\}$ the martingale stopped at $\tau_*$ must therefore
exceed $Ut_c/2-\pi/4$.  Its quadratic variation is at most
$\kappa t_c$, because $f_\delta\le1$.  The exponential martingale
inequality gives
\begin{equation}
 \Prob(\tau_*>t_c)
 \le \exp\!\left[-\frac{(Ut_c/2-\pi/4)_+^2}
 {2\kappa t_c}\right].
\label{eq:S-multi-capture}
\end{equation}

It remains to control every post-hit duration, rather than only an
equilibrium limit.  Start the same maximal-inward scalar diffusion at
$q\le q_*$ and compare it with the diffusion having identical
coefficients and an upward reflecting boundary at $q_*$.  The latter
has invariant law
$\bar\pi_{d,\delta,U}^{(*)}
=\pi_{d,\delta,U}(\cdot\mid q\ge q_*)$.
Pathwise one-dimensional reflection comparison, first from the hit
point to $q_*$ and then from $q_*$ to a stationary reflected copy,
yields
\begin{equation}
 \sup_{s\ge0}\sup_{q_{\rm in}\le q_*}
 \E_{q_{\rm in}}\sin q_s
 \le \int_{q_*}^{\pi/4}\sin q\,
 \bar\pi_{d,\delta,U}^{(*)}(\dd q).
\label{eq:S-multi-truncated-order}
\end{equation}
Taking $p=1$ in Lemma~\ref{lem:S-multi-truncated-moments} gives finite
constants depending only on $d$ such that
\begin{equation}
 \int_{q_*}^{\pi/4}\sin q\,
 \bar\pi_{d,\delta,U}^{(*)}(\dd q)
 \le B_d\frac{\kappa\delta}{U},
 \qquad r\ge r_d,
\label{eq:S-multi-truncated-moment}
\end{equation}
uniformly over $0<\delta\le1$.  Conditioning at $\tau_*$, assigning
unit transport cost to the failure event, and using the $O(d)$
projection coupling combine
Eqs.~(\ref{eq:S-multi-capture})--(\ref{eq:S-multi-truncated-moment})
to prove Eq.~(\ref{eq:S-multi-upper}).  At $\delta=0$,
$b_{d,0}\le0$ and the stopped martingale bracket is at most $\kappa t$;
taking the capture window to be $T$ gives
Eq.~(\ref{eq:S-multi-tangent}).
\end{proof}

\begin{corollary}[Certified same-budget separation]
\label{cor:S-budget-window}
Fix $0<\delta\le1$ and $0<\epsilon<L_{d,\delta}(0)$, and define
\begin{align}
 U_{\rm N}^{\rm lb}(\epsilon;\delta)
 &=\inf\{U\ge0:L_{d,\delta}(U)\le\epsilon\},\nonumber\\
 U_{\rm T}^{\rm ub}(\epsilon)
 &=\frac{\pi}{4T}
 +\sqrt{\frac{2\kappa}{T}\log\frac1\epsilon}.
\label{eq:S-budget-thresholds}
\end{align}
The first threshold is the unique solution of
$L_{d,\delta}(U)=\epsilon$.  If
$U_{\rm T}^{\rm ub}<U_{\rm N}^{\rm lb}$, then every
\begin{equation}
 U\in[U_{\rm T}^{\rm ub},U_{\rm N}^{\rm lb})
\label{eq:S-budget-window}
\end{equation}
admits the stopped first-hit tangent controller with terminal error at
most $\epsilon$, while every admissible controller for the
$\delta$ instrument has terminal error strictly greater than
$\epsilon$.
\end{corollary}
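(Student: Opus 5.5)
The proof will combine the two halves of Theorem~\ref{thm:S-multiqubit} with monotonicity properties of $L_{d,\delta}$.

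\textbf{Uniqueness of the threshold.} We first show that $L_{d,\delta}$ is continuous and strictly decreasing on $[0,\infty)$, and tends to $0$ as $U\to\infty$.

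Continuity follows from dominated convergence in Eq.~(\ref{eq:S-multi-pi}). The unnormalized density is continuous in $U$ and is dominated on compact $U$-intervals by the $U=0$ density. That density is integrable for $\delta>0$, because $f_\delta\ge\delta$ and $\sin^{d-2}(2q)\cos(2q)$ is integrable.

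Strict monotonicity comes from a likelihood-ratio argument. For $U<U'$, the ratio $\pi_{d,\delta,U'}/\pi_{d,\delta,U}\propto\exp[-2(U'-U)J_\delta(q)/\kappa]$ is strictly decreasing in $q$, since $J_\delta'=1/f_\delta>0$. Hence $\pi_{d,\delta,U'}$ is strictly stochastically smaller than $\pi_{d,\delta,U}$. Because $\sin q$ is strictly increasing on $[0,\pi/4]$, we get $L_{d,\delta}(U')<L_{d,\delta}(U)$.

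The limit $L_{d,\delta}(U)\to0$ is Eq.~(\ref{eq:S-multi-lower-asymp}).

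By the intermediate value theorem, each $0<\epsilon<L_{d,\delta}(0)$ then has exactly one solution of $L_{d,\delta}(U)=\epsilon$, and this solution equals $U_{\rm N}^{\rm lb}$. Strict monotonicity also gives $L_{d,\delta}(U)>\epsilon$ for every $U<U_{\rm N}^{\rm lb}$.

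\textbf{Normal-instrument failure.} Take $U<U_{\rm N}^{\rm lb}$ and any $H\in\A_U$. Eq.~(\ref{eq:S-multi-lower}) with $\nu=\nu_d$ gives
\[
\Wone(\Law^{\delta,H}[\rho_T],\nu_d)\ge L_{d,\delta}(U)>\epsilon .
\]
So the terminal error is strictly greater than $\epsilon$, uniformly over all controllers, including those with arbitrary record memory.

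\textbf{Tangent success.} Take $U\ge U_{\rm T}^{\rm ub}$. At $\delta=0$, use $H_N$ from Eq.~(\ref{eq:S-multi-HN}), stopped at the first target hit. By the $O(d)$ projection coupling in the proof of Theorem~\ref{thm:S-multiqubit}, the error is at most $\Prob(\tau_0>T)$. After capture the state stays in $\M_d$, so the only error comes from failing to capture by time $T$.

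Since $b_{d,0}\le0$ and the bracket of the martingale part is at most $\kappa t$, the exponential martingale bound (\ref{eq:S-multi-capture}) with capture window $T$ gives
\[
\Prob(\tau_0>T)\le\exp\!\left[-\frac{(UT-\pi/4)_+^2}{2\kappa T}\right].
\]
For $U\ge U_{\rm T}^{\rm ub}$ we have $UT-\pi/4\ge\sqrt{2\kappa T\log(1/\epsilon)}$, so this bound is at most $\epsilon$.

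I expect the main obstacle to be a constant mismatch. Eq.~(\ref{eq:S-multi-upper}) uses $Ut_c/2$, whereas the tangent claim needs the full $UT$. That full drift is available only because at $\delta=0$ the drift is $b_{d,0}-U\le-U$, so no share of $U$ is spent cancelling the $\cot$ term. I would write this step out explicitly rather than cite (\ref{eq:S-multi-upper}). The other point to state carefully is that monotonicity in $U$ of the stopped controller is not needed, because the bound is applied directly at each $U$ in the window.
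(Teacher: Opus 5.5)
Your proposal is correct and follows the paper's proof. Continuity, strict decrease and the vanishing limit of $L_{d,\delta}$ give the unique threshold; Eq.~(\ref{eq:S-multi-lower}) gives strict failure below it; and the $\delta=0$ capture bound with the full drift $-U$ (valid because $b_{d,0}\le0$), together with the $O(d)$ projection coupling, gives tangent success. The one difference is how you get strict monotonicity: you use a likelihood-ratio stochastic-order argument, whereas the paper uses the derivative formula $\frac{\dd}{\dd U}L_{d,\delta}=-\frac{2}{\kappa}\operatorname{Cov}[\sin q,J_\delta]$, which is the infinitesimal form of the same fact; your version also avoids having to justify differentiation under the integral.
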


\begin{proof}
In Eq.~(\ref{eq:S-multi-pi}) write the $U$-independent positive factor
as $a_{d,\delta}(q)$.  Differentiation under the integral gives
\begin{equation}
 \frac{\dd}{\dd U}L_{d,\delta}(U)
 =-\frac{2}{\kappa}
 \operatorname{Cov}_{\pi_{d,\delta,U}}
 [\sin q,J_\delta(q)]<0.
\label{eq:S-L-monotone}
\end{equation}
Both functions in the covariance are strictly increasing on
$(0,\pi/4)$, and the invariant density is positive there; the strict
sign follows from the independent-copy covariance identity.  Dominated
convergence gives continuity, and
Eq.~(\ref{eq:S-multi-lower-asymp}) gives $L_{d,\delta}(U)\to0$.
Thus the normal threshold exists and is unique.  For every
$U<U_{\rm N}^{\rm lb}$, Eq.~(\ref{eq:S-multi-lower}) places every
normal-instrument terminal law above $\epsilon$.  For every
$U\ge U_{\rm T}^{\rm ub}$, the capture estimate used in
Eq.~(\ref{eq:S-multi-tangent}) is at most $\epsilon$; after the first
hit, the tangent instrument preserves the target and $O(d)$ covariance
gives the target law.  Combining the two one-sided statements proves
Eq.~(\ref{eq:S-budget-window}).
\end{proof}

\begin{corollary}[Wasserstein-perturbed target]
\label{cor:S-thick-target}
Fix $0<\delta\le1$ and let $\nu_d^\sigma$ be any probability law satisfying
$\Wone(\nu_d^\sigma,\nu_d)\le\sigma$.  Every $H\in\A_U$ obeys
\begin{equation}
 \Wone(\Law^{\delta,H}[\rho_T],\nu_d^\sigma)
 \ge [L_{d,\delta}(U)-\sigma]_+.
\label{eq:S-thick-target}
\end{equation}
Consequently, terminal error at most $\epsilon$ requires
$L_{d,\delta}(U)\le\epsilon+\sigma$.  If
$\epsilon+\sigma<L_{d,\delta}(0)$, the certified necessary budget is
$U\ge U_{\rm N}^{\rm lb}(\epsilon+\sigma;\delta)$, with the inverse defined
in Eq.~(\ref{eq:S-budget-thresholds}).  At fixed $d,\kappa,\delta>0$,
\begin{equation}
 U_{\rm N}^{\rm lb}(\epsilon+\sigma;\delta)
 =\frac{\kappa\delta(d-1)}{2(\epsilon+\sigma)}
 \left[1+O_d\!\left(\frac{(\epsilon+\sigma)^2}{\delta}\right)\right]
\label{eq:S-thick-target-asymp}
\end{equation}
as $\epsilon+\sigma\downarrow0$.  For fixed $\sigma>0$ this statement does
not imply a $1/\epsilon$ divergence.
If $0\le\sigma<\epsilon$, $\epsilon+\sigma<L_{d,\delta}(0)$, and the interval
\begin{equation}
 U_{\rm T}^{\rm ub}(\epsilon-\sigma)\le U
 <U_{\rm N}^{\rm lb}(\epsilon+\sigma;\delta)
\label{eq:S-perturbed-budget-window}
\end{equation}
is nonempty, the tangent capture construction has error at most $\epsilon$ to this same $\nu_d^\sigma$, whereas every allowed normal-instrument controller has error greater than $\epsilon$.  Here $\sigma$ bounds a distance between probability laws, not the displacement of each sample from $\M_d$.
\end{corollary}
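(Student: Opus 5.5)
The argument reduces everything to the triangle inequality for $\Wone$ plus the two one-sided certificates already established in Theorem~\ref{thm:S-multiqubit} and Corollary~\ref{cor:S-budget-window}.

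\textbf{Step 1: the lower bound (\ref{eq:S-thick-target}).} Fix $H\in\A_U$. The bound (\ref{eq:S-multi-lower}) holds for every law supported on $\M_d$, so in particular for $\nu_d$, giving $\Wone(\Law^{\delta,H}[\rho_T],\nu_d)\ge L_{d,\delta}(U)$. The triangle inequality for the Wasserstein metric gives
\[
\Wone(\Law^{\delta,H}[\rho_T],\nu_d^\sigma)\ge \Wone(\Law^{\delta,H}[\rho_T],\nu_d)-\Wone(\nu_d^\sigma,\nu_d)\ge L_{d,\delta}(U)-\sigma .
\]
Nonnegativity supplies the positive part. Consequently error at most $\epsilon$ forces $L_{d,\delta}(U)\le\epsilon+\sigma$.

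\textbf{Step 2: the necessary budget and its asymptotics.} When $\epsilon+\sigma<L_{d,\delta}(0)$, the strict monotonicity and continuity of $L_{d,\delta}$ from (\ref{eq:S-L-monotone}) turn the condition $L_{d,\delta}(U)\le\epsilon+\sigma$ into $U\ge U_{\rm N}^{\rm lb}(\epsilon+\sigma;\delta)$.

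For (\ref{eq:S-thick-target-asymp}), I would invert (\ref{eq:S-multi-lower-asymp}). Write $\eta=\epsilon+\sigma$ and solve $\frac{\kappa\delta(d-1)}{2U}[1+O_d(\kappa^2\delta/U^2)]=\eta$. Since $U\asymp\kappa\delta/\eta$, the relative correction is $\kappa^2\delta/U^2\asymp\eta^2/\delta$. A one-step fixed-point inversion then gives the stated form.

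The uniformity of the error term in $r$ is what justifies this inversion, and I expect it to be the only mildly delicate point. I would check it by noting that $r=U/(\kappa\sqrt\delta)\asymp\sqrt\delta/\eta\to\infty$. The remark about fixed $\sigma>0$ follows because the bound then stays finite as $\epsilon\downarrow0$.

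\textbf{Step 3: the perturbed window (\ref{eq:S-perturbed-budget-window}).} For the tangent side, take $U\ge U_{\rm T}^{\rm ub}(\epsilon-\sigma)$, which is well defined since $\epsilon-\sigma>0$. Corollary~\ref{cor:S-budget-window} with accuracy $\epsilon-\sigma$ gives the stopped first-hit controller error at most $\epsilon-\sigma$ to $\nu_d$. This uses the capture estimate together with $O(d)$ covariance. The triangle inequality then gives error at most $(\epsilon-\sigma)+\sigma=\epsilon$ to $\nu_d^\sigma$.

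For the normal side, $U<U_{\rm N}^{\rm lb}(\epsilon+\sigma;\delta)$ together with strict monotonicity gives $L_{d,\delta}(U)>\epsilon+\sigma$. Step 1 then gives error $>\epsilon$ for every admissible controller.

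The closing remark needs no argument beyond the definition: $\sigma$ enters only through a law-level distance, so nothing is claimed about pointwise displacement of samples from $\M_d$.
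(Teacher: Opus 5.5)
Your proposal is correct and follows the paper's proof essentially step for step. Both use the Wasserstein triangle inequality with Eq.~(\ref{eq:S-multi-lower}), strict monotonicity of $L_{d,\delta}$ for the necessary budget, inversion of Eq.~(\ref{eq:S-multi-lower-asymp}) for the asymptotics, and the tangent construction at tolerance $\epsilon-\sigma$ plus the triangle inequality for the window. Your check that $r\asymp\sqrt\delta/(\epsilon+\sigma)\to\infty$ spells out what justifies the inversion, which the paper leaves implicit; it relies on $U_{\rm N}^{\rm lb}\to\infty$, which holds because $L_{d,\delta}(U)>0$ at every finite $U$.
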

\begin{proof}
The Wasserstein triangle inequality and
Eq.~(\ref{eq:S-multi-lower}) give
\begin{align*}
 \Wone(\Law^{\delta,H}[\rho_T],\nu_d^\sigma)
 &\ge \Wone(\Law^{\delta,H}[\rho_T],\nu_d)
      -\Wone(\nu_d^\sigma,\nu_d)\nonumber\\
 &\ge L_{d,\delta}(U)-\sigma.
\end{align*}
Nonnegativity yields Eq.~(\ref{eq:S-thick-target}).  Strict monotonicity of
$L_{d,\delta}$ gives the inverse statement.  Inverting
Eq.~(\ref{eq:S-multi-lower-asymp}) gives
Eq.~(\ref{eq:S-thick-target-asymp}); the exact inverse is retained when
$\delta$ varies with the requested accuracy.
For Eq.~(\ref{eq:S-perturbed-budget-window}), the tangent construction at tolerance $\epsilon-\sigma$ and the triangle inequality give error at most $(\epsilon-\sigma)+\sigma=\epsilon$ to $\nu_d^\sigma$.  Strict monotonicity gives $L_{d,\delta}(U)>\epsilon+\sigma$ on the normal side, so Eq.~(\ref{eq:S-thick-target}) gives error greater than $\epsilon$.  Both conclusions refer to the same target law; no pointwise tube-support assumption is used.
\end{proof}

The cost inversion uses strict margins rather than an unattained equality at the infimum.  The classes $\A_U$ are nested, hence $E_{d,\delta}^*$ is nonincreasing.  For an upper bound choose $U$ so that the analytic controller estimate is below $\epsilon-\eta$; when an approximating feedback is needed, weak terminal-law convergence supplies an admissible smooth policy within the remaining $\eta$.  For the lower bound apply Eq.~(\ref{eq:S-multi-lower}) to every feasible policy before taking the infimum over its peak.  This proves the asymptotic cost statements without asserting attainment at a budget where an infimum merely equals $\epsilon$.

The theorem also gives the multiqubit calibration statement.
With $S(\epsilon)=\sqrt{\log(1/\epsilon)}$,
\begin{equation}
 \C_{d,\delta(\epsilon)}(\epsilon;\nu_d)=O(S)
 \quad\Longleftrightarrow\quad
 \delta(\epsilon)=O[\epsilon S(\epsilon)].
\label{eq:S-multi-window}
\end{equation}
The necessity follows from Eq.~(\ref{eq:S-multi-lower-asymp}) in the resulting large-$r$ regime, and sufficiency from Eq.~(\ref{eq:S-multi-upper}).
Equation~(\ref{eq:S-multi-window}) is a budget-class statement; it is not a hidden claim that the tangent inverse in Eq.~(\ref{eq:S-multi-tangent}) has a matching lower asymptotic.

For fixed total depolarizing rate $\Gamma$, the leading lower coefficient is
$\Gamma\delta(d-1)/(4d)$ and approaches $\Gamma\delta/4$ as $d$ grows.
This does not make the construction scalable: it uses $4^n-1$ monitored Pauli channels, complete record-conditioned state reconstruction, and unrestricted global Hamiltonians.
The peak $U$ is therefore one resource coordinate, not a proxy for total hardware complexity.

\section{Single-qubit specialization: target geometry and Haar law}
\label{sec:S-qubit}

This section and the next five specialize to one system qubit, with three
Pauli channels of equal rate $\kappa=\Gamma/4$ and uniform target
$\nu_R$ on $\mathbb{RP}^1$.  They establish the reflected-boundary
construction and the explicit crossover formulas used by the later
single-site and finite-clock diagnostics.  The multiqubit proof above
does not require these additional boundary arguments.

Let $\psi\in\mathbb C^2$ be normalized and $z=\psi^{\mathsf T}\psi$.
After multiplying $\psi$ by a global phase, take $z=|z|\ge0$ and write $\psi=a+ib$ with real vectors $a,b$.
The imaginary part of $\psi^{\mathsf T}\psi$ gives $a^{\mathsf T}b=0$, while
\begin{equation*}
 \|a\|^2+\|b\|^2=1,\qquad
 \|a\|^2-\|b\|^2=|z|.
\end{equation*}
Hence for real orthonormal vectors $e,f$,
\begin{equation}
 \psi=\cos q\,e+i\sin q\,f,\qquad
 |z|=\cos(2q),\qquad 0\le q\le\pi/4.
\label{eq:S-normalform}
\end{equation}
For a normalized real vector $r$, the overlap satisfies
$|\langle r,\psi\rangle|^2\le\cos^2q$, with equality at $r=e$.
Therefore
\begin{equation}
 d_{\rm FS}([\psi],\mathbb{RP}^1)=q,\qquad
 d_{\rm tr}([\psi],\mathbb{RP}^1)=\sqrt{1-\cos^2q}=\sin q.
\label{eq:S-metric}
\end{equation}
This proves the metric identities used in the support bound directly.

For a pure qubit with Bloch vector $(x,y,z_B)$, one checks
$|\psi^{\mathsf T}\psi|^2=1-y^2$.
Thus $|y|=\sin(2q)$.
Complex Haar measure is uniform on the Bloch sphere, so $y$ is uniform on $[-1,1]$.
Adding the two signs and changing variables gives
\begin{equation}
 p_H(q)=2\cos(2q),\qquad
 \int_0^{\pi/4}p_H(q)\dd q=1.
\label{eq:S-haar}
\end{equation}

For later use, any two pure states satisfy
\begin{equation*}
 d_{\rm tr}([\psi],[\varphi])=\sin d_{\rm FS}([\psi],[\varphi]),
 \qquad \frac{2}{\pi}d_{\rm FS}\le d_{\rm tr}\le d_{\rm FS}.
\end{equation*}
The two corresponding Wasserstein metrics therefore have identical asymptotic cost classes.

\section{From the Pauli filter to the radial diffusion}

Let $r=(x,y,z_B)$ be the pure-state Bloch vector.
For the $X$ and $Z$ informational channels and the interpolated $Y$ readout
$e^{-i\phi_\delta}=\sqrt\delta-i\sqrt{1-\delta}$, the uncontrolled $y$ coordinate obeys
\begin{equation}
 \dd y=-4\kappa y\dd t
 -2\sqrt\kappa\,y(x\dd W_X+z_B\dd W_Z)
 +2\sqrt{\kappa\delta}(1-y^2)\dd W_Y.
\label{eq:S-y}
\end{equation}
The random-unitary component of the $Y$ innovation rotates the $x$--$z_B$ plane and has no $y$ component.
Using $x^2+z_B^2=1-y^2$, Eq.~(\ref{eq:S-y}) has quadratic variation
\begin{equation}
 \dd[y]_t=4\kappa(1-y^2)[\delta+(1-\delta)y^2]\dd t.
\label{eq:S-yqv}
\end{equation}

Away from $y=0$, set $q=\tfrac12\arcsin|y|$ and apply It\^o's formula.
The diffusion coefficient becomes
\begin{equation*}
 \frac{1}{4(1-y^2)}\frac{\dd[y]_t}{\dd t}
 =\kappa[\delta+(1-\delta)\sin^2(2q)]
 =\kappa f_\delta(q),
\end{equation*}
while the drift is
\begin{align}
 b_\delta(q)
 &=\frac{-2\kappa|y|}{\sqrt{1-y^2}}
 +\frac{\kappa|y|}{\sqrt{1-y^2}}
 [\delta+(1-\delta)y^2]\nonumber\\
 &=-\kappa\tan(2q)-\kappa(1-\delta)\sin(2q)\cos(2q).
\label{eq:S-b}
\end{align}
\begin{lemma}[radial semimartingale under admissible feedback]
\label{lem:S-radial}
For every $H\in\A_U$, the target distance has a continuous adapted version satisfying
\begin{equation}
 \dd q_t=[b_\delta(q_t)+v_t]\dd t
 +\sqrt{\kappa f_\delta(q_t)}\dd W_t+\dd\ell_t^0,
 \qquad |v_t|\le U,
\label{eq:S-radial}
\end{equation}
where $v$ is predictable, $W$ is Brownian in the public filtration, and $\ell^0$ is nondecreasing and supported on $\{q=0\}$.
For $\delta=0$, the local time vanishes and zero is absorbing after the Hamiltonian is stopped.
\end{lemma}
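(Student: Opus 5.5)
The natural route is to work with the Bloch coordinate $y$, where everything is smooth, and only then pass to $q=\tfrac12\arcsin|y|$ through a Tanaka-type formula for $|y|$. The steps, in order, are as follows.

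\textbf{Step 1: semimartingale for $y$.} For a fixed $H\in\A_U$, the controlled filter~(\ref{eq:S-filter-common}) yields a Bloch-vector SDE. The $y$ component is Eq.~(\ref{eq:S-y}) plus a Hamiltonian drift $(2\,\bm h_t\times r_t)_y\dd t$, where $H_t=h_0I+\bm h_t\cdot\bm\sigma/2$ and $|\bm h_t|\le 2\|H_t\|_{\rm op}$. The coefficients are predictable in the public filtration by Lemma~\ref{lem:S-innovations}. Define
\begin{equation*}
 \dd W_t=\frac{-y(x\dd W_X+z_B\dd W_Z)+\sqrt\delta(1-y^2)\dd W_Y}{\sqrt{(1-y^2)[\delta+(1-\delta)y^2]}}\,\mathrm{sgn}(y),
\end{equation*}
and set $\dd W_t=\dd W_Y$ on the null set where the denominator vanishes. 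Its bracket is $t$, so L\'evy's theorem makes $W$ a public Brownian motion.

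\textbf{Step 2: pass to $q$.} Apply Tanaka's formula to $|y|$. This gives $\dd|y|=\mathrm{sgn}(y)\dd y+\dd L^0_t(y)$. Next apply It\^o's formula to $g(u)=\tfrac12\arcsin u$, which is smooth on $[0,1)$ with $g'(0)=1/2$. The upper endpoint $|y|=1$ needs separate care: it corresponds to $q=\pi/4$, which the filter does not reach (see below).

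\textbf{Step 3: identify the terms.} The uncontrolled drift and diffusion reproduce Eq.~(\ref{eq:S-b}) and $\kappa f_\delta(q)$, by the computation already displayed. The Hamiltonian contribution is $v_t=g'(|y|)\,\mathrm{sgn}(y)(2\bm h\times r)_y$. Bound it by the Fubini--Study speed: $|v_t|\le$ the metric speed of $q$ under $-i[H,\rho]$, which is at most $\Delta_\psi H\le\|H\|_{\rm op}\le U$. The reason is that $q=d_{\rm FS}([\psi],\RP^1)$ is $1$-Lipschitz for the Fubini--Study metric, by Eq.~(\ref{eq:S-metric}). The local-time term is $\ell^0_t=\tfrac12L^0_t(y)$, which is nondecreasing and supported on $\{y=0\}=\{q=0\}$.

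\textbf{Step 4: the case $\delta=0$.} Here the diffusion coefficient $2\sqrt{\kappa}\,|y|\sqrt{1-y^2}$ vanishes at $y=0$. Consequently $[y]$ has zero density on $\{y=0\}$, and the occupation-density formula gives $L^0(y)=0$. Once $H$ is switched off, $y=0$ is a fixed point of Eq.~(\ref{eq:S-y}): its drift and noise both vanish, and by pathwise uniqueness for Lipschitz coefficients $y$ stays at zero. So zero is absorbing.

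\textbf{Main obstacle.} The delicate part is the endpoint $q=\pi/4$ ($|y|=1$), where $\arcsin$ is singular, together with the choice of a continuous version across this cut. I would localize with stopping times at $|y|\le1-1/m$. Then I would show that $|y|=1$ is not hit in finite time using the scale function, since it behaves like $(\pi/4-q)^{-1}$ as in Eq.~(\ref{eq:S-multi-scale}) with $d=2$. A finite-horizon Girsanov change removes the bounded drift $v_t$, and this allows $m\to\infty$.
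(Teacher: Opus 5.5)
Your route is the paper's own, step for step. You normalize the radial martingale part of the three innovations and apply L\'evy's theorem. You use Tanaka for $|y|$ and then It\^o for $\tfrac12\arcsin$ to read off $b_\delta$, $\kappa f_\delta$ and $\ell^0=\tfrac12L^0(y)$. You bound $v$ by the Fubini--Study speed $\Delta_\psi H\le\|H\|_{\rm op}$, using that the distance to $\RP^1$ is $1$-Lipschitz. And you remove the local time at $\delta=0$ with the occupation-density formula.

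You go beyond the paper at the upper endpoint $|y|=1$, which the paper's proof leaves implicit; the paper invokes inaccessibility elsewhere and states the Girsanov form only for $\delta>0$. There your argument fails as written for $\delta=0$. The coefficient $\sqrt{\kappa f_0(q)}=\sqrt\kappa\sin(2q)$ vanishes on the target, so $v_t/\sqrt{\kappa f_0(q_t)}$ is unbounded, and no finite-horizon Girsanov change can remove $v$. In fact the laws are mutually singular on $\mathcal F_T$. The uncontrolled $\delta=0$ equation is linear in $y$, so $y$ never reaches $0$ from $y_0\neq0$, whereas under maximal inward feedback it does so with positive probability.

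The repair is local, because the degeneracy sits at $q=0$ and the obstacle at $q=\pi/4$. One option is to restrict the Girsanov kernel to $\{q\ge\pi/8\}$, where $f_0\ge\tfrac12$, and argue crossing by crossing with the uncontrolled scale function. The other is to compare pathwise with the Markov diffusion that has drift $b_\delta+U$ and the same coefficient. Use the localized Tanaka--Gr\"onwall comparison of Theorem~\ref{thm:S-multiqubit}, stopped below $\pi/4-1/m$. That diffusion's scale density is still of order $(\pi/4-q)^{-1}$, so $\pi/4$ is inaccessible for every $\delta\in[0,1]$.

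There are also three smaller points.
\begin{itemize}
\item With $H=h_0I+\bm h\cdot\bm\sigma/2$ the Bloch vector obeys $\dot{\bm r}=\bm h\times\bm r$, not $2\bm h\times\bm r$. Your bound on $v$ does not use that formula, but with your factor the explicit computation would only give $|v|\le2U$.
\item At $\delta=0$, $L^0(y)=0$ does not follow from $[y]$ having zero density on $\{y=0\}$, since that holds for every continuous semimartingale. It follows from the coefficient being linear in $y$. The bound $\int_0^t y_s^{-2}\mathbf 1_{\{y_s\neq0\}}\dd[y]_s\le4\kappa t$ gives $\int L^a_t\,a^{-2}\dd a<\infty$, and right-continuity of $a\mapsto L^a_t$ then forces $L^0_t=0$.
\item For $\delta=0$ the set $\{y=0\}$ is not Lebesgue-null after capture. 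Your choice $\dd W=\dd W_Y$ there still has unit bracket, so $W$ remains Brownian.
\end{itemize}
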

\begin{proof}
Write the radial martingale on the interior as
$\dd M_q=\sum_Pg_{P,t}\dd W_{P,t}$, where
$\sum_Pg_{P,t}^2=\kappa f_\delta(q_t)$.
Lemma~\ref{lem:S-innovations} and adaptation of the conditional state make the $g_P$ predictable.
Where $f_\delta>0$, set
$\beta_{P,t}=g_{P,t}/\sqrt{\kappa f_\delta(q_t)}$; where it vanishes, choose any predictable unit vector $\beta_t$.
Then
$W_t=\sum_P\int_0^t\beta_{P,s}\dd W_{P,s}$
is a continuous public-filtration local martingale with $[W]_t=t$, hence is Brownian by L\'evy's characterization.

Use the symmetric local-time convention
\begin{equation*}
 |y_t|=|y_0|+\int_0^t\operatorname{sgn}(y_s)\dd y_s+L_t^0(y).
\end{equation*}
Applying the semimartingale It\^o--Tanaka formula first to $|y|$ and then to $q=\tfrac12\arcsin|y|$ gives the drift in Eq.~(\ref{eq:S-b}) and
$\ell_t^0=\tfrac12L_t^0(y)$.
Finite-variation Hamiltonian motion contributes an absolutely continuous predictable term $v_t\dd t$ away from the measure-zero crossing times.
The instantaneous Fubini--Study speed is $\Delta_\psi H$, and the metric derivative of distance to a closed set is no larger than the ambient speed, so
\begin{equation}
 |v_t|\le \Delta_{\psi_t}H_t
 \le\sqrt{\langle H_t^2\rangle}\le\|H_t\|_{\rm op}\le U.
\label{eq:S-speed}
\end{equation}
The finite-variation term does not contribute to local time.
For $\delta>0$, nonzero quadratic variation at zero gives the nonsticky Skorokhod reflection.
For $\delta=0$, every measurement martingale coefficient in the normal coordinate is proportional to $y$, so the occupation-density formula gives $L_t^0(y)=0$; with $H=0$, both normal drift and noise vanish at zero.
\end{proof}
The speed bound is saturated pointwise away from the target and cut locus.
In the normal form (\ref{eq:S-normalform}), define
\begin{equation}
 H_N=U(ef^{\mathsf T}+fe^{\mathsf T}).
\label{eq:S-Hnormal}
\end{equation}
Then $H_N$ is Hermitian, $\|H_N\|_{\rm op}=U$, and Schr\"odinger evolution gives $\dot q=-U$.
Without choosing $e,f$, the same operator is
\begin{equation}
 H_N(\psi)=\frac{2U}{\sqrt{1-|z|^2}}
 \operatorname{Im}_{\rm entry}
 \left(\frac{\bar z}{|z|}\psi\psi^{\mathsf T}\right),
\label{eq:S-globalH}
\end{equation}
where $\operatorname{Im}_{\rm entry}$ is the real matrix of entrywise imaginary parts.
Substituting Eq.~(\ref{eq:S-normalform}) into Eq.~(\ref{eq:S-globalH}) returns Eq.~(\ref{eq:S-Hnormal}), proving Hermiticity, norm, and velocity.
It is invariant under global phase and satisfies
$H_N(O\psi)=OH_N(\psi)O^{\mathsf T}$ for every real orthogonal $O$.
The formula is needed only for $0<|z|<1$; the complex-Haar prior gives the exceptional upper endpoint $z=0$ zero mass.
On the open domain it is a Borel function of the conditional ray.
For $\delta=0$ it is stopped permanently at the first hit $|z|=1$.
For $\delta>0$ define $H_N=0$ on $|z|=1$ and use Eq.~(\ref{eq:S-globalH}) whenever $q>0$.
The reflected process is nonsticky because its boundary quadratic variation is nonzero, so the value on $q=0$ affects no Lebesgue-time drift and the radial velocity equals $-U$ almost everywhere off the boundary.
The inaccessible upper endpoint may be assigned any bounded value without changing the law.
The next section uses this Borel selector only to identify a unique limiting weak law.  Public admissibility for $\delta>0$ is obtained with smooth covariant approximants, so no strong-solution claim for the discontinuous selector is needed.

\section{Exact finite-horizon optimum for the uniform target}

Let $R_0=q_0$ and, for $0<\delta\le1$, define on the same filtered space
\begin{equation}
 \dd R_t=[b_\delta(R_t)-U]\dd t
 +\sqrt{\kappa f_\delta(R_t)}\dd W_t+\dd L_t^0.
\label{eq:S-finite-reference}
\end{equation}
For $\delta=0$, solve the same equation until
$\tau_R=\inf\{t:R_t=0\}$ and set $R_t=0$ afterward.

\begin{theorem}[finite-horizon optimality]
\label{thm:S-finite}
For every $\delta\in[0,1]$, $U\ge0$, and $T>0$ under the declared operational contract,
\begin{equation}
 E_\delta^*(U,T)=\E\sin R_T.
\label{eq:S-finite-optimum}
\end{equation}
For $\delta=0$ the stopped maximum-inward feedback in Eq.~(\ref{eq:S-globalH}) attains the infimum.  For $\delta>0$ a sequence of smooth public Markov feedbacks of norm at most $U$ converges to the value in Eq.~(\ref{eq:S-finite-optimum}); exact attainment by the discontinuous limiting selector is not required or asserted.
\end{theorem}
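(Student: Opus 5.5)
The proof splits into a lower bound valid for every admissible controller and a matching upper bound obtained from the maximal-inward feedback, with the qubit reflected boundary handled separately.

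\emph{Lower bound.} Fix $H\in\A_U$. By Lemma~\ref{lem:S-radial}, $q_t$ is a reflected semimartingale driven by a public Brownian motion $W$ (built from that controller's own innovations), with drift $b_\delta(q_t)+v_t$ where $v_t\ge-U$, and with local time $\ell^0$ at zero. Drive Eq.~(\ref{eq:S-finite-reference}) with this same $W$ from $R_0=q_0$.

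For $\delta>0$ we compare two Skorokhod-reflected one-dimensional diffusions. The coefficients $a(q)=\sqrt{\kappa f_\delta(q)}$ and $b_\delta$ are Lipschitz on $[0,\pi/4-1/m]$, since $f_\delta\ge\delta>0$ and $\tan(2q)$ is bounded away from the top. Set $D=R-q$ and apply Tanaka to $D^+$. The local time of $D$ at $0$ vanishes because $d[D]\le CD^2\,dt$. On $\{D>0\}$ we have $R>q\ge0$, so the reflection term $dL^0(R)$ is inactive there. The $q$-reflection $d\ell^0$ enters with a favourable sign, and the drift difference is at most $C_m D^+$ because $v\ge -U$. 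Taking expectations and applying Gr\"onwall gives $D^+=0$ up to $\tau_m$. The upper endpoint is inaccessible (scale density $\sim(\pi/4-q)^{-1}$), so we let $m\to\infty$.

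For $\delta=0$ the comparison runs until $R$ first hits $0$; after that $R=0\le q$.

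Weak uniqueness of Eq.~(\ref{eq:S-finite-reference}) identifies $\Law(R_T)$ independently of the controller. Then Eq.~(\ref{eq:S-support}) with $d_{\rm tr}=\sin q$ from Eq.~(\ref{eq:S-metric}) yields $E^H_\delta\ge\E\sin q_T\ge\E\sin R_T$.

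\emph{Upper bound.} The key symmetry is $O(2)$ covariance. The Haar prior is invariant, complete-Pauli monitoring is covariant (conjugation by $O$ permutes the symmetric and antisymmetric monitor spaces orthogonally), and $H_N(O\psi)=OH_N(\psi)O^{\mathsf T}$. Hence the closed-loop law is $O(2)$ invariant, and the nearest-real-state projection pushes it to the unique invariant law $\nu_R$. Coupling each trajectory to its projection costs exactly $\sin q_T$, so $E^H_\delta=\E\sin q_T$ for any covariant controller.

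For $\delta=0$ we use $H_N$ stopped at the first hit of $q=0$. The selector is smooth on $0<q<\pi/4$, so a strong solution exists up to the hit, and the tangent instrument keeps $q=0$ absorbing afterward (Lemma~\ref{lem:S-radial}). Therefore $q\stackrel{d}{=}R$, and the infimum is attained.

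For $\delta>0$ the selector is discontinuous at $q=0$. We replace it by smooth covariant approximants $H_N^{(k)}=\chi_k(q)H_N$, where $\chi_k$ is a smooth cutoff vanishing near $q=0$ and tending to $1$; these have norm at most $U$ and are admissible. Their radial laws solve Eq.~(\ref{eq:S-finite-reference}) with drift $b_\delta-U\chi_k$. Stability of reflected one-dimensional SDEs with uniformly nondegenerate noise (via a Zvonkin or Girsanov transform) gives $\E\sin q^{(k)}_T\to\E\sin R_T$. This holds because the reflection is nonsticky and the drift discrepancy is supported on a vanishing neighbourhood of the boundary, where the occupation time tends to $0$.

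\emph{Main obstacle.} I expect the limiting step for $\delta>0$ to be the hardest part. It requires showing that the occupation time of $[0,1/k]$ vanishes uniformly in $k$, and that the reflected SDE with a measurable bounded drift is well posed. I would first try the Girsanov route: the density between the $\chi_k$-drift and $-U$-drift laws has exponent bounded by $\frac{U^2}{\kappa\delta}\int_0^T\mathbf 1_{\{q\le 1/k\}}dt$. By the occupation-density formula this tends to $0$ in $L^1$, giving total-variation convergence of the terminal laws.
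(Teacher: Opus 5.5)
Your proposal is correct. The lower bound and the $\delta=0$ attainment follow the paper's proof essentially step for step: the same-Brownian-motion coupling, stopped Tanaka--Gr\"onwall with favourable reflection terms, removal of the upper localization by inaccessibility, stopping at the first zero of $R$, weak uniqueness, and then $O(2)$ covariance with the projection coupling.

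The genuine difference is the $\delta>0$ attainment. The paper works on the Bloch sphere. It uses the globally smooth $H_N^{(\varepsilon)}$ of Eq.~(\ref{eq:S-Hsmooth}) and proves uniform ellipticity of the filter, with tangent covariance at least $4\kappa\delta/3$. It then invokes Zvonkin--Stroock--Varadhan well-posedness for the bounded Borel drift of the discontinuous selector, and gets weak path-law convergence from tightness and martingale-problem stability.

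You instead stay on the radial line. Because $\chi_k(q)H_N$ produces radial velocity exactly $-U\chi_k(q)$, each approximant's radius is a reflected scalar diffusion. Its drift differs from the reference drift $b_\delta-U$ only by the bounded term $U(1-\chi_k)$. Girsanov with the reference law as base measure then gives
\[
 \operatorname{KL}\bigl(\Law(R)\,\big\|\,\Law(q^{(k)})\bigr)\le\frac{U^2}{2\kappa\delta}\,\E\int_0^T\mathbf 1_{\{R_t<1/k\}}\dd t\longrightarrow0,
\]
because $\dd[R]_t\ge\kappa\delta\,\dd t$ forces zero Lebesgue time at $0$. Pinsker then yields total-variation convergence of the radial laws.

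This route is more elementary, and it removes both obstacles you flag. The reference drift $b_\delta-U$ is smooth in $q$, since the selector's discontinuity at the target is absorbed into the reflection, so no measurable-drift well-posedness is needed. With $R$ as the base measure, no occupation estimate uniform in $k$ is needed either.

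It has two costs.
\begin{enumerate}
\item $\chi_kH_N$ is still singular at the cut locus $q=\pi/4$, the poles $y=\pm1$. Literal smoothness and strong existence therefore need either inaccessibility plus zero Haar mass there, or a second cutoff near $\pi/4$. Your Girsanov estimate absorbs that second cutoff. Alternatively, apply your argument to the paper's $\tanh$ approximant, whose radial velocity $-U\tanh(\sin2q/\varepsilon)\cos2q/\sqrt{\cos^22q+\varepsilon^2}$ also depends on $q$ alone.
\item Your route yields only the value. It does not give the unique limiting sphere-valued closed-loop law $\mu_T^{\rm lim}$, which the paper's martingale-problem argument constructs and later reuses in the matching robust construction.
\end{enumerate}
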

\begin{proof}
Fix any $H\in\A_U$ and use its radial Brownian motion from Lemma~\ref{lem:S-radial}.  Start $R$ from the same random value $R_0=q_0$ and drive it with that Brownian motion.  On each compact interval below $\pi/4$, the common diffusion coefficient is Lipschitz and
\begin{equation*}
 b_\delta'(q)=-2\kappa\sec^2(2q)
 -2\kappa(1-\delta)\cos(4q)\le-2\kappa\delta.
\end{equation*}
Since $v_t\ge-U$, the stopped Tanaka argument used in Theorem~\ref{thm:S-multiqubit}, retaining its zero-mean stochastic integral before taking expectations, gives $R_t\le q_t$ almost surely.  Reflection is favorable: on $\{R>q\}$ the lower regulator of $R$ cannot increase, while subtracting that of $q$ is nonpositive.  The upper localization is removed by endpoint inaccessibility.  At $\delta=0$, compare only until $R$ first hits zero; thereafter $R=0\le q$.  Weak uniqueness for the scalar equation identifies the coupled reference marginal with Eq.~(\ref{eq:S-finite-reference}).  Therefore
\begin{equation}
 E_\delta^H(T;\nu_R)\ge\E\sin q_T\ge\E\sin R_T.
\label{eq:S-finite-lower}
\end{equation}

It remains to prove that the lower bound is sharp in the public control class.  In Bloch notation Eq.~(\ref{eq:S-globalH}) is
\begin{equation}
 H_N(\bm r)=U\operatorname{sgn}(y)
 \frac{z_BX-xZ}{\sqrt{x^2+z_B^2}},
\label{eq:S-Hbloch}
\end{equation}
off $y=0$ and the cut locus.  It has operator norm $U$ and gives
$\dot y=-2U\operatorname{sgn}(y)\sqrt{1-y^2}$, hence $\dot q=-U$.

A measurable interior selector and its smooth approximations provide the required public feedback.  The innovation vector fields on the pure-state Bloch sphere are, up to the sign convention for $W_Y$,
\begin{align}
 G_X&=2\sqrt\kappa(\bm e_x-x\bm r),\qquad
 G_Z=2\sqrt\kappa(\bm e_z-z_B\bm r),\nonumber\\
 G_{Y,\delta}&=2\sqrt\kappa\{\sqrt\delta(\bm e_y-y\bm r)
 +\sqrt{1-\delta}(\bm e_y\times\bm r)\}.
\label{eq:S-Blochfields}
\end{align}
Let $s=\sqrt{1-y^2}$ and, where $s>0$, use the tangent basis
$\bm n=(\bm e_y-y\bm r)/s$ and
$\bm t=(\bm e_y\times\bm r)/s$.
The covariance of Eq.~(\ref{eq:S-Blochfields}), divided by $4\kappa$, is
\begin{equation}
 A_\delta=
 \begin{pmatrix}
 y^2+\delta s^2 & \sqrt{\delta(1-\delta)}s^2\\
 \sqrt{\delta(1-\delta)}s^2 & 1+(1-\delta)s^2
 \end{pmatrix}.
\label{eq:S-ellipticmatrix}
\end{equation}
Its determinant is
$y^2+\delta s^2+(1-\delta)y^2s^2\ge\delta$ and its trace is at most three.  Thus, for $\delta>0$, the diffusion is uniformly elliptic on the two-dimensional tangent bundle of the compact sphere, equivalently in every smooth sphere chart, with tangent covariance bounded below by $4\kappa\delta/3$.

The feedback-free coefficients are smooth and Eq.~(\ref{eq:S-Hbloch}), extended by zero on $y=0$ and boundedly at the cut locus, adds a bounded Borel tangent drift.  In each smooth chart the bounded-drift uniformly elliptic martingale problem is well posed by the Zvonkin--Stroock--Varadhan theory \cite{Zvonkin1974}; uniqueness on chart overlaps gives a global sphere-valued weak law.  Its coordinate martingales have covariance $\sum_PG_PG_P^{\mathsf T}$, so martingale representation identifies it as a weak closed-loop filter law rather than only a radial construction.  This statement supplies uniqueness in law, not public strong implementability of the discontinuous selector.

For public implementation fix $\varepsilon>0$ and use the globally smooth feedback
\begin{equation}
 H_N^{(\varepsilon)}(\bm r)=
 U\tanh(y/\varepsilon)
 \frac{z_BX-xZ}{\sqrt{x^2+z_B^2+\varepsilon^2}}.
\label{eq:S-Hsmooth}
\end{equation}
It is $O(2)$-covariant and satisfies $\|H_N^{(\varepsilon)}\|_{\rm op}\le U$.  The smooth observation-form filter has a public strong solution, so $H_N^{(\varepsilon)}(\rho_t)$ is predictable and belongs to $\A_U$.  Its drift converges almost everywhere and in every finite chartwise $L^p$ to the bounded Borel drift above.  Tightness on the compact sphere, stability of uniformly elliptic martingale problems, and uniqueness of the limiting martingale problem therefore give weak path-law convergence as $\varepsilon\downarrow0$.  In particular, the terminal laws converge in trace-$W_1$ because the state space is compact.

For $\delta=0$, Eq.~(\ref{eq:S-Hbloch}) and all filter coefficients are locally Lipschitz on either open hemisphere $y>0$ or $y<0$ before the target hit; no Lipschitz claim is made across $y=0$.  The cut locus is inaccessible.  Strong existence and uniqueness therefore hold to the hit, after which the stopped $H=0$ target-circle filter has a unique tangential law, zero normal drift and diffusion, and remains on $\M$.  This stopping-time concatenation treats absorption separately from positive-$\delta$ reflection.

Finally, conjugation by a real $O\in O(2)$ rotates the pair $\operatorname{span}\{X,Z\}$ orthogonally; applying the same rotation to $(W_X,W_Z)$ preserves its law.  Moreover $OYO^{\mathsf T}=(\det O)Y$, and $W_Y\mapsto(\det O)W_Y$ preserves both pieces of Eq.~(\ref{eq:S-Blochfields}).  The Haar prior, $q$, stopping rule, $H_N$, and every $H_N^{(\varepsilon)}$ are covariant.  Uniqueness in law makes each whole terminal law $O(2)$ invariant.  The cut locus has zero mass: for the Borel limiting law this follows from $R_t<\pi/4$ almost surely, while each smooth positive-$\delta$ law has a chartwise density.  Hence nearest-point projection $\Pi$ is defined almost surely and pushes every such terminal law to $\nu_R$.

For $\delta=0$, the stopped feedback has $q_T=R_T$ and the coupling $(\rho_T,\Pi\rho_T)$ costs exactly $\sin R_T$.  For $\delta>0$, the same coupling under $H_N^{(\varepsilon)}$ costs $\E\sin q_T^{(\varepsilon)}$, which converges to $\E\sin R_T$.  Taking $\varepsilon\downarrow0$ proves the reverse inequality to Eq.~(\ref{eq:S-finite-lower}) and hence the exact infimum, without asserting that the discontinuous selector is itself a public strong feedback.
\end{proof}

The corresponding HJB cross-check is
\begin{equation}
 \partial_tV+b_\delta V_q+\frac\kappa2f_\delta V_{qq}
 -U|V_q|=0,\qquad V(T,q)=\sin q.
\label{eq:S-HJB}
\end{equation}
Order preservation makes the mild value nondecreasing, selecting drift $-U$.  For $\delta>0$ the reflected scalar control problem has $V_q(t,0)=0$ and a natural upper endpoint.  For $\delta=0$, the semigroup of the attaining stopped reference process has $V(t,0)=0$; this is not asserted as a Dirichlet boundary condition for the unrestricted control problem, whose controls could drive a state away from the target.  The positive-$\delta$ Neumann condition and terminal derivative are incompatible at the single terminal corner, so Eq.~(\ref{eq:S-HJB}) is only a mild/viscosity corroboration.  The theorem itself follows from comparison and the admissible approximating couplings above.

\section{Residual normal noise: stationary limit and crossover}

Fix $0<\delta\le1$ and abbreviate $f=f_\delta$, $b=b_\delta$.
The maximally inward proof process is the reflected diffusion
\begin{equation}
 \dd Q_t=[b(Q_t)-U]\dd t+\sqrt{\kappa f(Q_t)}\dd W_t+\dd L_t^0.
\label{eq:S-reference}
\end{equation}
The upper endpoint is inaccessible by the one-dimensional scale test: $b(q)\to-\infty$ as $q\uparrow\pi/4$, while $f(q)\to1$.
Moreover
\begin{equation}
 b'(q)=-2\kappa\sec^2(2q)-2\kappa(1-\delta)\cos(4q)
 \le-2\kappa\delta<0.
\label{eq:S-bmonotone}
\end{equation}

\begin{lemma}[stationary residual-noise comparison]
\label{lem:S-comparison}
Let $q_0$ have the Haar density in Eq.~(\ref{eq:S-haar}).
For every $H\in\A_U$, Eq.~(\ref{eq:S-reference}) can be initialized in stationarity and coupled so that $Q_t\le q_t$ for all $0\le t\le T$ almost surely.
Its stationary density is
\begin{align}
 \pi_{\delta,U}(q)&=Z_{\delta,U}^{-1}
 \frac{\cos(2q)}{f_\delta(q)^2}
 \exp\!\left[-\frac{2U}{\kappa}J_\delta(q)\right],\nonumber\\
 J_\delta(q)&=\int_0^q\frac{\dd s}{f_\delta(s)}
 =\frac1{2\sqrt\delta}
 \arctan\frac{\tan(2q)}{\sqrt\delta}.
\label{eq:S-stationary}
\end{align}
\end{lemma}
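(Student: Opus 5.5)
The proof has three parts: identify the stationary density of the reflected scalar diffusion (\ref{eq:S-reference}), show it is dominated by the Haar radial law so it can be started below $q_0$, and then reuse the pathwise comparison from Theorem~\ref{thm:S-finite}.

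\emph{Step 1: the stationary density.} The generator is $\mathcal L g=(b-U)g'+\tfrac{\kappa}{2}f g''$ on $(0,\pi/4)$, with reflection at $0$. The stationary law is the normalized zero-flux solution of
\[
\tfrac{\kappa}{2}(f\pi)'=(b-U)\pi ,
\]
which gives
\[
\pi\propto f^{-1}\exp\!\int\frac{2(b-U)}{\kappa f}.
\]
The $-U$ part produces the factor $\exp[-2UJ_\delta/\kappa]$ directly. For the $b$ part I would use
\[
\frac{\dd}{\dd q}\log\cos(2q)=-2\tan(2q),\qquad f'=4(1-\delta)\sin(2q)\cos(2q).
\]
These should give $2b/(\kappa f)=\bigl[\log\cos(2q)-\log f\bigr]'$. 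I expect this to reduce to an algebraic identity after multiplying by $f$; if it fails, the $f$-exponent will have to be adjusted. With it, $\pi\propto\cos(2q)f^{-2}e^{-2UJ_\delta/\kappa}$, matching Eq.~(\ref{eq:S-stationary}).

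Normalizability is immediate because $f\ge\delta>0$ and the density vanishes at $\pi/4$. The reflecting boundary is consistent because the zero-flux condition holds at $q=0$, and the upper endpoint is inaccessible by the scale test just stated. Uniqueness of the invariant law follows from positive recurrence of this nondegenerate one-dimensional diffusion.

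\emph{Step 2: initial coupling.} The likelihood ratio against the Haar density (\ref{eq:S-haar}) is
\[
\frac{\pi_{\delta,U}}{p_H}\propto f_\delta^{-2}e^{-2UJ_\delta/\kappa}.
\]
It is nonincreasing because $f_\delta$ and $J_\delta$ are both nondecreasing. Monotone likelihood ratio implies stochastic dominance, so the quantile coupling $Q_0=F_\pi^{-1}(F_H(q_0))$ satisfies $Q_0\le q_0$. Here $Q_0$ is a deterministic function of $q_0$, and $q_0$ lies in $\mathcal F_0^{\rm obs}$, which is the point of using a quantile rather than an independent seed.

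\emph{Step 3: pathwise order.} For a given $H\in\A_U$, take the radial Brownian motion $W$ and the representation of Lemma~\ref{lem:S-radial}, and drive (\ref{eq:S-reference}) with that $W$ from $Q_0$. On $\{Q>q\}$ the reflection regulator of $Q$ is inactive and the regulator of $q$ enters with a favorable sign. The drift difference is at most $b(Q)-b(q)\le0$ by (\ref{eq:S-bmonotone}), using $v_t\ge -U$. The Tanaka argument for $(Q-q)^+$ with localization below $\pi/4$ then gives $\E(Q_t-q_t)^+\le C\int_0^t\E(Q_s-q_s)^+\dd s$. Since $(Q_0-q_0)^+=0$, Gr\"onwall yields $Q_t\le q_t$ at rational times, hence for all $t$ by continuity. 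Inaccessibility of the upper endpoint removes the localization, and weak uniqueness of (\ref{eq:S-reference}) shows that $Q$ stays stationary under this coupling.

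I expect the main obstacle to be the local-time bookkeeping at the reflecting boundary in Step 3, specifically showing $L^0(Q-q)=0$ near $q=0$. There $f$ is Lipschitz and $\sqrt{f}\ge\sqrt\delta>0$, so $\sqrt{\kappa f}$ is Lipschitz as well. That gives the occupation-density estimate exactly as in the proof of Theorem~\ref{thm:S-multiqubit}.
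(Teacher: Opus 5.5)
Your proposal is correct and follows the paper's own proof essentially step for step. Both use the zero-flux formula for the reflected scalar diffusion, the monotone likelihood ratio against $p_H$ with the quantile coupling $Q_0=F_\pi^{-1}(F_H(q_0))$, and the localized Tanaka comparison driven by the controller's own radial Brownian motion, with inaccessibility of $\pi/4$ removing the localization. The identity you left unchecked in Step~1 does hold: after multiplying by $f_\delta$, it reduces to $1-f_\delta(q)=(1-\delta)\cos^2(2q)$.
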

\begin{proof}
The zero-flux formula for a reflected scalar diffusion is
$\pi\propto f^{-1}\exp\{\int 2(b-U)/(\kappa f)\}$.
Substitution of Eq.~(\ref{eq:S-b}) gives Eq.~(\ref{eq:S-stationary}).
Relative to $p_H(q)=2\cos(2q)$,
\begin{equation*}
 \frac{\pi_{\delta,U}(q)}{p_H(q)}
 \propto f_\delta(q)^{-2}
 \exp[-2UJ_\delta(q)/\kappa]
\end{equation*}
is decreasing because $f_\delta'\ge0$ and $J_\delta'=1/f_\delta>0$.
Monotone likelihood-ratio order therefore gives stochastic order, and the initial quantile coupling
$Q_0=F_{\pi}^{-1}(F_H(q_0))\le q_0$ uses no extra controller information.

On every compact subinterval below $\pi/4$, the common diffusion coefficient $\sqrt{\kappa f}$ is Lipschitz.
Equation~(\ref{eq:S-bmonotone}), $v_t\ge-U$, and the favorable ordering of the lower reflection terms give the standard one-dimensional comparison for a progressively measurable drift.
Localize, apply the Tanaka positive-part argument (whose local-time term vanishes for the Lipschitz common diffusion coefficient), and then remove the upper localization using inaccessibility.
This yields $Q_t\le q_t$ pathwise.
The future radial innovation is Brownian relative to the public filtration by Lemma~\ref{lem:S-innovations}; initializing $Q_0\sim\pi_{\delta,U}$ therefore preserves stationarity.
This is a standard reflected-diffusion comparison step \cite{LionsSznitman1984,KruhnerXu2023Density}; the quantum content enters through the exact radial coefficients and the common public innovation.
\end{proof}

For every target law supported on $\M$,
\begin{equation}
 W_{1,\rm tr}(\mu_T^{\delta,H},\nu)
 \ge\E\sin q_T\ge L_\delta(U),
 \qquad
 L_\delta(U):=\int_0^{\pi/4}\sin q\,\pi_{\delta,U}(q)\dd q.
\label{eq:S-Ldelta}
\end{equation}
This is a finite-time lower bound uniform over all predictable, history-dependent feedback in $\A_U$.
Theorem~\ref{thm:S-finite} strengthens its interpretation for the uniform target: ergodicity of the maximally inward positive-$\delta$ diffusion gives
\begin{equation}
 \lim_{T\to\infty}E_\delta^*(U,T)=L_\delta(U).
\label{eq:S-stationary-limit}
\end{equation}

Define
\begin{equation}
 r=\frac{U}{\kappa\sqrt\delta},\qquad
 \theta=\arctan\frac{\tan(2q)}{\sqrt\delta},\qquad
 q_\delta(\theta)=\tfrac12\arctan(\sqrt\delta\tan\theta).
\label{eq:S-theta}
\end{equation}
The common singular factor cancels between numerator and denominator in $L_\delta$, leaving
\begin{align}
 L_\delta(U)&=
 \frac{\int_0^{\pi/2}s_\delta(\theta)w_{\delta,r}(\theta)\dd\theta}
 {\int_0^{\pi/2}w_{\delta,r}(\theta)\dd\theta},\nonumber\\
 s_\delta(\theta)&=\sin q_\delta(\theta),\nonumber\\
 w_{\delta,r}(\theta)&=\cos\theta
 \sqrt{\cos^2\theta+\delta\sin^2\theta}\,e^{-r\theta}.
\label{eq:S-nonsingular}
\end{align}

\begin{proposition}[exact small-misalignment crossover]
\label{prop:S-crossover}
If $\delta_j\downarrow0$ and $U_j/(\kappa\sqrt{\delta_j})\to r\in[0,\infty)$, then
\begin{align}
 \frac{L_{\delta_j}(U_j)}{\sqrt{\delta_j}}&\longrightarrow F_2(r),\nonumber\\
 F_2(r)&=\frac12
 \frac{\int_0^{\pi/2}\sin\theta\cos\theta e^{-r\theta}\dd\theta}
 {\int_0^{\pi/2}\cos^2\theta e^{-r\theta}\dd\theta}\nonumber\\
 &=\frac{r[1+e^{-\pi r/2}]}
 {2[r^2+2(1-e^{-\pi r/2})]},\qquad F_2(0)=\frac1\pi.
\label{eq:S-F2}
\end{align}
The function $F_2$ is continuous and strictly decreasing.
There are constants $r_0,C<\infty$ such that, uniformly for $0<\delta\le1$ and $r\ge r_0$,
\begin{equation}
 \left|\frac{L_\delta(U)}{\kappa\delta/(2U)}-1\right|
 \le\frac{C}{r^2}.
\label{eq:S-uniformlayer}
\end{equation}
\end{proposition}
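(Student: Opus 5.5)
The proof works directly from the nonsingular representation in Eq.~(\ref{eq:S-nonsingular}), where the variable $\theta$ has already absorbed the singular factor. Two facts about the integrand drive everything. First, $s_\delta(\theta)=\sin q_\delta(\theta)$ with $q_\delta(\theta)=\tfrac12\arctan(\sqrt\delta\tan\theta)$. Second, the weight is $w_{\delta,r}=\cos\theta\sqrt{\cos^2\theta+\delta\sin^2\theta}\,e^{-r\theta}$, and $r$ enters only through the exponential.

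\emph{Crossover limit.} Take sequences $\delta_j\downarrow0$ and $r_j\to r$. Pointwise on $[0,\pi/2)$ we have $s_{\delta_j}(\theta)/\sqrt{\delta_j}\to\tfrac12\tan\theta$, and $w_{\delta_j,r_j}\to\cos^2\theta\,e^{-r\theta}$. To justify passing to the limit, use the domination $\sin q\le q\le\tfrac12\sqrt\delta\tan\theta$. It gives
\begin{equation*}
 \frac{s_\delta w_{\delta,r}}{\sqrt\delta}\le\tfrac12\tan\theta\,\cos\theta\sqrt{\cos^2\theta+\sin^2\theta}=\tfrac12\sin\theta,
\end{equation*}
which is integrable. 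The denominator converges because its integrand is bounded by $\cos\theta$. Dominated convergence therefore gives the ratio $F_2(r)$.

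The closed form follows from the elementary Laplace-type integrals
\begin{align*}
 \int_0^{\pi/2}\sin\theta\cos\theta\,e^{-r\theta}\dd\theta&=\frac{1+e^{-\pi r/2}}{r^2+4},\\
 \int_0^{\pi/2}\cos^2\theta\,e^{-r\theta}\dd\theta&=\frac{r^2+2(1-e^{-\pi r/2})}{r(r^2+4)},
\end{align*}
obtained by writing $\sin 2\theta$ and $\cos 2\theta$ as exponentials. Taking the limit $r\to0$ gives $F_2(0)=1/\pi$. Continuity is immediate from the formula.

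\emph{Strict monotonicity.} Write $F_2(r)$ as $\tfrac12\E_r[\tan\theta]$ under the probability density proportional to $\cos^2\theta\,e^{-r\theta}$. Differentiating in $r$ gives $-\tfrac12\operatorname{Cov}_r(\tan\theta,\theta)$. This covariance is strictly positive because both functions are strictly increasing, by the same independent-copy argument used for Eq.~(\ref{eq:S-L-monotone}).

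\emph{Uniform boundary layer.} This step is the main obstacle, because the estimate must hold uniformly in $\delta\in(0,1]$. Split both integrals at $\theta=\pi/4$.

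On $[\pi/4,\pi/2]$, both the numerator and denominator contributions are $O(e^{-\pi r/4})$, uniformly in $\delta$. This uses $s_\delta\le1$ and $w\le e^{-r\theta}$. The denominator itself is of order $1/r$, so these pieces contribute a relative error $O(r e^{-\pi r/4})=O(r^{-2})$.

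On $[0,\pi/4]$, substitute $\theta=x/r$ and Taylor expand. We have
\begin{equation*}
 q_\delta(\theta)=\tfrac12\sqrt\delta\,\theta\,[1+O(\theta^2)],
\end{equation*}
with the error uniform in $\delta\le1$, since $\sqrt\delta\tan\theta\le\tan\theta$ and $\tan\theta=\theta+O(\theta^3)$. Also $\sin q=q[1+O(q^2)]$ with $q^2\le\theta^2$. Finally $\sqrt{\cos^2\theta+\delta\sin^2\theta}=1+O(\theta^2)$ and $\cos\theta=1+O(\theta^2)$, again uniformly in $\delta$.

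With these expansions the ratio becomes
\begin{equation*}
 \frac{\tfrac12\sqrt\delta\int\theta e^{-r\theta}[1+O(\theta^2)]\dd\theta}{\int e^{-r\theta}[1+O(\theta^2)]\dd\theta}.
\end{equation*}
Because $\int_0^\infty\theta^k e^{-r\theta}\dd\theta=k!/r^{k+1}$, this equals
\begin{equation*}
 \tfrac12\sqrt\delta\,r^{-1}[1+O(r^{-2})]=\frac{\kappa\delta}{2U}[1+O(r^{-2})].
\end{equation*}
Truncating the $[0,\pi/4]$ integrals to $[0,\infty)$ costs only exponentially small errors. Choosing $r_0$ so that the denominator's relative correction is below $1/2$ then gives Eq.~(\ref{eq:S-uniformlayer}) with an absolute constant $C$.

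The delicate point is that every $O(\theta^2)$ remainder must be checked to be $\delta$-independent. The cancellation of the factor $\sqrt\delta$ between $s_\delta$ and the leading term has to be exact, not asymptotic. This is why I would keep the bound $q\le\tfrac12\sqrt\delta\tan\theta$ explicit and avoid expanding in $\delta$ anywhere.
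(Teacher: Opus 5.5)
Your route is the same as the paper's: dominated convergence in the $\theta$ representation, elementary integrals for the closed form, the covariance identity for strict monotonicity, and an inner Laplace region plus an exponentially small outer region for Eq.~(\ref{eq:S-uniformlayer}). Your dominating function $\tfrac12\sin\theta$, obtained from $\sin q\le q\le\tfrac12\sqrt\delta\tan\theta$, is cruder than the paper's $\tfrac1{\sqrt2}\sin\theta\cos\theta$ but adequate for dominated convergence. Your inner expansions are also correctly uniform in $\delta$.

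\textbf{The gap.} It lies in the outer region of the uniform boundary-layer estimate, which is exactly where uniformity in $\delta$ must be secured.

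\emph{What goes wrong.} You bound the outer numerator using $s_\delta\le1$. This gives only $\int_{\pi/4}^{\pi/2}s_\delta w_{\delta,r}\,\dd\theta=O(e^{-\pi r/4})$, with no factor of $\sqrt\delta$. That contribution must be compared with the leading numerator $\sqrt\delta/(2r^2)$, not with the denominator $\sim1/r$. The induced relative error is therefore $O(r^2e^{-\pi r/4}/\sqrt\delta)$. At any fixed $r\ge r_0$ this is unbounded as $\delta\downarrow0$, so your claimed relative error $O(re^{-\pi r/4})$ does not follow, and the constant $C$ would depend on $\delta$.

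\emph{The repair.} Reuse your own domination bound without discarding the exponential: $s_\delta w_{\delta,r}\le\tfrac12\sqrt\delta\,\sin\theta\,e^{-r\theta}$. The outer numerator is then at most $\tfrac12\sqrt\delta\,e^{-\pi r/4}/r$. Its ratio to the leading numerator is $O(re^{-\pi r/4})=O(r^{-2})$, uniformly in $0<\delta\le1$. This is what the paper does: it keeps $e^{-r\theta}$ in its bound $\frac{s_\delta}{\sqrt\delta}w_{\delta,r}\le\frac1{\sqrt2}\sin\theta\cos\theta\,e^{-r\theta}$ to get an outer numerator $O(\sqrt\delta\,e^{-r\theta_0})$. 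With this change the rest of your argument goes through.
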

\begin{proof}
For fixed $\theta<\pi/2$,
$s_\delta(\theta)/\sqrt\delta\to\tfrac12\tan\theta$ and
$w_{\delta,r}\to\cos^2\theta e^{-r\theta}$.
A uniform dominating bound follows by writing $a=\arctan(\sqrt\delta\tan\theta)$.  For $0<\theta<\pi/2$,
\begin{equation*}
 \frac{s_\delta(\theta)}{\sqrt\delta}
 w_{\delta,r}(\theta)
 =\sin\theta\cos\theta\,
 \frac{\sin(a/2)}{\sin a}\,e^{-r\theta}
 \le\frac1{\sqrt2}\sin\theta\cos\theta,
\end{equation*}
because $\sin(a/2)/\sin a=[2\cos(a/2)]^{-1}\le1/\sqrt2$ for $0\le a\le\pi/2$.  The endpoint values follow by continuity, and the denominator is dominated by $\cos\theta$.  Dominated convergence therefore proves Eq.~(\ref{eq:S-F2}).
Writing $F_2$ as one half the expectation of $\tan\theta$ under the normalized denominator measure gives
$F_2'(r)=-\tfrac12\operatorname{Cov}_r(\tan\theta,\theta)<0$.
The closed form follows by integrating numerator and denominator separately.
For numerical evaluation near $r=0$, the denominator uses \texttt{expm1} or the continuous series rather than subtracting nearly equal exponentials.

For Eq.~(\ref{eq:S-uniformlayer}), split Eq.~(\ref{eq:S-nonsingular}) at a fixed small $\theta_0$.
Uniformly in $0<\delta\le1$,
\begin{equation*}
 w_{\delta,r}(\theta)=e^{-r\theta}[1+O(\theta^2)],
 \qquad
 s_\delta(\theta)=\tfrac12\sqrt\delta\,\theta[1+O(\theta^2)]
\end{equation*}
on the inner interval.  The proved integrable bound above makes the outer numerator $O(\sqrt\delta e^{-r\theta_0})$, while the denominator outer part is $O(e^{-r\theta_0})$; hence both are uniform in $\delta$.
The change $y=r\theta$ gives denominator $r^{-1}[1+O(r^{-2})]$ and numerator $\sqrt\delta(2r^2)^{-1}[1+O(r^{-2})]$.
Since $\sqrt\delta/r=\kappa\delta/U$, their ratio is Eq.~(\ref{eq:S-uniformlayer}).
\end{proof}

For fixed $\delta>0$, Eq.~(\ref{eq:S-uniformlayer}) implies the necessary cost
\begin{equation}
 \C_\delta(\epsilon;\nu)\ge
 \frac{\kappa\delta}{2\epsilon}[1-o(1)].
\label{eq:S-costlower}
\end{equation}
More generally, if $\C_{\delta(\epsilon)}(\epsilon;\nu)=O(S)$ with
$S=\sqrt{\log(1/\epsilon)}$, monotonicity of $L_\delta$ and the uniform expansion force
\begin{equation}
 \delta(\epsilon)=O(\epsilon S).
\label{eq:S-necessarywindow}
\end{equation}
The infimum in Eq.~(\ref{eq:S-cost}) need not be attained.
For this necessity argument choose, for each $\epsilon$, an admissible near-minimizer with peak at most $\C_{\delta(\epsilon)}(\epsilon;\nu)+\epsilon$ and apply the all-controller bound before taking the limit.

The stationary limit alone does not determine a finite-time value, a distinction now resolved by Eq.~(\ref{eq:S-finite-optimum}).
At $\delta=0$ and $H=0$, Eq.~(\ref{eq:S-y}) gives a multiplicative equation for $y_t$ whose sign is preserved.
Taking the expectation of $|y_t|$ and using $y_0\sim\mathrm{Unif}[-1,1]$ yields
\begin{equation}
 \E|y_T|=\frac12e^{-4\kappa T}.
\label{eq:S-zerocontroly}
\end{equation}
Since $|y|=\sin(2q)\le2\sin q$, every target law on $\M$ obeys
\begin{equation}
 W_{1,\rm tr}(\mu_T^{0,0},\nu)\ge\E\sin q_T
 \ge\frac14e^{-4\kappa T}>0.
\label{eq:S-zerocontrolW}
\end{equation}
Thus $L_\delta(0)\to0$ is a statement about this lower certificate, not a fixed-$T$ zero-control feasibility transition and not a counterexample to the peak-cost theorem.

\section{Tangent capture, covariance, and the uniform law}

For $\delta=0$, use Eq.~(\ref{eq:S-globalH}) until
$\tau_0=\inf\{t:q_t=0\}$ and set $H=0$ afterward.
Before the hit,
\begin{equation*}
 q_{t\wedge\tau_0}=q_0-U(t\wedge\tau_0)
 +\int_0^{t\wedge\tau_0}b_0(q_s)\dd s+M_t,
\end{equation*}
where $b_0\le0$ and
$[M]_t=\int_0^{t\wedge\tau_0}\kappa\sin^2(2q_s)\dd s\le\kappa t$.
On $\{\tau_0>t_c\}$,
\begin{equation*}
 M_{t_c}\ge Ut_c-q_0\ge Ut_c-\pi/4.
\end{equation*}
For $Ut_c>\pi/4$, the exponential martingale inequality gives the nontrivial estimate; adjoining the trivial bound one for smaller budgets yields the all-budget form
\begin{equation}
 \Prob(\tau_0>t_c)\le
 \exp\!\left[-\frac{(Ut_c-\pi/4)_+^2}{2\kappa t_c}\right].
\label{eq:S-capture}
\end{equation}
At $q=0$, both $b_0$ and the radial diffusion vanish, so stopping $H_N$ makes the target circle invariant.

It remains to identify the law along the circle.
For a real orthogonal matrix $O$, the pair of traceless real Pauli operators $(X,Z)$ rotates under $P\mapsto OPO^{\mathsf T}$, while $Y\mapsto(\det O)Y$.
In the simultaneous continuous limit, the independent pair $(W_X,W_Z)$ is rotationally invariant.
For $\det O=-1$, the $\delta=0$ $Y$ term is the random-unitary innovation and transforms by $W_Y\mapsto-W_Y$, which leaves its law unchanged.
Equation~(\ref{eq:S-globalH}) obeys
$H_N(O\psi)=OH_N(\psi)O^{\mathsf T}$, and both $q$ and the event $\{\tau_0\le t_c\}$ are invariant.
Thus the filter, Haar prior, hitting event, and controller are jointly $O(2)$ covariant.
The successful terminal measure is an $O(2)$-invariant subprobability measure supported on $\mathbb{RP}^1$.
After division by its success probability, transitivity of the $O(2)$ action makes its normalized law the unique invariant probability $\nu_R$.

The terminal law has the form
$(1-r)\nu_R+r\mu_{\rm fail}$ with $r\le\Prob(\tau_0>T)$.
Coupling the common $(1-r)\nu_R$ mass identically and using the unit trace diameter on the remainder gives
\begin{equation*}
 W_{1,\rm tr}(\mu_T^{0,H_N},\nu_R)\le r.
\end{equation*}
Taking $t_c=T$ in Eq.~(\ref{eq:S-capture}) and requiring $r\le\epsilon$ gives the explicit admissible peak
\begin{equation*}
 U=\frac{\pi}{4T}+\sqrt{\frac{2\kappa}{T}\log\frac1\epsilon},
\end{equation*}
which proves the tangent upper bound.

\section{Matching robust construction for the uniform target}

First work with the unique limiting weak law identified in Theorem~\ref{thm:S-finite}, whose radial process has velocity $-U$ for $q>0$ and a nonsticky reflecting boundary.  Write its terminal state law as $\mu_T^{\rm lim}$.  Moment estimates below concern this law; admissible public policies will be supplied by the smooth feedbacks in Eq.~(\ref{eq:S-Hsmooth}).
For $0<\delta\le1$, set
\begin{equation}
 q_*:=\frac{\kappa\delta}{U},
 \qquad \tau_*:=\inf\{t:q_t\le q_*\}.
\label{eq:S-qstar}
\end{equation}
Because $b_\delta(q)\le0$ for the qubit, on $\{\tau_*>t_c\}$ the stopped martingale of Eq.~(\ref{eq:S-radial}) must exceed $Ut_c-\pi/4$ and has quadratic variation at most $\kappa t_c$.
Hence
\begin{equation}
 \Prob(\tau_*>t_c)\le
 \exp\!\left[-\frac{(Ut_c-\pi/4)_+^2}{2\kappa t_c}\right].
\label{eq:S-tubecapture}
\end{equation}

After the hit, compare the same maximal-inward process, started at $q\le q_*$, with the diffusion having the same coefficients and a reflecting lower boundary at $q_*$.
Its invariant law is the truncation
$\bar\pi_{\delta,U}^{(*)}=\pi_{\delta,U}(\cdot\mid q\ge q_*)$.
Two applications of one-dimensional reflection comparison---first from the hit point to $q_*$, then from $q_*$ to the stationary truncated process---give
\begin{equation}
 \sup_{s\ge0}\sup_{q_{\rm in}\le q_*}
 \E_{q_{\rm in}}\sin q_s
 \le\int_{q_*}^{\pi/4}\sin q\,\bar\pi_{\delta,U}^{(*)}(\dd q).
\label{eq:S-truncatedorder}
\end{equation}
The transformation in Eq.~(\ref{eq:S-theta}), followed by $y=r\theta$, gives constants $B,r_1<\infty$ such that
\begin{equation}
 \int_{q_*}^{\pi/4}\sin q\,\bar\pi_{\delta,U}^{(*)}(\dd q)
 \le B\frac{\kappa\delta}{U},
 \qquad r\ge r_1,
\label{eq:S-truncatedmoment}
\end{equation}
uniformly over $0<\delta\le1$.
The leading coefficient is
$\tfrac12\Gamma(2,2)/\Gamma(1,2)=3/2$.

The Haar prior, $q$, the hit event, measurement law, and controller are all $O(2)$-covariant.
Consequently the nearest-point projection of the successful terminal subensemble is exactly uniform on $\M$.
Couple each successful state to that projection and retain all failure runs.
Equations~(\ref{eq:S-tubecapture})--(\ref{eq:S-truncatedmoment}) then give
\begin{equation}
 W_{1,\rm tr}(\mu_T^{\rm lim},\nu_R)
 \le \exp\!\left[-\frac{(Ut_c-\pi/4)_+^2}{2\kappa t_c}\right]
 +B\frac{\kappa\delta}{U}.
\label{eq:S-robustupper}
\end{equation}
For every fixed $U,\delta>0,T$, smooth public feedbacks of the same peak satisfy
$\Wone(\mu_T^{\delta,H_N^{(\varepsilon)}},\mu_T^{\rm lim})\to0$ as $\varepsilon\downarrow0$.
Choose the two terms above at most $\epsilon/3$ each, then choose the smoothing scale so that the terminal-law discrepancy is below $\epsilon/3$.  This gives an actual allowed policy of error at most $\epsilon$, and proves, for finite constants depending only on the fixed contract,
\begin{equation}
 \C_\delta(\epsilon;\nu_R)
 \le K_0+K_1\sqrt{\log(1/\epsilon)}
 +K_2\frac{\kappa\delta}{\epsilon}.
\label{eq:S-costupper}
\end{equation}
Together with Eq.~(\ref{eq:S-costlower}), this yields
$\C_\delta(\epsilon;\nu_R)=\Theta_{T,\kappa,\delta}(1/\epsilon)$ for fixed $\delta>0$.
Combining Eqs.~(\ref{eq:S-necessarywindow}) and (\ref{eq:S-costupper}) yields the matching calibration window
\begin{equation}
 \C_{\delta(\epsilon)}(\epsilon;\nu_R)=O(S)
 \quad\Longleftrightarrow\quad
 \delta(\epsilon)=O(\epsilon S).
\label{eq:S-windowiff}
\end{equation}
If the phase deviation from the tangent quadrature is $\alpha$, the definition
$e^{-i\phi_\delta}=\sqrt\delta-i\sqrt{1-\delta}$ gives $\delta=\sin^2\alpha\sim\alpha^2$ and hence
$|\alpha|=O[\epsilon^{1/2}\log^{1/4}(1/\epsilon)]$.

This theorem is the reflected/absorbed $d=2$ boundary case.  Theorem~\ref{thm:S-multiqubit} proves that the same-initial finite-time comparison and projection coupling persist for $d\ge3$; the model-independent tubular theorem addresses lower bounds when no scalar reduction is available.

\section{A target-manifold terminal-window principle}
\label{sec:S-general-window}
\label{sec:S-general-principle}

The exact radial model permits sharp constants, but the inverse-accuracy obstruction does not require a scalar reduction.  We state the geometric form used to separate that mechanism from the symmetry of $\mathbb{RP}^{d-1}$.
For a pure-state Hamiltonian path, the horizontal lift of
$\dot\psi=-iH\psi$ has Fubini--Study speed
\begin{equation}
 |\dot X_t|_{\rm FS}=\Delta_{\psi_t}H_t
 =\sqrt{\langle H_t^2\rangle-\langle H_t\rangle^2}
 \le\|H_t\|_{\rm op}.
\label{eq:S-intrinsic-speed}
\end{equation}
Because distance to a closed target is one-Lipschitz, its normal
finite-variation rate is bounded by the same metric speed.  Thus the lower
bound below applies to any actuator contract that directly bounds this
state-space rate, with $c_HU$ replaced by that intrinsic bound.  Operator
norm is the physical realization used in the quantum applications.  This
necessary speed bound is not a general controllability theorem: a matching
upper construction additionally needs a Hamiltonian right inverse for the
required vector fields, together with capture, invariance, and mixing
assumptions.
Let $(\mathcal X,g)$ be a compact $C^3$ Riemannian manifold and let $\mathcal M\subset\mathcal X$ be a compact embedded $C^3$ submanifold.  Consider continuous adapted weak solutions
\begin{equation}
 \dd X_t=\{b(X_t)+G(X_t)u_t\}\dd t+\sigma(X_t)\dd W_t,
 \qquad \|G(X_t)u_t\|_g\le c_HU,
\label{eq:S-general-SDE}
\end{equation}
where $u_t$ may be any predictable history-dependent feedback and $\sigma$ is continuous as a bundle map on a tubular neighborhood of $\mathcal M$.
Fix a smooth isometric embedding $\iota:\mathcal X\to\mathbb R^K$ and, in a tube $\mathcal T_r$ of $\mathcal M$, let
$F(x)=\iota(x)-\iota(\pi(x))$, with $\pi$ the nearest-point projection.
The following It\^o hypotheses are part of the statement.  For every admissible weak solution, the global decomposition of $\iota(X)$ on $[0,T]$ and the decomposition of $F(X)$ stopped on leaving $\mathcal T_r$ have square-integrable martingale parts.  Uniformly over every interval $[s,s+h]\subset[0,T]$ they satisfy
\begin{align}
 \operatorname{TV}(A^\iota;[s,s+h])&\le (B_0+c_HU)h,\nonumber\\
 \operatorname{TV}(A^F;[s,(s+h)\wedge\tau_r])
 &\le (B_0+c_HU)h,\label{eq:S-general-TV}\\
 \|D\iota\,\sigma\|_F^2&\le\Lambda_X\quad\text{globally},\nonumber\\
 \|DF\,\sigma\|_F^2&\le\Lambda_*
 \quad\text{on }\mathcal T_r .
\label{eq:S-general-upperQV}
\end{align}
Assume also that on the target
\begin{equation}
 \operatorname{tr}[P_N(x)\sigma(x)\sigma(x)^*P_N(x)]
 \ge\lambda_\perp>0.
\label{eq:S-general-normalQV}
\end{equation}
For $x\in\mathcal M$, $DF(x)=D\iota(x)P_N(x)$, so isometry of $\iota$ gives
$\|DF(x)\sigma(x)\|_F^2=\operatorname{tr}[P_N\sigma\sigma^*P_N]$.
Continuity and compactness then give a smaller tube in which $\|DF\sigma\|_F^2\ge\lambda_*$, with $\lambda_*\ge\lambda_\perp/2$.

\begin{lemma}[conditional terminal-window anticoncentration]
\label{lem:S-semimartingale-anticoncentration}
Let $(\mathcal F_s)_{0\le s\le h}$ be a filtered space with $\mathcal G\subseteq\mathcal F_0$, and let
$Y_s=Y_0+A_s+M_s$, where $Y_0$ is $\mathcal G$-measurable, $A_0=M_0=0$, $A$ has total variation at most $Bh$, and $M$ is a continuous martingale.  The almost-sure bracket upper bound below makes $M_h$ conditionally $L^4$:
\begin{equation*}
 \E[\operatorname{tr}\langle M\rangle_h\mid\mathcal G]\ge\lambda h,
 \qquad \operatorname{tr}\langle M\rangle_h\le\Lambda h
 \quad\text{a.s.}
\end{equation*}
Then
\begin{equation}
 \E[\|Y_h\|\mid\mathcal G]\ge c_0\sqrt{\lambda h}-Bh,
\label{eq:S-anticoncentration}
\end{equation}
where $c_0>0$ depends only on the dimension and $\Lambda/\lambda$.
\end{lemma}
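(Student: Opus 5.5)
The plan is to reduce to the anticoncentration of the martingale part, $\E[\|Y_0+M_h\|\mid\mathcal G]\ge c_0\sqrt{\lambda h}$, and then absorb the finite-variation part by the triangle inequality. Pathwise, $\|A_h\|\le\operatorname{TV}(A;[0,h])\le Bh$, so $\|Y_h\|\ge\|Y_0+M_h\|-Bh$. Taking conditional expectations gives Eq.~(\ref{eq:S-anticoncentration}) once the martingale bound holds. It therefore suffices to show that, for a $\mathcal G$-measurable vector $y=Y_0$, the quantity $\E[\|y+M_h\|\mid\mathcal G]$ is at least $c_0\sqrt{\lambda h}$ uniformly in $y$.

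For the martingale bound I would use the Paley--Zygmund, or Hölder, moment argument. I would work with $Z=y+M_h$ conditionally on $\mathcal G$, using a regular conditional distribution, and exploit $\E[M_h\mid\mathcal G]=0$, which holds because $\mathcal G\subseteq\mathcal F_0$ and $M$ is a martingale started at zero. This gives the second-moment lower bound
\begin{equation*}
 \E[\|Z\|^2\mid\mathcal G]=\|y\|^2+\E[\operatorname{tr}\langle M\rangle_h\mid\mathcal G]\ge\|y\|^2+\lambda h .
\end{equation*}
For the fourth moment, I would apply the conditional Burkholder--Davis--Gundy inequality together with the almost-sure bracket bound $\operatorname{tr}\langle M\rangle_h\le\Lambda h$. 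This yields $\E[\|M_h\|^4\mid\mathcal G]\le C_{\rm BDG}\Lambda^2h^2$, and hence
\begin{equation*}
 \E[\|Z\|^4\mid\mathcal G]\le C(\|y\|^4+\Lambda^2h^2),
\end{equation*}
with $C$ depending only on the dimension. Hölder's inequality in the form $\E\|Z\|^2\le(\E\|Z\|)^{2/3}(\E\|Z\|^4)^{1/3}$ then gives
\begin{equation*}
 \E[\|Z\|\mid\mathcal G]\ge\frac{(\E[\|Z\|^2\mid\mathcal G])^{3/2}}{(\E[\|Z\|^4\mid\mathcal G])^{1/2}} .
\end{equation*}

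The expected obstacle is uniformity in $y$, because the Hölder ratio degrades when $\|y\|$ is comparable to $\sqrt{\Lambda h}$. I would split into two cases. If $\|y\|\ge 2\sqrt{\Lambda h}\,C'$, then Jensen's inequality with the zero conditional mean gives the simpler bound $\E[\|y+M_h\|\mid\mathcal G]\ge\|y+\E[M_h\mid\mathcal G]\|=\|y\|\ge\sqrt{\lambda h}$, since $\lambda\le\Lambda$ (take $C'\ge1$). Otherwise $\|y\|^2\lesssim\Lambda h$, so the fourth moment is at most $C''\Lambda^2h^2$ and the second moment is at least $\lambda h$. The Hölder ratio is then at least $(\lambda h)^{3/2}/(C''^{1/2}\Lambda h)=c\sqrt{\lambda h}\,(\lambda/\Lambda)^{1/2}$. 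Taking $c_0$ to be the minimum of the two case constants gives a constant depending only on the dimension and $\Lambda/\lambda$.

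The remaining care is technical and conditional. The BDG step must hold conditionally on $\mathcal G$, which I would handle by applying it on $\mathcal G$-measurable events $G$ via $\E[\mathbf 1_G\,\cdot\,]$, or by noting that $M$ remains a martingale under $\Prob(\cdot\mid G)$. All estimates are pointwise in $y$ and measurable, so they pass to the conditional statement. The $L^4$ integrability follows from the almost-sure bracket bound, as the statement notes.
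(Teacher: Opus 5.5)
Your proposal is correct and follows the paper's proof. Both arguments reduce to $Z=Y_0+M_h$, get the second moment from martingale orthogonality and the conditional fourth moment from indicator-localized BDG, compare the two moments (your H\"older step in place of the paper's Paley--Zygmund), and absorb $A$ through the pathwise $\operatorname{TV}(A)\le Bh$ bound. Your case split in $\|y\|$ is unnecessary, because $\|y\|^4+\Lambda^2h^2\le 2(\Lambda/\lambda)^2(\|y\|^2+\lambda h)^2$ already makes the ratio uniform in $y$; also, your small-$\|y\|$ case actually gives the factor $\lambda/\Lambda$ rather than $(\lambda/\Lambda)^{1/2}$, which is harmless for the stated dependence of $c_0$.
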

\begin{proof}
Put $Z=Y_0+M_h$.  Conditional martingale orthogonality gives
$\E(\|Z\|^2\mid\mathcal G)\ge\|Y_0\|^2+\lambda h$.
For every $G\in\mathcal G$, $\mathbf1_GM$ is again a martingale.  Applying the ordinary fourth-moment BDG inequality to these indicator-stopped martingales and using the defining integral identity for conditional expectation yields
$\E[\|M_h\|^4\mid\mathcal G]\le C\E[(\operatorname{tr}\langle M\rangle_h)^2\mid\mathcal G]$.
Together with $\|y+m\|^4\le8\|y\|^4+8\|m\|^4$, it gives
\begin{equation*}
 \E(\|Z\|^4\mid\mathcal G)
 \le C(\Lambda/\lambda)^2
 [\|Y_0\|^2+\lambda h]^2.
\end{equation*}
Conditional Paley--Zygmund applied to $\|Z\|^2$ therefore yields
$\E(\|Z\|\mid\mathcal G)\ge c_0\sqrt{\lambda h}$.
The pathwise inequality $\|Y_h\|\ge\|Z\|-\operatorname{TV}(A)$ proves Eq.~(\ref{eq:S-anticoncentration}).
The argument permits $A$ to use the past of $M$ and hence does not impose Markov feedback.
Only the bracket upper bound is pathwise; the lower bound needed here is the displayed conditional-expectation bound, so stopping does not require a pathwise bracket lower bound.
\end{proof}

\begin{theorem}[normal quadratic variation forces inverse-accuracy peak]
\label{thm:S-general-normal}
Under Eqs.~(\ref{eq:S-general-SDE})--(\ref{eq:S-general-normalQV}), for every fixed $T>0$ there are $U_0,c>0$ such that, for every $U\ge U_0$, every admissible weak solution generated by a predictable feedback, and every initial law,
\begin{equation}
 \E d_g(X_T,\mathcal M)
 \ge c\frac{\lambda_\perp}{B_0+c_HU}.
\label{eq:S-general-distance-lower}
\end{equation}
Consequently, for every probability law $\nu$ supported on $\mathcal M$,
\begin{equation}
 W_{1,g}(\Law[X_T],\nu)
 \ge c\frac{\lambda_\perp}{B_0+c_HU},
\label{eq:S-general-W-lower}
\end{equation}
and, when $c_H>0$, the peak cost is $\Omega(1/\epsilon)$.
The constants may depend on $T$, the tube, coefficient bounds, dimension, and the normal condition number $\Lambda_*/\lambda_*$; no uniformity under a degenerating condition number is asserted.
\end{theorem}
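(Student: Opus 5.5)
The plan is to apply Lemma~\ref{lem:S-semimartingale-anticoncentration} on a terminal window $[T-h,T]$ with
\[
h=\frac{\beta\lambda_\perp}{(B_0+c_HU)^2},
\]
where the small constant $\beta$ is fixed later. Taking $U\ge U_0$ makes $h\le T$, and it also makes $h$ small compared with the time needed to cross the tube. The lemma is applied to the normal coordinate $F(X)$, conditioned on $\mathcal G=\mathcal F_{T-h}$. Since $\iota$ is an isometric embedding, $d_g(X_T,\M)\ge\|F(X_T)\|$ up to a constant inside the tube. Outside a fixed tube the distance is already bounded below by a constant. It therefore suffices to bound $\E\|F(X_{T\wedge\tau})\|$ from below, after accounting for the exit time $\tau$.

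The first step is a case split on the state at time $T-h$. Let $\mathcal T_{r'}\subset\mathcal T_r$ be the smaller tube on which $\|DF\,\sigma\|_F^2\ge\lambda_*\ge\lambda_\perp/2$.

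\emph{Case 1: $X_{T-h}$ lies outside $\mathcal T_{r'/2}$.} Then $X_T$ stays at distance at least $r'/4$, except on an event of small probability. This follows from the global bounds~(\ref{eq:S-general-TV})--(\ref{eq:S-general-upperQV}) for $\iota(X)$. The finite-variation displacement is at most $(B_0+c_HU)h=\beta\lambda_\perp/(B_0+c_HU)$, which is small for large $U$. Chebyshev on the martingale part gives a probability of order $\Lambda_Xh$. So the conditional expected distance is at least a constant, which exceeds the claimed bound for large $U$.

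\emph{Case 2: $X_{T-h}\in\mathcal T_{r'/2}$.} Apply the lemma to $Y_s=F(X_{(T-h+s)\wedge\tau_{r'}})$. The TV bound gives $B=B_0+c_HU$, and the upper bracket bound gives $\Lambda=\Lambda_*$. For the lower bracket bound I need
\[
\E\bigl[\operatorname{tr}\langle M\rangle_h\mid\mathcal G\bigr]\ge\lambda_*\,\E\bigl[(h\wedge(\tau_{r'}-(T-h)))\mid\mathcal G\bigr].
\]
This is at least $\lambda_*h/2$ once the conditional probability of exiting $\mathcal T_{r'}$ within time $h$ is at most $1/2$. That again follows from the TV and Chebyshev bounds, uniformly in the control. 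With this, the lemma yields
\[
\E\bigl[\|Y_h\|\mid\mathcal G\bigr]\ge c_0\sqrt{\lambda_*h}-(B_0+c_HU)h
=\frac{\sqrt{\lambda_\perp}\bigl(c_0\sqrt{\beta/2}-\beta\sqrt{\lambda_\perp}\bigr)}{B_0+c_HU}.
\]
Choosing $\beta$ small makes the bracket a positive multiple of $\sqrt{\lambda_\perp}$, so the right side is of order $\lambda_\perp/(B_0+c_HU)$.

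One caveat concerns $c_0$. It depends on $\Lambda/\lambda=2\Lambda_*/\lambda_\perp$, which is exactly the condition-number dependence the theorem permits.

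Next I must convert the bound on the stopped value into a bound on the terminal distance. On $\{\tau_{r'}\le T\}$ the stopped value has norm of order $r'$, which already dominates. On the complementary event the stopped value equals $F(X_T)$. In either event $\|Y_h\|$ is dominated by a constant multiple of $d_g(X_T,\M)$, apart from an excursion back into the tube after exit. That excursion costs a probability I will bound by the same Chebyshev estimate. Taking expectations over $\mathcal G$ and combining the two cases gives~(\ref{eq:S-general-distance-lower}).

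Equation~(\ref{eq:S-general-W-lower}) then follows from the support inequality~(\ref{eq:S-support}) in the metric $d_g$. The $\Omega(1/\epsilon)$ cost follows by inversion: error at most $\epsilon$ forces $B_0+c_HU\ge c\lambda_\perp/\epsilon$.

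The main obstacle is Case 2, specifically making every exit and stopping estimate uniform over arbitrary history-dependent controls. The lemma needs only a conditional-expectation lower bracket, and I would feed it the stopped process rather than attempt pathwise lower bounds. The delicate bookkeeping is to show that the exit-probability and return errors are $o(1/U)$, or at least a small fraction of the main term. For that I will use the fact that $h=O(U^{-2})$, so these errors are of order $\Lambda_Xh=O(U^{-2})$.
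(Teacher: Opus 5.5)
Your proposal follows the paper's proof essentially step for step. It uses the terminal window $h\asymp\lambda/B^2$ and the far/near split at time $T-h$, with a global Chebyshev estimate in the far case. In the near case it applies the conditional anticoncentration lemma to the tube-stopped normal coordinate, secures the lower bracket by an $O(h)$ exit probability, and then pays an $O(h)=O(B^{-2})$ correction for exits. The paper handles that correction by subtracting $c_+r_o\Prob(\text{exit})$ and using nonnegativity of the distance, which avoids your separate return-excursion bookkeeping.

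One arithmetic fix is needed: with $h=\beta\lambda_\perp/B^2$ and $\lambda_*\ge\lambda_\perp/2$, you get $c_0\sqrt{\lambda_*h}-Bh\ge\lambda_\perp(c_0\sqrt{\beta/2}-\beta)/B$, not the $\sqrt{\lambda_\perp}$ expression you displayed. So $\beta<c_0^2/2$ suffices independently of $\lambda_\perp$, matching the paper's $\theta<(c_0/2)^2$.
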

\begin{proof}
Choose $0<r_o<r$ inside the smaller tube on which $\|DF\sigma\|_F^2\ge\lambda_*$, and choose
$0<\rho<\sup_{x\in\mathcal X}d_g(x,\mathcal M)$ so that
$c_+\rho\le c_-r_o/2$, where
$c_-d_g(x,\mathcal M)\le\|F(x)\|\le c_+d_g(x,\mathcal M)$ in the tube.
Set $B=B_0+c_HU$ and take the terminal window
\begin{equation}
 h=\theta\lambda_*/B^2\le T,
 \qquad 0<\theta<(c_0/2)^2.
\label{eq:S-terminal-window}
\end{equation}
Condition on $\mathcal F_{T-h}$.
If $d_g(X_{T-h},\mathcal M)\ge\rho$, reaching the $\rho/2$ tube requires a fixed displacement in the embedded coordinates.  Indeed, compactness, injectivity of $\iota$, and nonemptiness of the displayed compact sets give
\begin{equation*}
 \eta_\rho:=\min_{\substack{d_g(x,\mathcal M)\ge\rho\\
 d_g(y,\mathcal M)\le\rho/2}}
 \|\iota(x)-\iota(y)\|>0.
\end{equation*}
The global $A^\iota$ and $\Lambda_X$ bounds in Eqs.~(\ref{eq:S-general-TV})--(\ref{eq:S-general-upperQV}), followed by Chebyshev, give probability $O(h)$ for a displacement at least $\eta_\rho$; hence the conditional terminal distance is at least $(\rho/2)[1-O(h)]$.

If $d_g(X_{T-h},\mathcal M)<\rho$, let
$\tau_o$ be the first exit from the $r_o$ tube and $\tau=\tau_o\wedge T$.
For large $U$, $Bh\le c_-r_o/4$; exit then requires the stopped normal martingale to move by at least $c_-r_o/4$.
Doob's $L^2$ inequality gives
\begin{equation*}
 \Prob(\tau_o<T\mid\mathcal F_{T-h})
 \le \frac{16\Lambda_*}{c_-^2r_o^2}h=:C_oh.
\end{equation*}
For the shifted stopped martingale $\widehat M^F$,
\begin{align*}
 \E[\operatorname{tr}\langle\widehat M^F\rangle_h
       \mid\mathcal F_{T-h}]
 &\ge\lambda_*h\Prob(\tau_o\ge T\mid\mathcal F_{T-h})\\
 &\ge\lambda_*h(1-C_oh).
\end{align*}
Increase $U_0$ until $C_oh\le1/2$ and apply Lemma~\ref{lem:S-semimartingale-anticoncentration} with $\lambda=\lambda_*/2$.
Using Eq.~(\ref{eq:S-terminal-window}) gives, conditionally on $\mathcal F_{T-h}$,
$\E\|F(X_\tau)\|\ge c_1\lambda_*/B$.
Let $E=\{\tau_o\ge T\}$.  Continuity puts the stopped point on the $r_o$-tube boundary on $E^c$, so $\|F(X_\tau)\|\le c_+r_o$ there, whereas $X_\tau=X_T$ on $E$.  Consequently,
\begin{align}
 \E[\|F(X_T)\|\mathbf1_E\mid\mathcal F_{T-h}]
 &=\E[\|F(X_\tau)\|\mathbf1_E\mid\mathcal F_{T-h}]\nonumber\\
 &\ge \frac{c_1\lambda_*}{B}-c_+r_oC_oh.
\label{eq:S-remove-exit}
\end{align}
The last term is $O(B^{-2})$, so metric equivalence on $E$ and nonnegativity on $E^c$ yield the conditional terminal bound $c_2\lambda_*/B$ for all sufficiently large $U$.
Combining the near and far cases, integrating the conditioning, and using $\lambda_*\ge\lambda_\perp/2$ proves Eq.~(\ref{eq:S-general-distance-lower}).
For every coupling to a law supported on $\mathcal M$, the transport distance dominates $d_g(X_T,\mathcal M)$ pointwise, proving Eq.~(\ref{eq:S-general-W-lower}).
\end{proof}

Positive normal trace alone is not a substitute for the stopped-It\^o and uniform variation/bracket assumptions above.  For pure states, $(2/\pi)d_{\rm FS}\le d_{\rm tr}\le d_{\rm FS}$, so the conclusion transfers to the trace-Wasserstein metric up to a fixed factor.

The opposite direction needs more structure than vanishing normal variance.
Fix an initial law $\mu_0$ and a three-stage allocation $T=t_c+t_m+t_p$.
Suppose the target is compact, connected, and boundaryless; a record-observable capture coordinate has a stopping time $\tau$ with
\begin{equation}
 \Prob(\tau>t_c)
 \le\exp\!\left[-\frac{[(a_{\rm cap}U-\beta)t_c-R_{\max}]_+^2}
 {2\Lambda t_c}\right];
\label{eq:S-general-capture-tail}
\end{equation}
after a hit, an allowed baseline keeps the path in $\mathcal M$ until the deterministic time $t_c$; at $t_c$, every successful path is restarted under the same fixed uniformly elliptic Markov diffusion for time $t_m$.
Assume that this fixed semigroup maps every input probability law to a density $\rho_0$ satisfying common bounds
$\rho_0\ge m_m>0$ and $\|\rho_0\|_{C^{2,\gamma}}\le M_m$.
Assume also a Hamiltonian right-inverse map $R$ for tangent fields such that
$\|R(v)\|_{\rm op,\infty}\le C_R\|v\|_{C^{1,\gamma}}$, with $C_R$ uniform over the construction and independent of $\epsilon$.
For a target density $p\in C^{2,\gamma}(\mathcal M)$ satisfying $p\ge m_p>0$, the common positive mixing interval therefore maps the arbitrary successful law at $t_c$ to a uniformly controlled density $\rho_0$.
Interpolating $\rho_s=(1-a_s)\rho_0+a_sp$ and solving
\begin{equation}
 \operatorname{div}(\rho_s\nabla\phi_s)
 =\mathcal L_{\mathcal M}^*\rho_s-\partial_s\rho_s
\label{eq:S-weighted-transport}
\end{equation}
produces a uniformly $C^{1,\gamma}$ tangent field independent of $\epsilon$.
Let $\C_{\mathsf B}(\epsilon;p,\mu_0,T)$ denote the infimum peak over policies satisfying this capture, deterministic-time synchronization, invariant baseline, common mixing, and tangent-right-inverse contract, with every trajectory retained.
Then
\begin{proposition}[conditional capture--transport upper bound]
\label{prop:S-general-tangent}
Under the stated contract,
\begin{equation}
 \C_{\mathsf B}(\epsilon;p,\mu_0,T)
 \le\max\!\left\{U_{\rm succ},
 \frac1{a_{\rm cap}}\left[
 \beta+\frac{R_{\max}}{t_c}
 +\sqrt{\frac{2\Lambda}{t_c}\log\frac D\epsilon}
 \right]\right\},
\label{eq:S-general-tangent-upper}
\end{equation}
where $D=\operatorname{diam}(\mathcal X)$ and $U_{\rm succ}$ is independent of $\epsilon$.
Thus the cost is $O[\sqrt{\log(1/\epsilon)}]$.
\end{proposition}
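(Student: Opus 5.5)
The plan is to build one admissible policy in three consecutive stages on $[0,T]$: capture on $[0,t_c]$, fixed mixing on $(t_c,t_c+t_m]$, and tangent transport on $(t_c+t_m,T]$. The error is bounded by splitting the terminal law into successful and failed runs. A failed run is charged at most the diameter $D$ in any coupling. For successful runs we will show the terminal law lies within a small, $\epsilon$-independent error of $p$ (ideally exactly $p$). The bound (\ref{eq:S-general-tangent-upper}) then comes from taking the larger of an $\epsilon$-independent budget $U_{\rm succ}$ for the transport stage and the budget that makes the capture-failure term at most $\epsilon$.

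\textbf{Capture stage.} Set $U_{\rm cap}$ equal to the second entry of the maximum. Then $(a_{\rm cap}U-\beta)t_c-R_{\max}\ge\sqrt{2\Lambda t_c\log(D/\epsilon)}$. Substituting into the tail bound (\ref{eq:S-general-capture-tail}) gives $\Prob(\tau>t_c)\le\epsilon/D$.

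Write the terminal law as $(1-r)\mu_{\rm succ}+r\mu_{\rm fail}$ with $r\le\epsilon/D$. Couple the common mass identically and transport the remainder at cost at most $D$. As in the $\delta=0$ tangent capture argument, this gives $W_1\le(1-r)W_1(\mu_{\rm succ},p)+D r$. It therefore suffices to make $\mu_{\rm succ}$ equal to $p$, or to absorb any residual into $\epsilon$.

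\textbf{Mixing and transport stages.} After the hit, the allowed baseline keeps each successful path in $\mathcal M$ until $t_c$. At $t_c$ all successful paths restart under the fixed elliptic semigroup for time $t_m$. By hypothesis this yields a density $\rho_0$ with $\rho_0\ge m_m$ and $\|\rho_0\|_{C^{2,\gamma}}\le M_m$, whatever the successful law at $t_c$ was.

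On the last interval of length $t_p$, take the interpolation $\rho_s=(1-a_s)\rho_0+a_sp$ with a smooth $a_s$ running from $0$ to $1$. This family is bounded below by $\min\{m_m,m_p\}$ and is $C^{2,\gamma}$-bounded, uniformly in $s$. Schauder theory for the weighted elliptic equation (\ref{eq:S-weighted-transport}) on the compact boundaryless $\mathcal M$ then applies. The right-hand side has zero mean, so it is solvable, and it gives $\|\nabla\phi_s\|_{C^{1,\gamma}}\le K(m_m,m_p,M_m,\|p\|_{C^{2,\gamma}},t_p)$.

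Adding the tangent drift $v_s=-\nabla\phi_s$ to the generator $\mathcal L_{\mathcal M}$ makes $\rho_s$ solve the controlled Fokker--Planck equation. Uniqueness for that equation with $C^{1,\gamma}$ drift then forces the terminal successful law to be exactly $p$. The drift is realized by the Hamiltonian $R(v_s)$, whose peak is at most $C_RK=:U_{\rm succ}$, independent of $\epsilon$. With $U=\max\{U_{\rm succ},U_{\rm cap}\}$, larger budgets are still admissible, since $\A_U$ is nested and the capture bound only improves. The total error is therefore at most $r D\le\epsilon$, which proves (\ref{eq:S-general-tangent-upper}). Since $\log(D/\epsilon)\sim\log(1/\epsilon)$, the cost is $O[\sqrt{\log(1/\epsilon)}]$.

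\textbf{Main obstacle.} The key point is showing that the transport field is \emph{uniformly} $C^{1,\gamma}$, independent of both $\epsilon$ and the unknown successful law at $t_c$. This is exactly why the common mixing step is needed: it replaces an arbitrary (possibly singular) law by a density in a fixed compact class. The hard part is then the weighted Schauder estimate with lower bound and $C^{2,\gamma}$ control of $\rho_s$ and of $\partial_s\rho_s=\dot a_s(p-\rho_0)$.

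A second, subtler point is that the restart at the deterministic time $t_c$ must not depend on hitting times. Otherwise the successful subensemble would not share one law at the start of mixing. I would handle this first by synchronizing on the event $\{\tau\le t_c\}$, which is record-observable, and by noting that the linear semigroup maps mixtures to mixtures. The density bounds hold for every input law, so they also hold for the conditioned successful law.
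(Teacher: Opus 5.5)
Your proposal is correct and follows the same route as the paper's proof. The capture tail at the stated peak bounds the failure mass by $\epsilon/D$, charged at cost $D$; the deterministic-time restart under the common elliptic semigroup gives a uniformly controlled $\rho_0$ whatever the hit time; Schauder estimates for Eq.~(\ref{eq:S-weighted-transport}) give an $\epsilon$-independent $C^{1,\gamma}$ field realized by the right inverse; and the stages combine by taking the maximum peak. Two small fixes: with Eq.~(\ref{eq:S-weighted-transport}) as written the drift must be $v_s=+\nabla\phi_s$, since your minus sign does not reproduce the prescribed $\partial_s\rho_s$, and $U_{\rm succ}$ should also include the peak of the baseline Hamiltonian that keeps paths on $\mathcal M$ during transport, added by the norm triangle inequality as the paper does.
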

\begin{proof}
Equation~(\ref{eq:S-general-capture-tail}) makes the failure mass at most $\epsilon/D$ at the stated peak.
On success, invariance preserves the state on $\mathcal M$ until $t_c$.  Restarting every successful trajectory at that deterministic time under the same elliptic semigroup for $t_m$ gives the stated common lower and $C^{2,\gamma}$ bounds, regardless of its hit time or location.  Schauder estimates for Eq.~(\ref{eq:S-weighted-transport}) then give a common $C^{1,\gamma}$ bound on the required tangent field during the final interval $t_p$.
The uniform Hamiltonian right inverse realizes that field while the baseline control preserves $\mathcal M$, so its contribution to $U_{\rm succ}$ is independent of $\epsilon$.
Conditional on $\{\tau\le t_c\}$, the success law is exactly $p$; the unconditional terminal law retains the failure mixture, whose arbitrary coupling costs at most $D\Prob(\tau>t_c)$.
Sequential stages take the maximum of their peaks; simultaneously applied baseline and transport Hamiltonians are included in $U_{\rm succ}$ by the norm triangle inequality.
\end{proof}
The conclusion applies under the displayed capture, invariance, ellipticity, and transport hypotheses; the $\mathbb{RP}^{d-1}$ construction in Theorem~\ref{thm:S-multiqubit} verifies them explicitly.

\section{A nonmaximal K\"ahler-angle quantum realization}

The real-state targets above are maximally real submanifolds.
To test the mechanism beyond that geometry, consider two spins
$j_+>j_->0$ and the homogeneous state family
\begin{equation}
 \M_{j_+,j_-}=\left\{
 [|j_+,\bm n\rangle\otimes|j_-,-\bm n\rangle]:
 \bm n\in S^2\right\}.
\label{eq:S-slant-family}
\end{equation}
Here $|j,\bm n\rangle$ is a spin-coherent state and the two factors are rotated together by
\begin{equation*}
 K_a=J_a^{(+)}\otimes I+I\otimes J_a^{(-)},
 \qquad a=x,y,z.
\end{equation*}
The orbit is a two-sphere in projective Hilbert space.
We use ``slant'' in the standard sense that the angle between the ambient complex structure applied to a nonzero tangent vector and the tangent plane is constant \cite{Chen1990SlantImmersions}.
Throughout this section the Fubini--Study real metric is the horizontal Hilbert metric already used above,
$g_{\rm FS}(u,v)=\operatorname{Re}\langle u,v\rangle$.
The normal quadratic-variation rate means
$\sum_a\|P_\perp\Xi_a\|_{\rm FS}^2$ for the horizontal stochastic vectors $\Xi_a$; no additional factor of two or four is absorbed into $q_\perp$.

\begin{lemma}[proper-slant spin realization]
\label{lem:S-slant-spin}
Monitor the three operators $K_a$ with either
\begin{equation}
 c_{a,N}=\sqrt\gamma K_a,
 \qquad c_{a,T}=i\sqrt\gamma K_a.
\label{eq:S-slant-phases}
\end{equation}
The two instruments have the same discarded-record generator
$\gamma\sum_a\D[K_a]$.
The family in Eq.~(\ref{eq:S-slant-family}) has constant K\"ahler angle
\begin{equation}
 \cos\theta_K=\frac{j_+-j_-}{j_++j_-},
 \qquad
 \sin^2\theta_K=\frac{4j_+j_-}{(j_++j_-)^2}.
\label{eq:S-slant-angle}
\end{equation}
The $T$ diffusion is tangent to $\M_{j_+,j_-}$, whereas the total normal quadratic-variation rate of the $N$ diffusion on the target is
\begin{equation}
 q_{\perp,N}
 =\gamma\sin^2\theta_K(j_++j_-)
 =\frac{4\gamma j_+j_-}{j_++j_-}.
\label{eq:S-slant-qv}
\end{equation}
\end{lemma}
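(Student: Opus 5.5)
The plan is to reduce everything to a computation at one orbit point and transport it by $SU(2)$ covariance. The discarded-record statement is immediate: $\D[iK_a]\rho=(iK_a)\rho(iK_a)^\dagger-\tfrac12\{K_a^2,\rho\}=\D[K_a]\rho$, so both instruments have the same generator $\gamma\sum_a\D[K_a]$. For the geometric claims I will use that $e^{-i\bm\omega\cdot\bm K}$ acts by Fubini--Study isometries. This action maps $\M_{j_+,j_-}$ to itself and transitively, so it suffices to work at $\bm n=\bm e_z$, i.e. at $\psi=|j_+,j_+\rangle\otimes|j_-,-j_-\rangle$. Its stabilizer contains the $z$ rotations, and the monitor triple $(K_a)$ transforms as a vector under the group. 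Hence the set of innovation vector fields at any orbit point is an orthogonal rotation of the set at $\psi$. Since $\sum_a\|P_\perp\Xi_a\|^2$ and tangency are invariant under orthogonal recombination of the independent innovations, the quadratic-variation quantities are the same at every orbit point.

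At $\psi$ I will compute the horizontal parts $P_h K_a\psi=K_a\psi-\langle K_a\rangle\psi$ using ladder operators. Put $a=\sqrt{2j_+}$, $b=\sqrt{2j_-}$, $u=|j_+,j_+-1\rangle\otimes|j_-,-j_-\rangle$ and $v=|j_+,j_+\rangle\otimes|j_-,-j_-+1\rangle$. Then $K_z\psi=(j_+-j_-)\psi$ has zero horizontal part. The other two components give $P_hK_x\psi=\tfrac12(au+bv)$ and $P_hK_y\psi=\tfrac{i}{2}(au-bv)$.

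The orbit tangent space is the real span of the horizontal generators $-iP_hK_a\psi$. It is therefore $T=\operatorname{span}_{\mathbb R}\{i(au+bv),\,au-bv\}$, a real plane inside the complex plane $\operatorname{span}_{\mathbb C}\{u,v\}$. For the Kähler angle, I take the unit tangent vector $t=(au-bv)/\sqrt{a^2+b^2}$ and project $it$ onto $T$ with $g_{\rm FS}=\operatorname{Re}\langle\cdot,\cdot\rangle$. Its only nonzero component is along $i(au+bv)/\sqrt{a^2+b^2}$, with coefficient $(a^2-b^2)/(a^2+b^2)$. This gives $\cos\theta_K=(j_+-j_-)/(j_++j_-)$ and hence Eq.~(\ref{eq:S-slant-angle}). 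The same value holds for every unit $t\in T$, because $J$ restricted to $\operatorname{span}_{\mathbb C}\{u,v\}$ is a rotation and $T$ is two-dimensional there; so the orbit is proper slant.

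For the diffusions, the unit-efficiency filter for a pure state has stochastic term $\sum_a(c_a-\tfrac12\langle c_a+c_a^\dagger\rangle)\psi\,\dd W_a$. Its horizontal part is $\Xi_a=\sqrt\gamma\,P_hK_a\psi$ for $c_{a,N}$ and $\Xi_a=i\sqrt\gamma\,P_hK_a\psi$ for $c_{a,T}$.
\begin{itemize}
\item \emph{Tangent instrument.} The vectors are $\tfrac{i\sqrt\gamma}{2}(au+bv)$ and $-\tfrac{\sqrt\gamma}{2}(au-bv)$, plus zero from $K_z$. They lie in $T$, which proves tangency.
\item \emph{Normal instrument.} The vectors are $J$ applied to those tangent vectors. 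Each tangent vector has squared norm $\gamma(a^2+b^2)/4=\gamma(j_++j_-)/2$. By the definition of the Kähler angle, $\|P_\perp Jt\|^2=\sin^2\theta_K\|t\|^2$. Summing the two contributions gives $q_{\perp,N}=\gamma\sin^2\theta_K(j_++j_-)=4\gamma j_+j_-/(j_++j_-)$, which is Eq.~(\ref{eq:S-slant-qv}).
\end{itemize}

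The main obstacle I expect is bookkeeping rather than analysis. First, the convention that the Itô drift is not counted in the quadratic variation must be kept explicit. Second, the Fubini--Study normalization ($\operatorname{Re}\langle\cdot,\cdot\rangle$ on horizontal vectors, with no factor of $2$ or $4$) has to be used consistently. Third, the covariance step must be justified carefully: that the rotated monitor triple yields the same innovation covariance at every orbit point, so the single-point computation extends to the whole orbit.
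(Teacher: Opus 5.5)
Your proposal is correct and follows essentially the paper's route: phase invariance of $\D$, reduction to $\bm n=\bm e_z$ by $SU(2)$ covariance, the horizontal vectors $z_a=(K_a-\langle K_a\rangle)\psi$ with the tangent plane spanned by $-iz_a$, the K\"ahler angle from their Gram data (your explicit ladder-operator vectors reproduce the paper's $\|z_x\|^2=\|z_y\|^2=(j_++j_-)/2$ and $\operatorname{Im}\langle z_x,z_y\rangle=(j_+-j_-)/2$, and your projection of $Jt$ is the same quantity as its symplectic-to-metric area ratio), tangency of the vectors $iz_a$, and the normal fraction $\sin^2\theta_K$ multiplying $\sum_a\|z_a\|^2=j_++j_-$. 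Two small additions: if you want tangency in the stronger sense the paper uses later (paths stay on the orbit), note that the $T$ equation is exactly the Stratonovich random rotation $\mathrm{d}\psi=i\sqrt\gamma\sum_aK_a\psi\circ\mathrm{d}W_a$, so the It\^o drift is also tangent; and your normal vectors are $-J$ rather than $J$ applied to the tangent ones, which does not affect the quadratic variation.
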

\begin{proof}
The equality $\D[iK_a]=\D[K_a]$ proves the first statement.
By rotational covariance it suffices to work at
$|j_+,\bm e_z\rangle\otimes|j_-,-\bm e_z\rangle$.
Write $z_a=(K_a-\langle K_a\rangle)|\psi\rangle$.
The real tangent plane is spanned by $-iz_x,-iz_y$, and the coherent-state second moments give
\begin{equation*}
 \|z_x\|^2=\|z_y\|^2=\frac{j_++j_-}{2},
 \qquad
 \operatorname{Im}\langle z_x,z_y\rangle=\frac{j_+-j_-}{2}.
\end{equation*}
Thus the ratio of the induced symplectic area to the induced metric area is Eq.~(\ref{eq:S-slant-angle}).
For $c_{a,T}$ the stochastic vector is $iz_a$, a common infinitesimal rotation, so every path remains on the orbit.
For $c_{a,N}$ it is $z_a$, the ambient-complex-structure rotation of that tangent vector.
The normal fraction is therefore $\sin^2\theta_K$.
Finally
$\sum_a\|z_a\|^2=\sum_a\operatorname{Var}(K_a)=j_++j_-$,
which proves Eq.~(\ref{eq:S-slant-qv}).
\end{proof}

For $j_-=0$, Eq.~(\ref{eq:S-slant-angle}) reduces to the complex spin-coherent sphere and both quadratures are tangent.
For $j_-=j_+$ it reaches the totally real endpoint.
Every $0<j_-<j_+$ is a proper-slant target with a strictly intermediate angle; for example,
$(j_+,j_-)=(3/2,1/2)$ has Hilbert dimension eight,
$\cos\theta_K=1/2$, and $q_{\perp,N}=3\gamma/2$.
Thus half-integer spin pairs realize a family of strictly intermediate normal fractions without changing the Lindbladian or the number of public records; the construction does not assume a continuously tunable spin label.

The tangent instrument also supports state-family generation on this second geometry.
Its restriction to the orbit is the rotational diffusion
$(\gamma/2)\sum_aX_a^2=(\gamma/2)\Delta_{S^2}$.
Fix $0<t_m<T$ and let the uncontrolled diffusion run from a fixed point to $t_m$.
Its heat-kernel density $\rho_m$ is smooth and strictly positive.
For a fixed normalized target density $p\in C^{2,\alpha}(S^2)$ with $\inf p>0$, $0<\alpha<1$, choose a $C^1$ interpolation $\rho_t$ from $\rho_m$ to $p$ that remains bounded below by $m_*>0$.
The right-hand side of
\begin{equation*}
 \Delta f_t=\frac\gamma2\Delta\rho_t-\partial_t\rho_t,
 \qquad \int_{S^2}f_t\dd\Omega=0,
\end{equation*}
has zero integral, so the zero-mean Poisson solution exists uniquely.
Elliptic regularity and $m_*>0$ make
$v_t=\nabla f_t/\rho_t$ bounded and continuous.
Use the collective Hamiltonian
$H_t=(\bm n\times v_t)\cdot\bm K$.
It produces the drift $v_t$ and obeys
$\|H_t\|_{\rm op}\le(j_++j_-)\|v_t\|$.
Because the unit-efficiency public filter reconstructs $\bm n_t$, this Borel state feedback is adapted and admits a predictable version.
The controlled Fokker--Planck equation then has the prescribed classical solution $\rho_t$, so the terminal law is exactly $p$ with a finite peak depending on the fixed $p,T,\gamma,j_\pm,m_*$ but not on an accuracy parameter.

The normal instrument now supplies a direct application of Theorem~\ref{thm:S-general-normal}.  Fix the same orbit point $x_0\in\M_{j_+,j_-}$ as the initial law for both instruments, the same $T,\gamma,j_\pm$, the same three public records, the full predictable Hamiltonian class $\|H_t\|_{\rm op}\le U$, and the same smooth target density $p\ge m_p>0$ on the orbit.

\begin{corollary}[proper-slant cost separation]
\label{cor:S-slant-cost}
Under this common contract, the normal instrument satisfies
\begin{equation*}
 \C_N(\epsilon;p,\delta_{x_0},T)=\Omega(1/\epsilon),
\end{equation*}
whereas the tangent instrument realizes $p$ exactly with a finite peak $U_p<\infty$ independent of $\epsilon$.
\end{corollary}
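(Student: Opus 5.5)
The corollary has two halves, and I will prove them independently. The normal half applies Theorem~\ref{thm:S-general-normal} to the instrument $c_{a,N}$. The tangent half reuses the explicit heat-kernel mixing and Poisson transport construction given just before the statement.

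\emph{Normal half.} I take $\mathcal X=\CP^{D-1}$ with the Fubini--Study metric and $D=(2j_++1)(2j_-+1)$. This is a compact smooth Riemannian manifold. The target is $\M=\M_{j_+,j_-}$, a compact embedded smooth two-sphere obtained as an $SU(2)$ orbit. The unit-efficiency filter driven by $c_{a,N}$ has the form~(\ref{eq:S-general-SDE}). Its drift $b$ is smooth and its diffusion $\sigma$ is made of the horizontal vectors $z_a$. The Hamiltonian part has FS speed $\Delta_\psi H\le\|H\|_{\rm op}\le U$ by Eq.~(\ref{eq:S-intrinsic-speed}), so $c_H=1$.

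For the hypotheses I embed $\CP^{D-1}$ isometrically up to a constant via $\psi\mapsto|\psi\rangle\langle\psi|$, which is smooth. After that, the total-variation and bracket bounds~(\ref{eq:S-general-TV})--(\ref{eq:S-general-upperQV}) follow from It\^o's formula. The coefficients are smooth on a compact space, and the control enters only through a bounded linear term, which gives $B_0$ and $\Lambda_X,\Lambda_*$. The martingale parts are square integrable because all coefficients are bounded.

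The normal lower bound~(\ref{eq:S-general-normalQV}) is exactly Lemma~\ref{lem:S-slant-spin}. On the target the normal quadratic-variation rate equals the constant $q_{\perp,N}=4\gamma j_+j_-/(j_++j_-)>0$, using rotational covariance. So I take $\lambda_\perp=q_{\perp,N}$.

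Theorem~\ref{thm:S-general-normal} then gives $W_{1,g}(\Law[X_T],\nu)\ge c\lambda_\perp/(B_0+U)$ for every $\nu$ supported on $\M$, in particular for $p\,\dd\Omega$. It also holds for every predictable controller and for the initial law $\delta_{x_0}$. The trace and FS metrics are equivalent by $(2/\pi)d_{\rm FS}\le d_{\rm tr}$. Hence error $\epsilon$ forces $U\ge c'/\epsilon-B_0$, which is $\Omega(1/\epsilon)$.

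\emph{Tangent half.} Under $c_{a,T}$ the stochastic vectors are $iz_a$, which generate rotations of the orbit. Starting from $x_0\in\M$ with no control, the state therefore stays on $\M$ and performs Brownian motion with generator $(\gamma/2)\Delta_{S^2}$.

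The construction runs in two stages. On $[0,t_m]$ I use $H=0$, which produces a smooth positive heat-kernel density $\rho_m$. On $[t_m,T]$ I interpolate $\rho_t$ from $\rho_m$ to $p$ while keeping $\rho_t\ge m_*>0$. I then solve the zero-mean Poisson equation and set $v_t=\nabla f_t/\rho_t$. This field is bounded by Schauder estimates together with the lower bound $m_*$.

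I implement $v_t$ with the collective Hamiltonian $H_t=(\bm n\times v_t)\cdot\bm K$. It is tangent to the orbit, so the state stays on $\M$, and it obeys $\|H_t\|_{\rm op}\le(j_++j_-)\sup\|v\|=:U_p$. This bound is independent of $\epsilon$.

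Admissibility holds because $\bm n_t$ is a Borel function of the public conditional state. It remains to check that the terminal law is exactly $p$. The controlled Fokker--Planck equation on $S^2$ has bounded continuous drift and a uniformly elliptic Laplacian, so it has a unique solution. The prescribed $\rho_t$ solves it by construction, so the law at time $T$ is exactly $p$.

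\emph{Main obstacle.} The hard part will be verifying the uniform hypotheses of Theorem~\ref{thm:S-general-normal}, especially the global total-variation bound $\operatorname{TV}(A^\iota)\le(B_0+U)h$ for arbitrary predictable controls. I would handle it by computing the It\^o drift of $|\psi\rangle\langle\psi|$ explicitly: it is the smooth measurement drift plus $-i[H,\rho]$. The Hilbert--Schmidt norm of the commutator term is at most $2\|H\|_{\rm op}$, which gives $c_H\le2$; this only changes constants. Regularity of the time-dependent Poisson field in $t$ is a secondary point, and choosing a smooth interpolation in $t$ settles it.
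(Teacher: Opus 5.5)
Your proposal is correct and follows the paper's own proof. The normal half applies Theorem~\ref{thm:S-general-normal} with $\lambda_\perp=4\gamma j_+j_-/(j_++j_-)$ from Lemma~\ref{lem:S-slant-spin} and a finite $c_H$ from the Fubini--Study speed bound; the tangent half reuses the heat-kernel mixing and zero-mean Poisson transport realized by $H_t=(\bm n\times v_t)\cdot\bm K$, with $U_p\le(j_++j_-)\|v\|_\infty$. Your extra details are valid refinements of steps the paper leaves implicit: the rescaled projector embedding $|\psi\rangle\langle\psi|/\sqrt2$ as the isometric $\iota$, the explicit $(2/\pi)d_{\rm FS}\le d_{\rm tr}$ transfer, and Fokker--Planck uniqueness to identify the closed-loop law with the prescribed $\rho_t$.
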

\begin{proof}
The ambient $\mathbb{CP}^{(2j_++1)(2j_-+1)-1}$ is compact and smooth, and the coherent-state orbit is a compact embedded two-sphere.  The unit-efficiency filter coefficients are smooth; after a fixed smooth isometric embedding, compactness bounds the global covariance and drift, while bounded derivatives of the stopped tube coordinate bound its covariance and include the It\^o curvature term in $B_0$.  Hamiltonian motion has Fubini--Study speed at most $\|H\|_{\rm op}$, so $c_H$ is finite.  Equation~(\ref{eq:S-slant-qv}) gives the strictly positive target-normal rate
$\lambda_\perp=4\gamma j_+j_-/(j_++j_-)$, and continuity supplies a control-independent tube.  Theorem~\ref{thm:S-general-normal} proves the normal lower bound for the fixed initial point.  Equations above construct the tangent law from that same point and give $U_p\le(j_++j_-)\|v\|_\infty$.
\end{proof}
The corollary covers the orbit-supported prior and a normal lower bound; a normal matching upper bound and ambient-Haar capture remain open.  Such an upper bound would additionally require a uniform tubular neighborhood, Hamiltonian access to every inward normal direction, controlled exit and boundary-layer moments, a specified captured prior, and a compatible near-manifold tangent mixing construction.  Positive normal quadratic variation alone supplies none of these controllability properties.  A deterministic finite-dimensional check over six spin pairs found maximum K\"ahler-angle, normal-trace, total-variance, and Lindbladian errors below $2.3\times10^{-15}$; it audits the algebra in Lemma~\ref{lem:S-slant-spin}, while the cost lower bound follows from the separately proved tube theorem.

\section{Structured multiqubit realizations}
\label{sec:S-structured}

The following realizations replace the full-space Haar task or the global action class by explicitly structured alternatives.

\subsection{Native one- and two-body single-excitation realization}
\label{sec:S-single-excitation}

The complete-Pauli theorem acts on the full $2^n$-dimensional space and uses high-weight monitors.  A physically different realization keeps a genuine multiqubit register but restricts it to the conserved single-excitation sector
\begin{equation}
 \mathcal H_1=\operatorname{span}\{|j\rangle:j=1,\ldots,m\}\simeq\mathbb C^m.
\label{eq:S-oneex-sector}
\end{equation}
With $E_{jk}=|j\rangle\langle k|$, define the $m^2$ Hermitian operators
\begin{equation}
 Q_j^D=E_{jj},\qquad
 Q_{jk}^S=\frac{E_{jk}+E_{kj}}{\sqrt2},\qquad
 Q_{jk}^A=-\frac{i(E_{jk}-E_{kj})}{\sqrt2},\quad j<k.
\label{eq:S-oneex-basis}
\end{equation}
They obey
\begin{equation}
 \tr(Q_\alpha Q_\beta)=\delta_{\alpha\beta},\qquad
 \sum_\alpha Q_\alpha XQ_\alpha=\tr(X)I_m,\qquad
 \sum_\alpha Q_\alpha^2=mI_m.
\label{eq:S-oneex-complete}
\end{equation}
On the physical qubits these are, respectively,
\begin{align}
 Q_j^D&=n_j=(I-Z_j)/2,\nonumber\\
 Q_{jk}^S&=(X_jX_k+Y_jY_k)/(2\sqrt2),\nonumber\\
 Q_{jk}^A&=-i(\sigma_j^+\sigma_k^--\sigma_j^-\sigma_k^+)/\sqrt2,
\label{eq:S-oneex-physical}
\end{align}
restricted to $\mathcal H_1$.  Hence every monitor conserves excitation number and has support at most two.

Let $L_\alpha=\sqrt{\Gamma/m}\,Q_\alpha$ and $\kappa=\Gamma/(2m)$.
Equation~(\ref{eq:S-oneex-complete}) gives the common sector generator
\begin{equation}
 \sum_\alpha\D[L_\alpha]\rho
 =\Gamma[I_m\tr(\rho)/m-\rho].
\label{eq:S-oneex-channel}
\end{equation}
Use the informational homodyne quadrature for $Q^D,Q^S$, and use
$e^{-i\phi}=\sqrt\delta-i\sqrt{1-\delta}$ for $Q^A$.
Changing this phase changes the conditional diffusion and leaves Eq.~(\ref{eq:S-oneex-channel}) invariant.
Because the $Q_\alpha$ are not involutions, this is a continuous diffusive instrument and not an application of the binary Pauli Kraus formula.

The prior is complex Haar measure on $\mathbb{CP}^{m-1}$ inside $\mathcal H_1$; the target is the uniform law $\nu_{1,m}$ on
\begin{equation}
 \mathcal M_1=\mathbb{RP}^{m-1}
 =\left\{\left[\sum_{j=1}^m x_j|j\rangle\right]:x_j\in\mathbb R\right\}.
\label{eq:S-oneex-target}
\end{equation}
Except on lower-dimensional coordinate subsets, these are entangled $W$-type states.
The allowed controls are all public-record-predictable number-preserving quadratic Hamiltonians whose restriction obeys
$\|H_t|_{\mathcal H_1}\|_{\rm op}\le U$.
Since Eq.~(\ref{eq:S-oneex-basis}) spans $\operatorname{Herm}(\mathbb C^m)$, this class contains the sector geodesic controller.

\begin{corollary}[native two-body multiqubit separation]
\label{cor:S-oneex}
For every $m\ge2$, the conditional radial coordinate
$q=\frac12\arccos|\psi^{\mathsf T}\psi|$ obeys Eq.~(\ref{eq:S-multi-radial}) with $d=m$ and $\kappa=\Gamma/(2m)$.  If $R^{(m)}$ denotes Eq.~(\ref{eq:S-multi-reference}) with these parameters, then for the uniform sector target
\begin{equation}
 \inf_H W_{1,\rm tr}(\Law[\rho_T],\nu_{1,m})
 =\E\sin R_T^{(m)}.
\label{eq:S-oneex-finite}
\end{equation}
For fixed $m,T,\Gamma$ and $0<\delta\le1$,
\begin{align}
 W_{1,\rm tr}(\Law[\rho_T],\nu_{1,m})
 &\ge L_{m,\delta}(U),\label{eq:S-oneex-lower}\\
 L_{m,\delta}(U)
 &=\frac{\kappa\delta(m-1)}{2U}
 \left[1+O_m\!\left(\frac{\kappa^2\delta}{U^2}\right)\right],\nonumber\\
 \C^{(1{\rm ex})}_{m,\delta}(\epsilon)
 &=\Theta_{m,T,\Gamma,\delta}(1/\epsilon).
\label{eq:S-oneex-cost}
\end{align}
At $\delta=0$ the stopped construction gives Eq.~(\ref{eq:S-oneex-tangent}) below, with no matching lower asymptotic.
\begin{equation}
 \C^{(1{\rm ex})}_{m,0}(\epsilon)
 \le\frac{\pi}{4T}
 +\sqrt{\frac{2\kappa}{T}\log\frac1\epsilon}.
\label{eq:S-oneex-tangent}
\end{equation}
\end{corollary}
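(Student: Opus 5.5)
The plan is to reduce Corollary~\ref{cor:S-oneex} to Theorem~\ref{thm:S-multiqubit} by showing that, inside $\mathcal H_1\simeq\mathbb C^m$, the sector instrument reproduces exactly the same conditional filter structure that the complete-Pauli proof actually used. That proof entered the Pauli data only through three ingredients: the completeness relations~(\ref{eq:S-multi-completeness}) for the symmetric and antisymmetric operator families, the channel identity, and $O(d)$ covariance.

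\textbf{Step 1: completeness.} Verify that $\{Q^D_j,Q^S_{jk}\}$ is a Hilbert--Schmidt orthonormal basis of real symmetric matrices and $\{Q^A_{jk}\}$ one of purely imaginary antisymmetric matrices. Their outer-product sums are then
\[
\tfrac12(\delta_{il}\delta_{jk}\pm\delta_{ik}\delta_{jl}).
\]
This is precisely Eq.~(\ref{eq:S-multi-completeness}) with the prefactor $d/2$ replaced by $1/2$. Since $L_\alpha=\sqrt{\Gamma/m}\,Q_\alpha$ instead of $(\sqrt\Gamma/d)P$, the products $\gamma\sum P_{ij}P_{kl}$ agree after the substitution $d\to m$. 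Concretely, $(\Gamma/d^2)(d/2)=\Gamma/(2d)=\kappa$ becomes $(\Gamma/m)(1/2)=\kappa=\Gamma/(2m)$.

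\textbf{Step 2: radial reduction.} Apply It\^o's formula to $z=\psi^{\mathsf T}\psi$ exactly as in Lemma~\ref{lem:S-multi-radial}. The drift and brackets~(\ref{eq:S-multi-zmoments}) are contractions of these tensors with $\psi$, so they hold verbatim with $d\to m$. This yields Eq.~(\ref{eq:S-multi-radial}) and the Haar radial density with $d\to m$. Three further points need checking:
\begin{itemize}
\item the identity/trace terms drop out because the innovation fields $\Hc_C$ are trace-centered;
\item the innovations are independent public Brownian motions by the analogue of Lemma~\ref{lem:S-innovations} for non-involutive $Q_\alpha$, since only the diffusive filter is used;
\item the geodesic controller $H_N$ of Eq.~(\ref{eq:S-multi-HN}) lies in $\operatorname{Herm}(\mathbb C^m)$, hence in the allowed number-preserving quadratic class, and has sector norm $U$.
\end{itemize}
Conversely, any allowed $H$ restricted to the sector has Fubini--Study speed at most $U$, giving $|v_t|\le U$.

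\textbf{Step 3: reuse the theorem's proof.} With these ingredients, the remaining arguments of Theorem~\ref{thm:S-multiqubit} carry over with $d\to m$: the pathwise Tanaka comparison giving the lower bound, the stationary comparison for $L_{m,\delta}$, and the Laplace asymptotics. The lower bound~(\ref{eq:S-oneex-lower}) and its expansion follow. For attainment I will check that conjugation by $O\in O(m)$ acts orthogonally on $\operatorname{span}\{Q^D,Q^S\}$ and on $\operatorname{span}\{Q^A\}$, because these are the symmetric and antisymmetric subspaces. Rotating the innovations accordingly preserves their law, so the closed loop is $O(m)$ covariant, and projection coupling gives~(\ref{eq:S-oneex-finite}). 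The capture and truncated-moment bounds (Lemma~\ref{lem:S-multi-truncated-moments} with $d=m$) give the upper estimate and thus $\Theta(1/\epsilon)$. The $\delta=0$ capture bound gives~(\ref{eq:S-oneex-tangent}).

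\textbf{Main obstacle: the case $m=2$.} Here zero is reflected rather than inaccessible, and $H_N$ is discontinuous there. I would handle this by invoking the single-qubit Theorem~\ref{thm:S-finite}, with its smooth approximants $H_N^{(\varepsilon)}$ and uniformly elliptic weak-uniqueness argument, noting that the $m=2$ sector filter is identical in form to the qubit filter with $\kappa=\Gamma/4$. The case $m=3$, where the boundary is logarithmically inaccessible, also needs a check that the scale-function classification still applies. Both checks are direct from~(\ref{eq:S-multi-scale}) with $d\to m$.
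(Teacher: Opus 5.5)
Your proposal is correct and follows the paper's own proof. Both reduce the corollary to Theorems~\ref{thm:S-multiqubit} and~\ref{thm:S-finite} by checking that Hilbert--Schmidt completeness of the $Q_\alpha$, their transpose parities, and the sector operator-norm speed bound reproduce the radial equation with $d\to m$, and both handle the boundary separately for $m=2$ (reflected, approached by smooth public feedbacks), $m=3$ (logarithmically inaccessible) and $m\ge4$ (power-law inaccessible). Your normalization check $(\Gamma/m)\cdot\tfrac12=\kappa$ and your identification of the $m=2$ sector filter with the qubit filter at $\kappa=\Gamma/4$ (after rotating the two diagonal innovations into the $I$ and $Z$ directions) spell out details that the paper's short proof leaves implicit.
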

\begin{proof}
The radial derivation uses only Hilbert--Schmidt completeness, the transpose split, and the sector operator-norm speed bound.  Equations~(\ref{eq:S-oneex-complete}) and the parities of $Q^{D,S,A}$ provide those identities with $d=m$.  Boundary handling is dimension dependent: for $m=2$ the positive-$\delta$ lower endpoint is regular and the reflected value is approached by smooth public feedbacks; for $m=3$ the scale density is proportional to $q^{-1}$ and zero is logarithmically inaccessible; for $m\ge4$ it is power-law inaccessible.  At $\delta=0$ the maximal-inward feedback is solved only to the first hit and then switched off.  The comparison, covariance, and projection arguments of Theorems~\ref{thm:S-finite} and~\ref{thm:S-multiqubit} now prove the claims.  No binary Pauli Kraus formula is used: the $Q_\alpha$ are not involutions and define the continuous diffusive sector instrument stated above.
\end{proof}

For fixed total sector depolarizing rate $\Gamma$, the leading normal coefficient is $\Gamma\delta(m-1)/(4mU)$ and stays bounded as $m$ grows.
This statement does not price total hardware: there are $m^2$ monitor directions, the pair terms are all-to-all, and the budget is the sector norm rather than the full $2^m$-dimensional operator norm or an $\ell_1$ sum of physical couplings.
The deterministic checker in the reproducibility package verifies Eqs.~(\ref{eq:S-oneex-complete})--(\ref{eq:S-oneex-channel}), transpose parity, support, normal covariance, and a bipartite entanglement witness for $2\le m\le8$; its maximum algebraic error is $3.58\times10^{-15}$.

\subsection{Locality-preserving graph-dressed corollary}
\label{sec:S-graph-local}

The exponential channel count in Theorem~\ref{thm:S-multiqubit} is not required to preserve the accuracy exponent if one accepts a structured prior, target, and action class.
Let $G=(V,E)$ be a simple graph on $n$ vertices with maximum degree $\Delta$, and define the real Clifford unitary
\begin{equation}
 V_G=\prod_{\{j,k\}\in E}CZ_{jk}.
\label{eq:S-graph-V}
\end{equation}
All factors commute and $V_G^{\mathsf T}=V_G=V_G^\dagger$.
For $a\in\{X,Y,Z\}$ set
\begin{equation}
 P_{j,a}=V_G\sigma_a^{(j)}V_G^\dagger.
\label{eq:S-graph-P}
\end{equation}
Writing $N(j)$ for the neighbors of $j$ gives
\begin{equation}
 P_{j,X}=X_j\!\prod_{k\in N(j)}Z_k,
 \quad P_{j,Y}=Y_j\!\prod_{k\in N(j)}Z_k,
 \quad P_{j,Z}=Z_j.
\label{eq:S-graph-labels}
\end{equation}
There are exactly $3n$ such observables, each supported on at most $\Delta+1$ qubits, and
$P_{j,X}^{\mathsf T}=P_{j,X}$,
$P_{j,Z}^{\mathsf T}=P_{j,Z}$, and
$P_{j,Y}^{\mathsf T}=-P_{j,Y}$.

Use $L_{j,a}=\sqrt\kappa P_{j,a}$, the informational quadrature for $a=X,Z$, and
\begin{equation}
 C_{j,Y}=(\sqrt\delta-i\sqrt{1-\delta})L_{j,Y}
\label{eq:S-graph-phase}
\end{equation}
for $a=Y$.
The common discarded-record generator is
\begin{align}
 \mathcal L_G={}&\kappa\sum_{j=1}^n\sum_{a=X,Y,Z}\D[P_{j,a}]\nonumber\\
 ={}&\operatorname{Ad}_{V_G}\circ
 \left(\kappa\sum_{j=1}^n\sum_{a=X,Y,Z}
 \D[\sigma_a^{(j)}]\right)\circ
 \operatorname{Ad}_{V_G^\dagger},
\label{eq:S-graph-channel}
\end{align}
independently of $\delta$.
This is a graph-dressed product of local depolarizing generators, not the full-system depolarizing generator in Eq.~(\ref{eq:S-multi-channel}); Eq.~(\ref{eq:S-graph-channel}) fixes the per-site exposure $\kappa$ when the two instruments are compared.

Draw independent $\psi_{j,0}$ from single-qubit complex Haar measure and set
$\Psi_0=V_G\bigotimes_j\psi_{j,0}$.
All $3n$ records and the initial labels are public.
The target family and law are
\begin{equation}
 \mathcal N_G=V_G[(\mathbb{RP}^1)^n],
 \qquad \nu_G=(\operatorname{Ad}_{V_G})_\#\nu_R^{\otimes n}.
\label{eq:S-graph-target}
\end{equation}
For a nonempty connected graph, this continuous family contains generically entangled states; exceptional product choices form a lower-dimensional subset.

The relevant local resource contract is
\begin{equation}
 \A^{G,\mathrm{loc}}_u=
 \left\{H_t=V_G\left(\sum_{j=1}^n h_{j,t}^{(j)}\right)V_G^\dagger:
 \begin{array}{l}
 h_{j,t}=h_{j,t}^\dagger,\ \tr h_{j,t}=0,\\[-2pt]
 h_{j,t}\text{ is predictable from all public records},\\[-2pt]
 \|h_{j,t}\|_{\rm op}\le u
 \end{array}\right\}.
\label{eq:S-graph-controls}
\end{equation}
Each physical summand has support at most $\Delta+1$ and the total Hamiltonian obeys $\|H_t\|_{\rm op}\le n u$.
Thus $u$ is a per-site radial-speed/operator-norm budget, not the global peak $U$ of Theorem~\ref{thm:S-multiqubit}.
Define $\C^{G,\mathrm{loc}}_\delta(\epsilon;\nu_G)$ by replacing $\A_U$ with Eq.~(\ref{eq:S-graph-controls}) and minimizing this per-site peak $u$.

\begin{corollary}[Graph-dressed local separation]
\label{cor:S-graph-local}
For every $0<\delta\le1$, every $H\in\A^{G,\mathrm{loc}}_u$, and every $j$,
\begin{equation}
 \Wone(\Law^{\delta,H}[\rho_T],\nu_G)\ge L_\delta(u).
\label{eq:S-graph-lower}
\end{equation}
For every fixed $0<t_c<T$ and fixed $\delta>0$, the single-site residual estimate supplies constants $u_0,B_2<\infty$, independent of $u$, $n$, and $\epsilon$, such that the factorized controller gives for all $u\ge u_0$
\begin{equation}
 \Wone(\Law^{\delta,H}[\rho_T],\nu_G)
 \le \sqrt{n\left\{
  e^{-(ut_c-\pi/4)_+^2/(2\kappa t_c)}
  +B_2\left(\frac{\kappa\delta}{u}\right)^2\right\}}.
\label{eq:S-graph-upper}
\end{equation}
Consequently, for fixed $n,T,\kappa$ and fixed $\delta>0$, as $\epsilon\downarrow0$,
\begin{equation}
 \C^{G,\mathrm{loc}}_\delta(\epsilon;\nu_G)
 =\Theta_{n,T,\kappa,\delta}(1/\epsilon).
\label{eq:S-graph-cost}
\end{equation}
More explicitly, its lower side is
$\kappa\delta[1-o(1)]/(2\epsilon)$ and the construction has an upper side of the form
$K_0+K_1\sqrt{\log(2n/\epsilon^2)}+K_2\sqrt n\,\kappa\delta/\epsilon$, with constants independent of $n$ and $\epsilon$ for a fixed capture window.
At $\delta=0$ the same factorized first-hit construction gives only the upper bound
\begin{equation}
 \C^{G,\mathrm{loc}}_0(\epsilon;\nu_G)
 \le\frac{\pi}{4T}
 +\sqrt{\frac{2\kappa}{T}\log\frac n\epsilon}.
\label{eq:S-graph-tangent}
\end{equation}
For every bounded-degree graph family, the number of monitors is linear and their support is bounded uniformly in $n$.  No matching thermodynamic $n$ scaling or tangent lower asymptotic is asserted.
\end{corollary}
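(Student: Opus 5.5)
The plan is to conjugate by $V_G$ so that the problem factorizes into $n$ single-qubit problems, each of which is covered by the qubit results of Secs.~S\,IV--S\,VIII.

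\textbf{Step 1 (factorization).} Set $\Phi_t=V_G^\dagger\Psi_t$. Since $V_G$ is fixed, unitary and real, the monitors become the local Paulis $\sigma_a^{(j)}$ and the discarded-record generator becomes a sum of local depolarizers. Each admissible control becomes $\sum_j h_{j,t}^{(j)}$, and the prior becomes $\bigotimes_j\psi_{j,0}$ with independent Haar factors.
\begin{itemize}
\item Every term of the unit-efficiency filter~(\ref{eq:S-filter-common}) acts on one site: the Lindblad part, the Hamiltonian part, and each innovation term $\Hc_{C}$ with $C$ a local operator. Hence the ansatz $\Phi_t=\bigotimes_j\psi_{j,t}$ solves the filter.
\item Each $\psi_{j,t}$ obeys the single-qubit $\delta$-filter of Sec.~S\,IV with rate $\kappa$ and control $h_{j,t}$. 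It is driven by its own three innovations $W_{j,X},W_{j,Y},W_{j,Z}$.
\item The controls $h_{j,t}$ may depend on every public record. This causes no difficulty: Lemma~\ref{lem:S-innovations} makes all $3n$ innovations Brownian in the joint public filtration.
\item Therefore Lemma~\ref{lem:S-radial} applies site by site in that filtration. Each $q_j$ obeys Eq.~(\ref{eq:S-radial}) with $|v_{j,t}|\le u$ and a Brownian motion adapted to the full public filtration.
\end{itemize}

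\textbf{Step 2 (lower bound).} For real states $r_k$, $|\langle\Phi,\bigotimes_k r_k\rangle|\le|\langle\psi_j,r_j\rangle|\le\cos q_j$. Hence
\[
d_{\rm tr}(\Psi_T,\mathcal N_G)\ge\sin q_{j,T}.
\]
Equation~(\ref{eq:S-support}) then gives $\Wone(\Law^{\delta,H}[\rho_T],\nu_G)\ge\E\sin q_{j,T}$. The stationary comparison of Lemma~\ref{lem:S-comparison} holds for arbitrary history-dependent predictable drift, so it bounds this by $L_\delta(u)$, which is Eq.~(\ref{eq:S-graph-lower}). Proposition~\ref{prop:S-crossover} then yields the lower side $\kappa\delta[1-o(1)]/(2\epsilon)$.

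\textbf{Step 3 (upper bound).} Use the factorized smooth controllers $H_N^{(\varepsilon)}$ of Eq.~(\ref{eq:S-Hsmooth}) on each site, each driven only by its own records.
\begin{itemize}
\item The sites are then independent. The site laws are $O(2)$-covariant, so the product of the nearest-point projections has law $\nu_R^{\otimes n}$, and conjugating by $V_G$ gives $\nu_G$.
\item For product pure states, $d_{\rm tr}=\sqrt{1-\prod_j\cos^2 q_j}\le\bigl(\sum_j\sin^2q_j\bigr)^{1/2}$.
\item Jensen's inequality then bounds the coupling cost by $\sqrt{n\,\E\sin^2 q_T}$.
\item Split on capture before $t_c$. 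Failure contributes at most the tail~(\ref{eq:S-tubecapture}). After the hit, the reflection comparison of Eq.~(\ref{eq:S-truncatedorder}), applied to $\sin^2$, bounds the second moment by the $p=2$ truncated moment, giving $B_2(\kappa\delta/u)^2$ as in Lemma~\ref{lem:S-multi-truncated-moments} for $d=2$.
\item Letting $\varepsilon\downarrow0$, as in Theorem~\ref{thm:S-finite}, gives Eq.~(\ref{eq:S-graph-upper}).
\end{itemize}
Choosing $u$ so that the bracket is at most $\epsilon^2/n$ gives the stated form $K_0+K_1\sqrt{\log(2n/\epsilon^2)}+K_2\sqrt n\,\kappa\delta/\epsilon$. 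Together with Step~2 this proves Eq.~(\ref{eq:S-graph-cost}).

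\textbf{Step 4 (tangent case).} At $\delta=0$ stop each site at its first hit. Capture is permanent, so the successful event is that every site is captured. By a union bound over~(\ref{eq:S-capture}) with $t_c=T$, the failure probability is at most $n\exp[-(uT-\pi/4)^2/(2\kappa T)]$. On success the law is exactly $\nu_G$ by covariance, and the unit trace diameter handles the failure mass. Setting the failure bound equal to $\epsilon$ gives Eq.~(\ref{eq:S-graph-tangent}).

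\textbf{Main obstacles.} I expect the main difficulty in Step~1: the product ansatz must be the unique strong solution, and each site's radial comparison must remain valid when $h_j$ depends on the other sites' records. This requires checking that the per-site radial Brownian motion is Brownian in the joint filtration, which Lévy's characterization gives. The secondary difficulty is extending the truncated-moment estimate to $p=2$ uniformly in $\delta$ for the reflected $d=2$ boundary.
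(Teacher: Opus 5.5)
Your proposal is correct and follows essentially the paper's own route. You pull back by $V_G$ to $n$ single-qubit filters whose product structure is preserved pathwise, with each site's radial Brownian motion Brownian in the joint public filtration, and then apply the single-qubit stationary comparison for the lower bound. For the upper bound you use own-record factorized maximal-inward controllers with the product-fidelity inequality, Jensen, the capture tail, the $d=2$, $p=2$ truncated moment, and a union bound at $\delta=0$.

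The differences are cosmetic. In Step~2 you bound the support distance pointwise through $|\langle\Phi,\bigotimes_k r_k\rangle|\le\cos q_j$, whereas the paper pushes the law through the one-Lipschitz logical marginal $\pi_j(\rho)=\tr_{\ne j}(V_G^\dagger\rho V_G)$ and uses Wasserstein contraction; both rest on the same pathwise product structure. In Step~3, rather than merely letting $\varepsilon\downarrow0$, absorb the smoothing discrepancy into $B_2$, as the paper does, so that an actual admissible policy satisfies Eq.~(\ref{eq:S-graph-upper}).
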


\begin{proof}
Pull back the conditional state by $V_G^\dagger$.
The observed dynamics become $n$ complete single-qubit $X/Y/Z$ filters driven by independent innovations, plus the sum of on-site Hamiltonians in Eq.~(\ref{eq:S-graph-controls}).
Starting from the product prior, the conditional state remains a product along every sample path even when $h_j$ depends on the records of other sites; such cross-record dependence may classically correlate the site laws, so it is not used for the upper construction.

Define the logical marginal channel
\begin{equation}
 \pi_j(\rho)=\tr_{\ne j}(V_G^\dagger\rho V_G).
\label{eq:S-graph-marginal}
\end{equation}
Unitary invariance and contractivity of trace distance under partial trace make $\pi_j$ one-Lipschitz, and $(\pi_j)_\#\nu_G=\nu_R$.
The Wasserstein contraction inequality therefore gives
\begin{equation}
 \Wone(\Law[\rho_T],\nu_G)
 \ge W_{1,\mathrm{tr}}((\pi_j)_\#\Law[\rho_T],\nu_R).
\label{eq:S-graph-Wcontraction}
\end{equation}
In the joint public filtration, the radial martingale of site $j$ is a predictable unit-vector projection of its three innovations and hence remains Brownian.
Its Hamiltonian radial drift has magnitude at most $u$.
The single-qubit comparison leading to Eq.~(\ref{eq:S-Ldelta}) therefore applies even when $h_j$ uses all past public records, proving Eq.~(\ref{eq:S-graph-lower}).

For the upper bound choose each $h_j$ to depend only on its own initial label and record, with no shared control randomization.  For $\delta>0$, first estimate the product of the limiting maximal-inward weak laws.  Since $\sin^2q$ is bounded and continuous, the smooth public approximants of Theorem~\ref{thm:S-finite} converge in this second moment as well.  For every fixed $n,u,\delta,T$, choose their scales so that the additional sitewise second moments are at most a constant times $(\kappa\delta/u)^2$; absorb that constant into $B_2$.  The bound below is therefore achievable with public smooth policies, rather than requiring strong implementability of the discontinuous selector.
The terminal law then factorizes.
For pure product states, writing
$D_j=d_{\rm tr}(\rho_j,\sigma_j)$ gives
\begin{equation}
 d_{\rm tr}\!\left(\bigotimes_j\rho_j,\bigotimes_j\sigma_j\right)
 =\sqrt{1-\prod_j(1-D_j^2)}
 \le\sqrt{\sum_jD_j^2}.
\label{eq:S-graph-telescope}
\end{equation}
Here every $\rho_j$ and $\sigma_j$ is pure, so
$1-D_j^2=|\langle\psi_j|\phi_j\rangle|^2$ and product fidelity gives
the equality.
For each site, assign unit cost on capture failure and use
Lemma~\ref{lem:S-multi-truncated-moments} with $d=2,p=2$ after a
successful hit.  In this dimension $b_{2,\delta}\le0$, so the stopped
argument proving Eq.~(\ref{eq:S-capture}) gives
$p_{\rm cap}\le e^{-(ut_c-\pi/4)_+^2/(2\kappa t_c)}$ also for capture
of $q_*=\kappa\delta/u$.  Since $D_j^2\le1$ on failure and the lemma
bounds its conditional second moment on success,
Jensen's inequality gives
$\E\sqrt{\sum_jD_j^2}\le\sqrt{\sum_j\E D_j^2}$ and hence
$\E d_{\rm tr}\le\sqrt{n\{p_{\rm cap}+B_2(\kappa\delta/u)^2\}}$,
which is Eq.~(\ref{eq:S-graph-upper}).
Inverting Eqs.~(\ref{eq:S-graph-lower}) and (\ref{eq:S-graph-upper}), using Proposition~\ref{prop:S-crossover}, yields Eq.~(\ref{eq:S-graph-cost}).
At $\delta=0$, the union bound makes the probability that any site misses the target no larger than $n$ times Eq.~(\ref{eq:S-capture}); on joint success the $O(2)^n$ covariance makes the logical target law exactly $\nu_R^{\otimes n}$.
Unitary pushforward by $V_G$ then proves Eq.~(\ref{eq:S-graph-tangent}).
\end{proof}

The absence of a matching $n$ dependence in
Eqs.~(\ref{eq:S-graph-lower}) and (\ref{eq:S-graph-upper}) is not resolved by
tensorizing the single-site Wasserstein inequalities.
\begin{proposition}[Marginal Wasserstein bounds do not tensorize]
\label{prop:S-graph-no-tensorization}
Let $\nu_R$ be the single-site real-Haar target and let
$\nu_n=\nu_R^{\otimes n}$.  There are joint laws $\mu_n$ for which every
single-site marginal has the same strictly positive Wasserstein discrepancy
$a c_0$, while
\begin{equation}
 W_{1,\rm tr}(\mu_n,\nu_n)\le a,
 \qquad 0<a<1,
\label{eq:S-graph-no-tensorization}
\end{equation}
with $c_0>0$ independent of $a$ and $n$.  Consequently no universal
$c\sqrt n$ amplification follows from the marginal inequalities alone.
\end{proposition}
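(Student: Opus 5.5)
The plan is an explicit construction based on \emph{correlated contamination}. The joint law will put almost all of its mass exactly on the target $\nu_n$. The remaining small mass $a$ is one fixed product state, placed identically on every site. Every site then sees the same marginal defect, while the joint defect still costs at most the contaminated mass. Concretely, let $y=[|{+i}\rangle]$ be the single-qubit state with Bloch vector $\bm e_y$, and define
\begin{equation*}
 \mu_n=(1-a)\,\nu_R^{\otimes n}+a\,\delta_{y^{\otimes n}},\qquad 0<a<1.
\end{equation*}

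\emph{Joint upper bound.} I will couple the common $(1-a)\nu_n$ mass identically at zero cost. The remaining mass $a$, sitting at $y^{\otimes n}$, is coupled to an arbitrary $\nu_n$-distributed partner. The pure-state trace distance is at most one, so this gives $W_{1,\rm tr}(\mu_n,\nu_n)\le a$, which is Eq.~(\ref{eq:S-graph-no-tensorization}). This is the same mixture-coupling argument already used for the tangent capture bound in the tangent-capture section.

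\emph{Exact marginal discrepancy.} By permutation symmetry every single-site marginal equals $(1-a)\nu_R+a\delta_y$. The key fact is that $y$ is equidistant from the whole target circle. For every real unit vector $r$ one has $|\langle r|{+i}\rangle|^2=1/2$, so $d_{\rm tr}(y,[r])=1/\sqrt2$ for every point of $\mathbb{RP}^1$; equivalently $q=\pi/4$ and $\sin q=1/\sqrt2$ in Eq.~(\ref{eq:S-metric}).
\begin{itemize}
\item The lower bound comes from the support bound Eq.~(\ref{eq:S-support}). Target-supported points contribute zero and $y$ contributes $1/\sqrt2$, so the marginal discrepancy is at least $a/\sqrt2$.
\item The upper bound comes from coupling the common $(1-a)\nu_R$ mass identically and the atom at $y$ to any target point. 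This costs exactly $a/\sqrt2$.
\end{itemize}
Hence each marginal discrepancy equals $a c_0$ with $c_0=1/\sqrt2$, independent of $a$ and $n$.

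\emph{Consequence.} Suppose a universal inequality of the form $W_{1,\rm tr}(\mu,\nu_n)\ge c\sqrt n\,\min_j W_{1,\rm tr}(\mu_j,\nu_R)$ held. For these laws it would give $a\ge c\sqrt n\,a/\sqrt2$. That fails once $n>2/c^2$, so marginal inequalities alone cannot yield $\sqrt n$ amplification.

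The only nonroutine step is recognizing that an equidistant point makes the marginal bound exact, so that $c_0$ is independent of $a$ rather than merely bounded below. Everything else reuses Eq.~(\ref{eq:S-support}) and the mixture coupling. If exactness were not needed, any fixed non-real state would give $c_0\ge\sin q>0$ by the same two bounds.
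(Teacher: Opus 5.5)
Your proof is correct. Its skeleton matches the paper's: the correlated mixture $(1-a)\nu_R^{\otimes n}+a\,\sigma^{\otimes n}$, together with the identical-coupling bound $W_{1,\rm tr}(\mu_n,\nu_n)\le a$. The difference lies in which atom you choose and how you prove that the marginal discrepancy is exact.

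The paper takes $\sigma=\delta_{r_0}$ at a \emph{real} ray. Then $\mu_n$ is supported on the target family, and each marginal defect is purely a matter of distribution along $\mathbb{RP}^1$. The paper gets exactness from Kantorovich--Rubinstein duality. That identity, $W_{1,\rm tr}((1-a)\nu_R+a\sigma,\nu_R)=a\,W_{1,\rm tr}(\sigma,\nu_R)$, in fact holds for \emph{every} atom $\sigma$; the paper leaves $c_0=W_{1,\rm tr}(\delta_{r_0},\nu_R)$ implicit.

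You instead place the atom at the cut-locus state $[|{+i}\rangle]$, which is at distance $1/\sqrt2$ from every point of $\mathbb{RP}^1$. This makes the elementary support bound Eq.~(\ref{eq:S-support}) and the product coupling meet, giving the explicit value $c_0=1/\sqrt2$ without any duality argument.

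What your version buys is relevance. The per-site lower bound in the graph corollary, $L_\delta(u)$, is itself a support-distance bound, and your example shows directly that per-site support defects fail to tensorize. In fact the joint cost is exactly $a\sqrt{1-2^{-n}}$, by the same support bound against the set of real product states. What the paper's version buys is generality: it needs no equidistance. So your closing remark that exactness requires a special atom is true only of your support-bound route, not of the statement itself.

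One small wording fix: in the marginal upper bound, couple the atom to a $\nu_R$-distributed partner rather than to a single target point, so that the second marginal is $\nu_R$. Equidistance makes the cost $a/\sqrt2$ either way.
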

\begin{proof}
Fix a real ray $r_0$ and put $\sigma=\delta_{r_0}$ and
$c_0=W_{1,\rm tr}(\sigma,\nu_R)>0$.  Define
\begin{equation*}
 \mu_n=(1-a)\nu_n+a\sigma^{\otimes n}.
\end{equation*}
Each marginal is $(1-a)\nu_R+a\sigma$.  An optimal
Kantorovich--Rubinstein potential for $(\sigma,\nu_R)$ gives the lower bound
$a c_0$, and convexity gives the reverse bound, so the marginal distance is
exactly $a c_0$.  Coupling the common $(1-a)\nu_n$ component identically and
using that trace distance has diameter one on states gives
Eq.~(\ref{eq:S-graph-no-tensorization}).  Conjugation by $V_G$ preserves the
same statement for graph-dressed laws.
\end{proof}
This law-level construction does not assert that $\mu_n$ is generated by an
admissible monitored controller.  It proves that the present marginal
contraction argument cannot supply a thermodynamic lower bound without a new
dynamical anti-concentration or decorrelation theorem.  The factorized upper
controller has the required independence, whereas the all-controller lower
class in Eq.~(\ref{eq:S-graph-controls}) permits cross-record correlations.

The preceding corollary uses a common per-site cap.  To expose the many-body resource count, allow finite heterogeneous bounds
$u_j=\operatorname*{ess\,sup}_{t,\omega}\|h_{j,t}\|_{\rm op}<\infty$ for predictable dressed-local terms and impose the static additive contract
\begin{equation}
 \sum_{j=1}^n u_j\le S.
\label{eq:S-graph-total-budget}
\end{equation}
\begin{proposition}[linear total-peak requirement]
\label{prop:S-graph-total}
Under Eq.~(\ref{eq:S-graph-total-budget}), every predictable dressed-local policy satisfies
\begin{equation}
 W_{1,\rm tr}(\Law[\rho_T],\nu_G)
 \ge L_\delta(S/n).
\label{eq:S-graph-total-lower}
\end{equation}
For fixed $0<\delta\le1$, reaching error $\epsilon\downarrow0$ therefore requires
\begin{equation}
 S\ge \frac{n\kappa\delta}{2\epsilon}[1-o(1)].
\label{eq:S-graph-linear-scaling}
\end{equation}
\end{proposition}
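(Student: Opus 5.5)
The strategy is to reduce the heterogeneous-budget problem to the homogeneous per-site bound of Corollary~\ref{cor:S-graph-local}, applied at a site whose cap is at most the average. Since $\sum_j u_j\le S$, some site $j_0$ has $u_{j_0}\le S/n$. The budgets $u_j$ are static essential suprema fixed by the contract, not random, so $j_0$ can be chosen deterministically before the run.

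For this site we repeat the single-site lower-bound argument from the proof of Corollary~\ref{cor:S-graph-local}.

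First, pull back the state by $V_G^\dagger$ and use the one-Lipschitz marginal map $\pi_{j_0}$ of Eq.~(\ref{eq:S-graph-marginal}). The Wasserstein contraction inequality~(\ref{eq:S-graph-Wcontraction}) then reduces the claim to the single-qubit law of site $j_0$ against $\nu_R$.

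Next, the conditional state remains a product along every path, even when $h_{j_0}$ depends on the records of other sites. The radial martingale of site $j_0$ is a predictable unit-vector projection of that site's three innovations, so by Lemma~\ref{lem:S-innovations} and L\'evy's characterization it is Brownian in the joint public filtration. By the speed bound~(\ref{eq:S-speed}), its Hamiltonian radial drift satisfies $|v_t|\le\|h_{j_0,t}\|_{\rm op}\le u_{j_0}$ almost surely.

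Lemma~\ref{lem:S-comparison} then applies with $U=u_{j_0}$. It couples a stationary reflected reference process $Q_t\le q_t$, and via Eq.~(\ref{eq:S-Ldelta}) gives an error of at least $L_\delta(u_{j_0})$.

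Finally, $L_\delta$ is nonincreasing in $U$. This follows from the covariance identity~(\ref{eq:S-L-monotone}) specialized to $d=2$, or simply from nestedness of the control classes combined with the comparison argument. Hence $L_\delta(u_{j_0})\ge L_\delta(S/n)$, which proves Eq.~(\ref{eq:S-graph-total-lower}).

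For the scaling~(\ref{eq:S-graph-linear-scaling}), suppose some policy reaches error at most $\epsilon$. Then $L_\delta(S/n)\le\epsilon$. As $\epsilon\downarrow0$, and because $L_\delta(U)\to0$ only as $U\to\infty$ at fixed $\delta>0$, this forces $r=S/(n\kappa\sqrt\delta)\to\infty$. In that regime the uniform boundary-layer expansion~(\ref{eq:S-uniformlayer}) of Proposition~\ref{prop:S-crossover} gives $\kappa\delta/(2S/n)\,[1+O(r^{-2})]\le\epsilon$. Rearranging gives $S\ge n\kappa\delta[1-o(1)]/(2\epsilon)$. As in the remark after Eq.~(\ref{eq:S-necessarywindow}), this inequality is applied to each feasible policy before any infimum is taken, so attainment is never needed.

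The main obstacle is making sure the single-site comparison survives arbitrary cross-site record dependence and heterogeneous caps. The key points are that the bound $\|h_{j_0}\|\le u_{j_0}$ holds almost surely, and that the projected innovation of site $j_0$ remains Brownian in the full public filtration. Both were already established in the proof of Corollary~\ref{cor:S-graph-local}, so I would simply cite that argument with $u$ replaced by $u_{j_0}$.

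The pigeonhole step itself is elementary. The only care needed is to use the minimum cap rather than an average of the $L_\delta(u_j)$, because convexity of $L_\delta$ is not available.
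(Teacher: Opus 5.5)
Your proposal is correct and matches the paper's proof step for step. Both arguments pick, by pigeonhole, a site with $u_j\le S/n$, and then apply the marginal contraction through $\pi_j$ together with the single-site all-controller comparison to get $W_{1,\rm tr}\ge L_\delta(u_j)$. Both then use monotonicity of $L_\delta$ to pass to $L_\delta(S/n)$, and invert with the large-peak expansion of Proposition~\ref{prop:S-crossover}. You spell out some details the paper leaves implicit, such as why $r\to\infty$ is forced and why the bound is applied to each feasible policy before taking an infimum.
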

\begin{proof}
Some site $j$ has $u_j\le S/n$.
The marginal contraction argument in Eqs.~(\ref{eq:S-graph-marginal})--(\ref{eq:S-graph-Wcontraction}) and the single-site all-controller comparison give
$W_{1,\rm tr}\ge L_\delta(u_j)$.
The stationary comparison in Lemma~\ref{lem:S-comparison} makes $L_\delta$ nonincreasing in its peak argument, so
$L_\delta(u_j)\ge L_\delta(S/n)$.
The large-peak expansion in Proposition~\ref{prop:S-crossover} yields Eq.~(\ref{eq:S-graph-linear-scaling}).
\end{proof}
Equation~(\ref{eq:S-graph-linear-scaling}) applies to the static sum-of-site-peaks contract $\sum_j\operatorname*{ess\,sup}_{t,\omega}\|h_{j,t}\|_{\rm op}\le S$.  The dynamically reallocated budget $\operatorname*{ess\,sup}_{t,\omega}\sum_j\|h_{j,t}\|_{\rm op}\le S$ does not supply the uniform site cap used by the pigeonhole step.  The result neither assumes additivity of global trace-Wasserstein error nor prices simultaneous-control electronics, graph-unitary preparation, or routing.

This corollary transfers the product construction by unitary covariance; the dressing specifies its entanglement structure.
It removes the exponential monitor count and unrestricted global actuation from one explicit contract while trading the full complex-Haar prior and full $\mathbb{RP}^{2^n-1}$ target for graph-dressed product laws.
Bounded support alone does not price parity-measurement circuits, graph preparation depth, routing, ancillas, detector efficiency, or feedback delay, and Eq.~(\ref{eq:S-graph-cost}) is a fixed-$n$ statement rather than an optimized thermodynamic scaling law.

\section{Target-circle generator and nonuniform laws}

Parameterize the real ray by
\begin{equation*}
 |\psi_\vartheta\rangle=(\cos\vartheta,\sin\vartheta)^{\mathsf T},
 \qquad \vartheta\in\mathbb R/\pi\mathbb Z.
\end{equation*}
Its Bloch vector is $(\sin2\vartheta,0,\cos2\vartheta)$.
Projecting the $X$ and $Z$ informational innovations and the $Y$ random-unitary innovation gives
\begin{equation}
 \dd\vartheta=\sqrt\kappa\cos(2\vartheta)\dd W_X
 -\sqrt\kappa\sin(2\vartheta)\dd W_Z
 +\sqrt\kappa\dd W_Y.
\label{eq:S-angle}
\end{equation}
The first two contributions have total variance rate $\kappa$ and the last has variance rate $\kappa$.
Therefore $\dd[\vartheta]_t=2\kappa\dd t$ and the generator is exactly
$\kappa\partial_\vartheta^2$.
This resolves the apparent factor-two ambiguity that results if the Bloch angle $2\vartheta$ is confused with the ray angle $\vartheta$.

A Hamiltonian $H_t=h_t(\vartheta)\sigma_y$ generates $\dot\vartheta=h_t(\vartheta)$ and has norm $|h_t|$.
The density on the circle obeys
\begin{equation}
 \partial_t\rho_t=\kappa\partial_\vartheta^2\rho_t
 -\partial_\vartheta(h_t\rho_t).
\label{eq:S-FP}
\end{equation}
Let $u=1/\pi$ and let the desired density $p$ be periodic, $C^2$, normalized, and satisfy $m_p:=\inf p>0$.
Choose $0<t_c<T$ and $a\in C^1([t_c,T])$ with
$a(t_c)=0$, $a(T)=1$, and $a'(t_c)=a'(T)=0$.
Set
\begin{equation}
 \rho_t=(1-a(t))u+a(t)p.
\label{eq:S-interpolation}
\end{equation}
Then $\rho_t\ge m_0:=\min\{u,m_p\}>0$.
The periodic current
\begin{equation}
 J_t(\vartheta)=\kappa\partial_\vartheta\rho_t(\vartheta)
 -\int_0^\vartheta\partial_t\rho_t(s)\dd s
\label{eq:S-current}
\end{equation}
satisfies $J_t(\pi)=J_t(0)$ because the mass of $\rho_t$ is constant.
Taking
\begin{equation}
 h_t(\vartheta)=\frac{J_t(\vartheta)}{\rho_t(\vartheta)}
\label{eq:S-transport}
\end{equation}
makes Eq.~(\ref{eq:S-FP}) identical to the prescribed interpolation.
Moreover
\begin{equation}
 \|h\|_\infty\le
 V_p:=\frac{\kappa\|p'\|_\infty
 +\pi\|a'\|_\infty\|p-u\|_\infty}{m_0}<\infty.
\label{eq:S-Vp}
\end{equation}

Capture by time $t_c$ with failure probability at most $\epsilon$, leave the successful uniform subensemble invariant until $t_c$, and apply Eq.~(\ref{eq:S-transport}) afterward.
The successful terminal law is exactly $p$, and the retained failures contribute at most $\epsilon$ in trace-$W_1$.
Thus
\begin{equation}
 \C_0(\epsilon;\nu_p)\le
 \max\!\left\{
 \frac{\pi}{4t_c}+\sqrt{\frac{2\kappa}{t_c}\log\frac1\epsilon},
 V_p\right\}.
\label{eq:S-nonuniformupper}
\end{equation}
The normal support lower bound is independent of the density along $\M$, so it remains $\Omega(1/\epsilon)$.

\subsection{Conditional residual-noise extension for regular nonuniform laws}

The preceding $\delta=0$ construction is complete.
For $\delta>0$, extending the matching upper bound to a nonuniform target additionally requires a tube estimate that is not proved in this manuscript; this subsection records the exact condition rather than promoting the extension to the main theorem.
Let $p$ be a fixed normalized periodic $C^{2,\gamma}$ density with $\inf p>0$.
On the target circle the projected uncontrolled generator is
\begin{equation}
 a_\delta\partial_\vartheta^2,
 \qquad a_\delta=\kappa(1-\delta/2)\in[\kappa/2,\kappa].
\label{eq:S-projectedgenerator}
\end{equation}
Replacing $\kappa$ by $a_\delta$ in Eqs.~(\ref{eq:S-current})--(\ref{eq:S-transport}) gives a tangent field with a uniform bound $V_{p,\delta}\le V_p^*$.

Fix a deterministic capture deadline $0<t_c<T$.
Use the normal controller with peak $U_N$ before $t_c$, retain every failed trajectory, and continue the inward controller after a successful hit so as to maintain only a stochastic tube layer until $t_c$.
For $\delta>0$, ``maintain'' never means exact invariance of $\mathcal M$: positive normal quadratic variation makes exact pathwise holding impossible under finite-variation Hamiltonian control.
Only at the common deterministic time $t_c$ is the tangent interpolation started.
The combined Hamiltonian is
$H=H_N+H_\parallel$ and obeys
\begin{equation}
 \|H\|_{\rm op}\le U_N+V_{p,\delta}\le U.
\label{eq:S-budgetallocation}
\end{equation}
Thus one may take $U_N=U-V_p^*$; for $U\ge2V_p^*$, $U_N\ge U/2$.

In Fermi coordinates $(\vartheta,n)$ about $\M$, smoothness of the finite-dimensional filter gives on every fixed tube $q\le q_1<\pi/4$
\begin{equation}
 \left|\mathcal K_\delta(g\circ\Pi)
 -a_\delta g''\circ\Pi\right|
 \le Cq(\|g'\|_\infty+\|g''\|_\infty).
\label{eq:S-tubegenerator}
\end{equation}
The missing uniform input is the following: for the successful, maintained process and a smooth cutoff before the projection cut locus, first and second tube moments must obey
\begin{equation}
 \sup_{t_c\le t\le T}\E[q_t\mid\tau_*\le t_c]
 \le C_1\frac{\kappa\delta}{U_N},\qquad
 \sup_{t_c\le t\le T}\E[q_t^2\mid\tau_*\le t_c]
 \le C_2\left(\frac{\kappa\delta}{U_N}\right)^2,
\label{eq:S-tubeassumption}
\end{equation}
with constants uniform over the stated $\delta,U_N$ range, and the cutoff-annulus contribution must satisfy the same order.
These estimates must hold for the combined field $H_N+H_\parallel$, including the chosen off-target extension of $H_\parallel$.
They are suggested by Eq.~(\ref{eq:S-truncatedorder}) but a full proof, including the cutoff-annulus contribution, is left open.

Conditional on Eq.~(\ref{eq:S-tubeassumption}), the backward Kolmogorov equation for the uniformly elliptic circle process gives
$\|g_t'\|_\infty\le C$ and
$\|g_t''\|_\infty\le C[1+(T-t)^{-1/2}]$.
It\^o's formula, Eq.~(\ref{eq:S-tubegenerator}), and the integrability of $(T-t)^{-1/2}$ then yield
\begin{equation}
 W_{1,\rm tr}(\Law(\Pi\rho_T\mid\tau_*\le t_c),\nu_p)
 \le C_p\frac{\kappa\delta}{U_N}.
\label{eq:S-projectedstability}
\end{equation}
Coupling to the projection and retaining the failed runs would imply
\begin{equation}
 \C_\delta(\epsilon;\nu_p)
 \le C_p'+C_1\sqrt{\log(1/\epsilon)}
 +C_{p,2}\frac{\kappa\delta}{\epsilon}.
\label{eq:S-nonuniformrobust}
\end{equation}
Equation~(\ref{eq:S-nonuniformrobust}) and its matching fixed-$\delta$ consequence are therefore conditional on Eq.~(\ref{eq:S-tubeassumption}); they are not used in the abstract, Theorem 1, or the main conclusion.

\section{Finite-efficiency all-control floor and delay boundary}
\label{sec:S-efficiency}

For efficiency $0\le\eta<1$, the accessible record reconstructs a conditional density matrix rather than a hidden pure trajectory.  The consistent output metric is therefore
\begin{equation}
 W_{1,\rm tr}\!\left(\Law[\rho_T(Y)],\iota_\#\nu\right),
 \qquad \iota([\psi])=|\psi\rangle\langle\psi|.
\label{eq:S-mixed-output}
\end{equation}
For any density matrix,
\begin{equation}
 \inf_{|\psi\rangle}d_{\rm tr}(\rho,|\psi\rangle\langle\psi|)
 =1-\lambda_{\max}(\rho),
\label{eq:S-pure-distance}
\end{equation}
so every pure-target ensemble obeys the operational lower bound
$W_{1,\rm tr}\ge\E[1-\lambda_{\max}(\rho_T)]$.

The complete-Pauli instrument admits a controller-uniform statement before any radial reduction.  Retain the normalization in Eq.~(\ref{eq:S-multi-L}), allow an arbitrary fixed quadrature phase $\phi_P$ for each nonidentity Pauli string, and write
\begin{equation}
 \dd\rho_t=\Gamma(I/d-\rho_t)\dd t-i[H_t,\rho_t]\dd t
 +\frac{\sqrt{\eta\Gamma}}{d}\sum_{P\in\mathcal P_n^\circ}
 \mathcal H_{e^{-i\phi_P}P}(\rho_t)\dd W_{P,t},
 \label{eq:S-inefficient-Pauli-SME}
\end{equation}
where
$\mathcal H_C(\rho)=C\rho+\rho C^\dagger-\tr[(C+C^\dagger)\rho]\rho$.
The $W_P$ are independent innovations of the accessible records.  The Hamiltonian is Hermitian, bounded, and predictable and enters only through the finite-variation term $H_t\dd t$; the contract excludes same-increment current feedback, additional resets outside the fixed instrument dilation, postselection, and hidden records.

\begin{theorem}[finite-efficiency complete-Pauli pure-target floor]
\label{thm:S-complete-Pauli-efficiency-floor}
Let $d=2^n$, let the pure initial state be arbitrary, and let $\rho_t$ solve Eq.~(\ref{eq:S-inefficient-Pauli-SME}) for any phases $\{\phi_P\}$ and any allowed Hamiltonian policy.  Define
\begin{equation}
 \alpha_\eta=(1-\eta)(d-1),\qquad
 \beta_\eta=d(1-\eta)+4\eta .
 \label{eq:S-efficiency-coefficients}
\end{equation}
Then every pure target law $\iota_\#\nu$ satisfies
\begin{equation}
 W_{1,\rm tr}\!\left(\Law[\rho_T],\iota_\#\nu\right)
 \ge \frac{\alpha_\eta}{2\beta_\eta}
 \left[1-\exp\!\left(-\frac{2\Gamma \beta_\eta}{d}T\right)\right].
 \label{eq:S-complete-Pauli-efficiency-floor}
\end{equation}
The bound holds for $n=1$ as well as for the multiqubit regime; the separate restriction $n\ge2$ in Theorem~\ref{thm:S-multiqubit} belongs to its pure-state radial comparison.
\end{theorem}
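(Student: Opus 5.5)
We prove the floor by controlling the mean impurity $x_t=\E[1-\tr\rho_t^2]$ rather than $\lambda_{\max}$ directly. The plan has three steps: reduce the Wasserstein distance to impurity, derive a differential inequality for $x_t$, and integrate it.

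\emph{Step 1 (reduction to impurity).} Combining Eq.~(\ref{eq:S-support}) with Eq.~(\ref{eq:S-pure-distance}) gives, for every pure target law,
\[
W_{1,\rm tr}\bigl(\Law[\rho_T],\iota_\#\nu\bigr)\ge\E[1-\lambda_{\max}(\rho_T)].
\]
With eigenvalues $\lambda_i$, we have
\[
1-\tr\rho^2=\sum_i\lambda_i(1-\lambda_i)\le\lambda_{\max}(1-\lambda_{\max})+(1-\lambda_{\max})\le2(1-\lambda_{\max}).
\]
Hence $W_{1,\rm tr}\ge x_T/2$, and it suffices to show
\[
x_T\ge\frac{\alpha_\eta}{\beta_\eta}\left(1-e^{-2\Gamma\beta_\eta T/d}\right).
\]
This target already has the factor $\tfrac12$ and the rate $2\Gamma\beta_\eta/d$ of Eq.~(\ref{eq:S-complete-Pauli-efficiency-floor}).

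\emph{Step 2 (It\^o drift of the purity).} Apply It\^o's formula to $\mathcal P_t=\tr\rho_t^2$ in Eq.~(\ref{eq:S-inefficient-Pauli-SME}). The Hamiltonian term contributes $-2i\tr(\rho[H,\rho])=0$, so the estimate is automatically uniform over every predictable policy, including arbitrary record memory. The depolarizing drift contributes $2\Gamma(1/d-\mathcal P)$. Because the innovations are independent, the It\^o correction is
\[
\frac{\eta\Gamma}{d^2}\sum_{P}\tr\bigl[\Hc_{C_P}(\rho)^2\bigr],\qquad C_P=e^{-i\phi_P}P.
\]
The stochastic integral is a bounded-integrand martingale, since $\rho$ is bounded, so it vanishes in expectation. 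Rearranging, the desired inequality $\dot x\ge(2\Gamma/d)(\alpha_\eta-\beta_\eta x)$ is equivalent to the pointwise bound
\[
\sum_{P\ne I}\tr\bigl[\Hc_{C_P}(\rho)^2\bigr]\le 2d^2\mathcal P-2d+8d(1-\mathcal P)
\]
for every density matrix and every choice of phases.

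\emph{Step 3 (the Pauli-sum estimate, the main obstacle).} Expand
\[
\Hc_{C}(\rho)=C\rho+\rho C^\dagger-2\operatorname{Re}\tr(C\rho)\,\rho.
\]
The squared trace then splits into three types of terms:
\begin{itemize}
\item phase-independent terms such as $\tr(P\rho^2P)$ and $\tr(\rho P\rho P)$;
\item phase-dependent terms $e^{\mp2i\phi_P}\tr(P\rho P\rho)$;
\item terms carrying $\operatorname{Re}\tr(C_P\rho)$.
\end{itemize}
Sum the phase-independent terms with the completeness relations
\[
\sum_{P\ne I}PXP=d\tr(X)I-X,\qquad \sum_{P\ne I}\tr(PA)\tr(PB)=d\tr(AB)-\tr A\tr B.
\]
For the phase-dependent terms, first try the bound $\operatorname{Re}\bigl(e^{-2i\phi}\tr(P\rho P\rho)\bigr)\le|\tr(P\rho P\rho)|$. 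Then use $\tr(P\rho P\rho)\ge0$, since $\rho^{1/2}P\rho P\rho^{1/2}\ge0$, so the Pauli sum again collapses to $d-\mathcal P$. Handle the $\operatorname{Re}\tr(C_P\rho)$ terms by discarding the negative $-4(\operatorname{Re}\tr C_P\rho)^2\mathcal P$ contribution where convenient, or bounding it by Cauchy--Schwarz against $\sum_P\tr(P\rho)^2=d\mathcal P-1$.

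I expect the bookkeeping of these mixed terms to be the delicate part. A sanity check guides the constants: at a pure state with $\eta=1$ and tangent-type phases, purity must be preserved, so the bound should be tight at $\mathcal P=1$ (giving $2d(d-1)$). The slack $8d(1-\mathcal P)$ should absorb the error from the triangle-inequality treatment of the phases. If the crude bound overshoots, refine by writing $\rho=\sum_i\lambda_i|i\rangle\langle i|$ and bounding the cross terms by $\sum_{i\ne j}\lambda_i\lambda_j\le1-\mathcal P$.

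Once Step 3 is in hand, Gr\"onwall's comparison for the linear ODE $\dot x\ge(2\Gamma/d)(\alpha_\eta-\beta_\eta x)$ with $x_0=0$ (pure initial state) gives the required lower bound on $x_T$. Step 1 then yields Eq.~(\ref{eq:S-complete-Pauli-efficiency-floor}). Nothing in the argument uses $n\ge2$, so the bound also covers $n=1$.
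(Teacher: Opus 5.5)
Your Steps 1 and 2 and the closing Gr\"onwall comparison are correct and coincide with the paper's argument. Your reformulation of the needed estimate as $\sum_{P\ne I}\tr[\Hc_{C_P}(\rho)^2]\le 2d^2\mathcal P-2d+8d(1-\mathcal P)$ is exactly right.

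\textbf{The gap is in Step 3.} The tightness at $\mathcal P=1$ that you noticed holds for \emph{every} phase assignment. At a pure state, $\tr[\Hc_{C_P}(\rho)^2]=2(1-m_P^2)$ with $m_P=\tr(P\rho)$, independently of $\phi_P$. So both sides equal $2d(d-1)$, and the slack $8d(1-\mathcal P)$ is zero precisely where your estimates lose. The exact expansion is
\[
\tr[\Hc_{C_P}(\rho)^2]=2\cos(2\phi_P)\,t_P+2\mathcal P+\cos^2\phi_P\bigl[4m_P^2\mathcal P-8m_P\tr(P\rho^2)\bigr],\qquad t_P=\tr(P\rho P\rho).
\]
Note that the squared-mean term enters with a plus sign; the negative piece is the cross term, whose sign is indefinite for mixed $\rho$.

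Replacing $\cos2\phi_P$ by $1$ loses $4\sin^2\phi_P\,t_P$, which equals $4\sin^2\phi_P m_P^2$ at a pure state. Dropping a negative mean contribution loses at least $4\cos^2\phi_P m_P^2$ there. It is exactly the pure-state cancellation between these two phase-dependent pieces that makes the bound tight, and your separate treatment destroys it.

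For instance, take all $\phi_P=\pi/2$, which the theorem permits. The triangle-inequality loss then sums to $4\sum_P m_P^2=4(d-1)$. Your lower bound on the impurity drift at a pure state becomes $(2\Gamma/d)(d-1)[1-\eta(d+2)/d]$. This is negative for $\eta>d/(d+2)$, so you get no floor at all in the near-unit-efficiency regime the theorem is about.

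The eigenbasis refinement cannot repair this, because the loss occurs at pure states, where all cross terms $\lambda_i\lambda_j$ vanish. The completeness identities also cannot be applied directly: they evaluate unweighted Pauli sums, not sums weighted by string-dependent factors $\cos^2\phi_P$.

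\textbf{The missing idea} is to eliminate the phase string by string, keeping the two phase-dependent pieces coupled, before summing. Set $A_P=\{P,\rho\}-2m_P\rho$ and $B_P=-i[P,\rho]$, so that $\Hc_{C_P}=\cos\phi_PA_P+\sin\phi_PB_P$ with $\tr(A_PB_P)=0$. Using $\cos2\phi_P=2\cos^2\phi_P-1$, the expansion becomes
\[
\tr[\Hc_{C_P}(\rho)^2]=\tr B_P^2+\cos^2\phi_P(\tr A_P^2-\tr B_P^2),\qquad \tr A_P^2-\tr B_P^2=4\tr[(\sqrt\rho(P-m_PI)\sqrt\rho)^2]\ge0 .
\]
Hence each term is bounded by its informational-quadrature value $\tr A_P^2$. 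Completeness now applies to the unweighted sum and gives
\[
\sum_{P\ne I}\tr A_P^2=2d(1+d\mathcal P-4\tr\rho^3+2\mathcal P^2)\le 2d(1+d\mathcal P-2\mathcal P^2),
\]
using $\tr\rho^3\ge\mathcal P^2$. This lies below your target by $4d(1-\mathcal P)^2\ge0$, which closes Step 3 and is precisely the paper's route.
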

\begin{proof}
Put $p=\tr\rho^2$, $I_{\rm lin}=1-p$, $r_3=\tr\rho^3$, and $m_P=\tr(P\rho)$.  For each Pauli string define
\begin{equation*}
 A_P=\{P,\rho\}-2m_P\rho,
 \qquad B_P=-i[P,\rho].
\end{equation*}
Then
$\mathcal H_{e^{-i\phi_P}P}=\cos\phi_P A_P+\sin\phi_P B_P$,
$\tr(A_PB_P)=0$, and
\begin{equation}
 \tr A_P^2-\tr B_P^2
 =4\tr[(P-m_PI)\rho(P-m_PI)\rho]
 =4\tr[(\sqrt\rho(P-m_PI)\sqrt\rho)^2]\ge0.
 \label{eq:S-phase-purity-order}
\end{equation}
Thus the informational quadrature maximizes the It\^o purity correction pointwise over the phase.  Pauli completeness gives
\begin{align}
 \sum_{P\ne I}m_P^2&=dp-1,
 &\sum_{P\ne I}m_P\tr(P\rho^2)&=dr_3-p,\nonumber\\
 \sum_{P\ne I}\tr(P\rho P\rho)&=d-p,
 &\sum_{P\ne I}\tr A_P^2&=2d(1+dp-4r_3+2p^2).
 \label{eq:S-Pauli-purity-identities}
\end{align}
Applying It\^o's formula to $p$ eliminates the Hamiltonian contribution exactly.  Equation~(\ref{eq:S-phase-purity-order}), Eq.~(\ref{eq:S-Pauli-purity-identities}), and $r_3\ge p^2$ yield
\begin{equation}
 \dd p\le\frac{2\Gamma}{d}
 [1+\eta-d(1-\eta)p-2\eta p^2]\dd t+\dd M_t,
 \label{eq:S-purity-drift-bound}
\end{equation}
where $M_t$ is a true martingale on every finite horizon because the state space and coefficients are bounded.  Equivalently,
\begin{equation}
 \dd I_{\rm lin}\ge\frac{2\Gamma}{d}
 [\alpha_\eta-\beta_\eta I_{\rm lin}+2\eta I_{\rm lin}^2]\dd t+\dd N_t.
 \label{eq:S-linear-entropy-drift}
\end{equation}
For a pure initial state, $y(t)=\E I_{\rm lin}(t)$ therefore obeys
 $y'(t)\ge(2\Gamma/d)[\alpha_\eta-\beta_\eta y(t)]$ and $y(0)=0$.  Scalar comparison gives
\begin{equation*}
 \E I_{\rm lin}(T)\ge\frac{\alpha_\eta}{\beta_\eta}
 [1-e^{-2\Gamma \beta_\eta T/d}].
\end{equation*}
Finally, if $q=1-\lambda_{\max}(\rho)$, then
$I_{\rm lin}=2q-q^2-\sum_{j>1}\lambda_j^2\le2q$.
Combining this inequality with Eq.~(\ref{eq:S-pure-distance}) proves Eq.~(\ref{eq:S-complete-Pauli-efficiency-floor}).
\end{proof}

The two limiting checks are explicit.  At $\eta=1$ the bound vanishes, consistently with pure conditional trajectories.  At $\eta=0$, the entropy comparison becomes the exact identity
$\E I_{\rm lin}(T)=(1-1/d)(1-e^{-2\Gamma T})$ for an initially pure depolarizing trajectory.  Near unit efficiency,
\begin{equation}
 \epsilon_{\rm CP}(\eta)
 =\frac{d-1}{8}\left(1-e^{-8\Gamma T/d}\right)(1-\eta)
 +O((1-\eta)^2),
 \label{eq:S-complete-Pauli-efficiency-window}
\end{equation}
so terminal accuracy $W_{1,\rm tr}\le\epsilon$ requires $1-\eta=O(\epsilon)$ at fixed $d,\Gamma,T$.  The theorem supplies a positive controller-uniform obstruction for the main informational and partially tangent instruments.  It does not identify their exact mixed-state optimum, its dependence on the peak $U$, or a matching attainable policy.  Earlier finite-efficiency purification results treat a qubit under fixed complementary channels or a single monitored observable \cite{RuskovKorotkovMolmer2010,RuskovCombesMolmerWiseman2012,JiangWangMartinWhaley2020}; simultaneous noncommuting qubit monitoring has also been realized experimentally \cite{HacohenGourgy2016Noncommuting}.  Equation~(\ref{eq:S-complete-Pauli-efficiency-floor}) instead covers the complete $n$-qubit Pauli frame, arbitrary phase assignments, every finite-variation Hamiltonian in the declared class, and a law-level pure-target metric.

An isotropic no-knowledge benchmark gives a sharper exact special case.  To prevent confusion with the radial $\kappa$ used elsewhere, denote this benchmark's depolarizing rate by $\gamma_{\rm eff}$.
Let a complete traceless Hermitian basis be normalized so that, after ideal cancellation of the visible stochastic unitaries, the accessible conditional state obeys
\begin{equation}
 \dot\rho_t=-i[H_t,\rho_t]
 +(1-\eta)\gamma_{\rm eff}(I/d-\rho_t).
\label{eq:S-efficiency-master}
\end{equation}
The feedback $H_t\dd t$ is predictable and of finite variation; same-increment stochastic-current feedback has already been used only to cancel the visible part.
\begin{proposition}[exact efficiency floor after ideal visible-noise cancellation]
\label{prop:S-efficiency-floor}
For the complete isotropic no-knowledge benchmark after ideal cancellation of the visible stochastic unitaries, a pure initial state and every remaining finite-variation Hamiltonian policy in Eq.~(\ref{eq:S-efficiency-master}) satisfy
\begin{equation*}
 \rho_T=sU_T\rho_0U_T^\dagger+(1-s)I/d,
 \qquad s=e^{-(1-\eta)\gamma_{\rm eff} T},
\end{equation*}
and every pure-target law satisfies
\begin{equation}
 W_{1,\rm tr}\ge
 \epsilon_{\rm floor}(\eta)
 =\left(1-\frac1d\right)
 [1-e^{-(1-\eta)\gamma_{\rm eff} T}].
\label{eq:S-efficiency-floor}
\end{equation}
Thus $\epsilon_{\rm floor}\le\epsilon<1-1/d$ requires
\begin{equation}
 1-\eta\le-\frac1{\gamma_{\rm eff} T}
 \log\!\left(1-\frac{\epsilon}{1-1/d}\right)
 =\frac{\epsilon}{(1-1/d)\gamma_{\rm eff}T}+O(\epsilon^2).
\label{eq:S-efficiency-window}
\end{equation}
\end{proposition}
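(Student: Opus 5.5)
The statement is Proposition~\ref{prop:S-efficiency-floor}. Since Eq.~(\ref{eq:S-efficiency-master}) is a deterministic, linear ODE given the predictable Hamiltonian path, I will solve it exactly and then apply the pure-distance identity~(\ref{eq:S-pure-distance}).

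\textbf{Explicit solution.} Let $\gamma=(1-\eta)\gamma_{\rm eff}$ and let $U_t$ solve $\dot U_t=-iH_tU_t$ with $U_0=I$. This propagator exists pathwise as a unitary, because $H_t$ is bounded and measurable in $t$. Set $\tilde\rho_t=U_t^\dagger\rho_tU_t$; the commutator term then cancels. The identity is invariant under conjugation, $U_t^\dagger(I/d)U_t=I/d$, so $\dot{\tilde\rho}_t=\gamma(I/d-\tilde\rho_t)$. This is a scalar-coefficient linear equation, so $\tilde\rho_T=s\rho_0+(1-s)I/d$ with $s=e^{-\gamma T}$. Conjugating back gives the stated form of $\rho_T$. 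The conclusion holds for every record-dependent policy, because the argument is pathwise.

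\textbf{Floor.} The state $\rho_0$ is pure, so $U_T\rho_0U_T^\dagger$ is a rank-one projector. The spectrum of $\rho_T$ is therefore $s+(1-s)/d$, once, and $(1-s)/d$ with multiplicity $d-1$. Hence $1-\lambda_{\max}(\rho_T)=(1-s)(1-1/d)$ deterministically. By Eq.~(\ref{eq:S-pure-distance}), every coupling to a law supported on pure states costs at least $\E[1-\lambda_{\max}(\rho_T)]$, as in the support argument behind Eq.~(\ref{eq:S-support}). This proves Eq.~(\ref{eq:S-efficiency-floor}).

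\textbf{Window.} The condition $\epsilon_{\rm floor}\le\epsilon<1-1/d$ rearranges to $e^{-\gamma T}\ge1-\epsilon/(1-1/d)>0$. Taking logarithms gives Eq.~(\ref{eq:S-efficiency-window}), and expanding $-\log(1-x)=x+O(x^2)$ gives the first-order form.

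I expect the only point needing care to be the meaning of ``ideal cancellation''. The statement takes Eq.~(\ref{eq:S-efficiency-master}) as given, so no stochastic term remains and no It\^o correction arises; I would state this explicitly. The rest is routine.
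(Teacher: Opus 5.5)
Your proposal is correct and follows the paper's proof. The paper solves Eq.~(\ref{eq:S-efficiency-master}) using the fact that the isotropic depolarizer commutes with unitary conjugation; your interaction-picture computation makes this step explicit. It then reads off $\lambda_{\max}(\rho_T)=s+(1-s)/d$ on every record, applies Eq.~(\ref{eq:S-pure-distance}), and inverts the floor algebraically exactly as you do.
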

\begin{proof}
The isotropic depolarizer commutes with unitary conjugation, giving the displayed affine solution.
Its largest eigenvalue is $s+(1-s)/d$.
Equation~(\ref{eq:S-pure-distance}) is therefore the same on every record and proves Eq.~(\ref{eq:S-efficiency-floor}); algebraic inversion gives Eq.~(\ref{eq:S-efficiency-window}).
\end{proof}
This proposition uses ideal same-increment cancellation before applying the remaining finite-variation controller, so its exact value is not the minimax cost under the narrower class $\A_U$.  It is retained as a solvable benchmark and is distinct from the universal lower bound in Theorem~\ref{thm:S-complete-Pauli-efficiency-floor}.

Delay and detector bandwidth are also outside the continuous finite-variation control theorem.
For the declared sampled explicit controller, resolving the stationary normal layer requires the dimensionless condition
$hU^2/(\kappa\delta)\ll1$.
At fixed $h$, however, the allowed final correction slot gives $\C_{\delta,h}\le\pi/(2h)$, so the limits $h\to0$ and $\epsilon\to0$ do not commute.
The retained nonmonotone refinement points prevent this controller-specific scale from being promoted to a universal convergence or hardware-bandwidth lower theorem.

\section{Exact weak instruments and the macro-clock contract}
\label{sec:S-instruments}

We first give the single-qubit realization at per-channel rate $\kappa$.  The complete-Pauli version follows by using rate $\Gamma/d^2$ per string; its radial scale remains $\Gamma/(2d)$.

For $P^2=I$ and
$p_h=(1-e^{-2\kappa h})/2$, define
\begin{equation*}
 K_s=\frac{\sqrt{1-p_h}I+s\sqrt{p_h}e^{-i\phi}P}{\sqrt2},
 \qquad s=\pm1.
\end{equation*}
Expansion gives
\begin{align}
 K_s^\dagger K_s
 &=\frac12\left[I+2s\sqrt{p_h(1-p_h)}\cos\phi\,P\right],\\
 \sum_sK_s^\dagger K_s&=I,\\
 \sum_sK_s\rho K_s^\dagger
 &=(1-p_h)\rho+p_hP\rho P=e^{h\kappa\D[P]}\rho.
\label{eq:S-channel}
\end{align}
The conditional outcome mean is
\begin{equation*}
 \E[s\mid\rho]=\sqrt{1-e^{-4\kappa h}}\cos\phi\,\tr(P\rho).
\end{equation*}
Thus $\phi=0$ is informational and $\phi=\pi/2$ is a no-signal random-unitary readout.

\begin{table}[tb]
\caption{Operational comparison of the two endpoint instruments.  ``Matched'' refers to the declared resource contract, not to equal information content in the records.}
\label{tab:S-contract}
\centering
\begin{tabular}{p{0.22\textwidth}p{0.25\textwidth}p{0.25\textwidth}p{0.18\textwidth}}
\toprule
Operational item & normal instrument ($\delta=1$) & tangent instrument ($\delta=0$) & status \\
\midrule
System--ancilla interaction & $V_h$ with Pauli $X,Y,Z$ & the same $V_h$ & matched \\
Discarded-record channel & $e^{h\mathcal L_{\rm dep}}$ & $e^{h\mathcal L_{\rm dep}}$ & exactly matched \\
Outcome stream & three public binary outcomes per macro-step & the same alphabet and rate & matched \\
$Y$ readout basis & informational, $\phi=0$ & no-signal random unitary, $\phi=\pi/2$ & varied instrument property \\
State information in $Y$ record & generally nonzero & zero instantaneous signal & intentionally different \\
Prior, efficiency, horizon & known Haar prior, $\eta=1$, $T$ & the same & matched \\
Control and update & predictable $H_t\dd t$, $\|H_t\|\le U$, after a complete triple & the same & matched \\
Excluded resources & hidden records, postselection, same-increment current feedback, and additional resets outside the fixed dilation & the same & matched \\
\bottomrule
\end{tabular}
\end{table}

The comparison therefore holds access, bandwidth at the model clock, and action privileges fixed while changing the readout basis.
It does not assert equal mutual information: conditional geometry and record informativeness are jointly determined by that basis.

An explicit collision-model dilation prepares a fresh ancilla in $|0\rangle_a$, sets
$\sin^2\theta_h=p_h$, and applies
\begin{equation}
 V_{P,h}=e^{-i\theta_hP\otimes Y_a}.
\label{eq:S-collision-unitary}
\end{equation}
Its action defines the isometry
\begin{equation*}
 V_{P,h}|\psi\rangle|0\rangle_a=\sqrt{1-p_h}|\psi\rangle|0\rangle_a
 +\sqrt{p_h}P|\psi\rangle|1\rangle,
\end{equation*}
followed by the ancilla basis
$|s_\phi\rangle=(|0\rangle+s e^{i\phi}|1\rangle)/\sqrt2$.
Indeed, ${}_a\!\langle s_\phi|V_{P,h}|0\rangle_a=K_s$.
Preparing and resetting this ancilla is part of the fixed instrument in both columns of Table~\ref{tab:S-contract}; an adaptive reset of the system or an extra ancilla channel is not.

The clock convention removes a possible factor-three ambiguity.
One interval $[nh,(n+1)h)$ contains three weak interactions whose strengths are each calibrated by the same model increment $h$.
They implement a stroboscopic approximation to three channels acting simultaneously at rates $\kappa$; their laboratory pulse durations and duty cycle are absorbed into that calibration.
The controller receives all three outcomes and updates once at the macro-boundary.
It is therefore incorrect to reinterpret the mathematical clock advance as $3h$ while retaining the same $p_h$.
If laboratory hardware instead allocates three subintervals of duration $h/3$, its couplings must be rescaled so that the integrated strength of each channel remains $\kappa h$.

Because $\D[X],\D[Y],\D[Z]$ commute,
\begin{equation*}
 \prod_{P=X,Y,Z}e^{h\kappa\D[P]}
 =\exp\!\left(h\kappa\sum_P\D[P]\right)
 =e^{h\mathcal L_{\rm dep}},
\end{equation*}
where $\mathcal L_{\rm dep}\rho=4\kappa(I/2-\rho)$.
This exact equality applies to the feedback-free macro-channel.
Reading an outcome and acting between sub-instruments defines a different timing contract and is not in $\A_U$ for the finite-step comparison.

For $p_h=\kappa h+O(h^2)$, the normalized one-step conditional increment matches
\begin{equation*}
 \dd\rho=\kappa\D[P]\rho\dd t
 +\sqrt\kappa\,\Hc_{e^{-i\phi}P}(\rho)\dd W,
\qquad
 \Hc_C(\rho)=C\rho+\rho C^\dagger-\tr[(C+C^\dagger)\rho]\rho.
\end{equation*}
The three martingale arrays have vanishing cross-covariances and converge to independent innovations.
For bounded Lipschitz Markov feedback, standard weak-convergence stability gives convergence of the discrete state process and hence of its compact-state terminal $W_1$ law.
The main all-predictable theorem is instead a theorem about the limiting filter itself; no uniform discrete-to-continuous limit over all predictable controls is claimed.

\subsection{Instrument-constrained diffusion layer and relation to QuDDPM}

The finite instrument above can be used as one physically constrained generative layer without changing the main resource contract.  For $d=2^n$, set $\gamma=\Gamma/d^2$, replace $\kappa$ by $\gamma$ in the one-string construction, and enumerate the nonidentity Pauli strings as $P_1,\ldots,P_m$, where $m=d^2-1$.  Write
\begin{equation}
 \mathcal I^{(P,\phi)}_{s,h}(\rho)
 =\frac{K^{(P,\phi)}_{s,h}\rho K^{(P,\phi)\dagger}_{s,h}}
 {\tr[K^{(P,\phi)}_{s,h}\rho K^{(P,\phi)\dagger}_{s,h}]}.
\label{eq:S-normalized-instrument}
\end{equation}
For the public outcome vector $\mathbf s_n=(s_{1,n},\ldots,s_{m,n})$, one macro-step is
\begin{align}
 \rho_{n+1/2}&=
 \mathcal I^{(P_m,\phi_m)}_{s_{m,n},h}\circ\cdots\circ
 \mathcal I^{(P_1,\phi_1)}_{s_{1,n},h}(\rho_n),\nonumber\\
 H_{n+1}&=H_\theta(t_{n+1},\rho_{n+1/2},\mathbf s_{0:n},\xi),
 \qquad \|H_{n+1}\|_{\rm op}\le U,\nonumber\\
 \rho_{n+1}&=e^{-ihH_{n+1}}\rho_{n+1/2}e^{ihH_{n+1}}.
\label{eq:S-constrained-diffusion-layer}
\end{align}
The complete outcome is available before the finite-variation control slot, and the resulting Hamiltonian is used predictably on the next clock interval.  The parameter $\theta$ may label a learned controller, but neither learning nor a particular parameterization is needed for the cost theorem.  The candidate continuum process is
\begin{align}
 \dd\rho_t={}&\gamma\sum_{j=1}^{m}\D[P_j]\rho_t\dd t
 -i[H_\theta(t,\rho_t),\rho_t]\dd t\nonumber\\
 &+\sqrt\gamma\sum_{j=1}^{m}
 \Hc_{e^{-i\phi_j}P_j}(\rho_t)\dd W_{j,t}.
\label{eq:S-complete-diffusion-SME}
\end{align}

\begin{proposition}[fixed-policy continuum bridge]
\label{prop:S-discrete-continuum-bridge}
Fix a continuous phase schedule and a uniformly bounded Markov feedback $H_\theta(t,\rho)$ that is Lipschitz in time and state.  As $h\downarrow0$, the piecewise-constant interpolation of Eq.~(\ref{eq:S-constrained-diffusion-layer}) converges weakly to Eq.~(\ref{eq:S-complete-diffusion-SME}).  In particular,
\begin{equation}
 W_{1,\rm tr}\!\left(\Law(\rho^{(h)}_T),\Law(\rho_T)\right)\longrightarrow0.
\label{eq:S-terminal-W1-bridge}
\end{equation}
\end{proposition}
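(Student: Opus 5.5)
The plan is to prove Proposition~\ref{prop:S-discrete-continuum-bridge} by the standard Markov-chain diffusion-approximation route (Stroock--Varadhan / Ethier--Kurtz). The chain $\rho_n$ of Eq.~(\ref{eq:S-constrained-diffusion-layer}) is time-inhomogeneous Markov on the compact set of density matrices, since $H_\theta$ is Markov in $(t,\rho)$; a declared seed $\xi$ is handled by conditioning on it and integrating at the end. The idea is to identify the limiting generator from conditional one-step moments, obtain tightness for free from compactness plus a moment bound, and conclude from well-posedness of the limiting martingale problem. Terminal $W_1$ convergence then follows because $\rho\mapsto\rho_T$ is continuous at paths of a continuous limit, and $d_{\rm tr}$ is bounded on the compact state space.

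First I compute the one-step moments. For a single normalized instrument $\mathcal I^{(P,\phi)}_{s,h}$ with $p_h=\gamma h+O(h^2)$, expanding $K_s$ gives the outcome probability $\tfrac12[1+s\sqrt{4\gamma h}\cos\phi\,\tr(P\rho)+O(h)]$. The posterior increment is $s\sqrt{\gamma h}\,\Hc_{e^{-i\phi}P}(\rho)+h\gamma\D[P]\rho+O(h^{3/2})$, with the remainder uniform in $\rho$ because all operators are bounded. Composing the $m$ instruments in sequence changes the state by $O(\sqrt h)$ per substep. Cross terms between different outcomes $s_j,s_k$ have conditional mean $O(\sqrt h)$ times $O(\sqrt h)$, which adds only $O(h)$ drift corrections; I must check that these vanish to leading order. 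They do: conditional on earlier outcomes, each $s_k$ has mean $O(\sqrt h)$, and that mean is exactly the term producing the $\tr[(C+C^\dagger)\rho]\rho$ compensation already inside $\Hc$. So the conditional mean of $\rho_{n+1/2}-\rho_n$ is $h\gamma\sum_j\D[P_j]\rho_n+o(h)$. The conditional covariance is $h\gamma\sum_j\Hc_j\otimes\Hc_j+o(h)$, since the $s_j^2=1$ terms contribute while the cross-covariances are $O(h^{3/2})$. The Hamiltonian slot contributes $-ih[H_\theta,\rho]+O(h^2)$. Lipschitz continuity in $t$ and in $\rho$ lets me evaluate $H_\theta$ at $\rho_n$ instead of $\rho_{n+1/2}$, at a cost of $O(h^{3/2})$ in the mean. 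Finally, a uniform bound $\E\|\Delta\rho\|^4\le Ch^2$ holds because every substep increment is $O(\sqrt h)$ deterministically.

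With these moments, Ethier--Kurtz Theorem 7.4.1 (or Stroock--Varadhan Ch.~11) applies directly. The drift and diffusion of Eq.~(\ref{eq:S-complete-diffusion-SME}) are continuous, and in fact polynomial in $\rho$ plus the Lipschitz $H_\theta$. The moment errors converge to zero uniformly on the compact state space, and the fourth-moment bound gives the jump condition. Convergence therefore reduces to uniqueness for the martingale problem of Eq.~(\ref{eq:S-complete-diffusion-SME}), and I do not expect this to be a real obstacle. The coefficients are locally Lipschitz on the linear span and the trace/positivity-preserving flow stays in the compact state set. Pathwise uniqueness of the SDE then holds, Yamada--Watanabe gives uniqueness in law, and no ellipticity is needed. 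The limit is continuous, so weak convergence in Skorokhod space gives $\rho^{(h)}_T\Rightarrow\rho_T$. Since $d_{\rm tr}$ metrizes a compact space, weak convergence is equivalent to $W_{1,\rm tr}$ convergence, which yields Eq.~(\ref{eq:S-terminal-W1-bridge}).

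The main obstacle is the bookkeeping in the first step. The sequential composition of $m$ noncommuting normalized instruments within one macro-step has to produce exactly the independent-innovation generator, without a spurious $O(h)$ drift from the ordering. I would handle this by a second-order Taylor expansion of the composed map in the variables $\sqrt h\,s_j$. The cross-second-derivative terms have conditional expectation $O(h^{3/2})$ because of the near-symmetric outcome distribution. This is where I would spend the care, ideally checking the expansion once for $m=2$ before writing the general case.
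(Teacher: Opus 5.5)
Your proposal follows essentially the same route as the paper's proof. Both arguments use one-step conditional moments to $o(h)$, move the control slot from $(t_{n+1},\rho_{n+1/2})$ back to $(t_n,\rho_n)$ at $O(h^{3/2})$ cost via the Lipschitz hypothesis, verify a Lindeberg/moment condition, invoke uniqueness of the limiting martingale problem, and use compactness to pass to terminal $W_{1,\rm tr}$. You give more detail than the paper on uniqueness, via pathwise uniqueness and Yamada--Watanabe after a Lipschitz extension of $H_\theta$ off the state space.

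One local correction is needed. Your displayed single-instrument increment omits a deterministic second-order normalization term, $-\gamma h\,\tr[(C+C^\dagger)\rho]\,\Hc_C(\rho)$. The outcome bias produces the drift $\sqrt{\gamma h}\,\E[s\mid\rho]\,\Hc_C(\rho)=\gamma h\,\tr[(C+C^\dagger)\rho]\,\Hc_C(\rho)+O(h^2)$. It is the omitted term that cancels this drift, not the $-\tr[(C+C^\dagger)\rho]\rho$ piece already inside $\Hc_C$. Your stated conditional mean $\gamma h\D[P]\rho+O(h^{3/2})$ is correct, but your written expansion, taken literally, would leave this spurious drift.
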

\begin{proof}
Write $\rho=\rho_n$ and $\widetilde\rho=\rho_{n+1/2}$ for the pre- and post-measurement states.  Uniformly on the compact state space, the conditional first two moments of $\widetilde\rho-\rho$ give the measurement drift and covariance of Eq.~(\ref{eq:S-complete-diffusion-SME}) up to $o(h)$, with conditional third absolute moment $O(h^{3/2})$.  The finite unitary slot expands at the post-measurement state as
\begin{equation*}
 \rho_{n+1}-\widetilde\rho
 =-ih[H_\theta(t_{n+1},\widetilde\rho),\widetilde\rho]+O(h^2).
\end{equation*}
Boundedness and Lipschitz continuity imply
$\E[\|\widetilde\rho-\rho\|\mid\rho]=O(\sqrt h)$.
Expressing the commutator at $(t_n,\rho)$ therefore incurs conditional $L^1$ error $O(h^{3/2})$ (and $O(h^2)$ from the time shift when the feedback is Lipschitz in time).  The control increment has second moment $O(h^2)$, and its cross moment with the measurement increment is $O(h^{3/2})$.  It contributes the required drift and no leading quadratic variation.  All these remainders are $o(h)$, so the discrete generators converge on smooth test functions and the martingale arrays satisfy the Lindeberg condition.  Here the stated Lipschitz hypothesis is in both time and state; the fixed phase schedule is taken continuous in time.  Uniqueness of the limiting martingale problem gives weak convergence of the interpolated chains.  Compactness makes trace distance bounded and continuous, which upgrades convergence of terminal laws to Eq.~(\ref{eq:S-terminal-W1-bridge}).
\end{proof}

This bridge applies to an instrument-constrained subclass of quantum diffusion layers, compared below with the original QuDDPM architecture \cite{Zhang2024QuDDPM}.  A QuDDPM reverse step learns a joint system--fresh-ancilla unitary and then measures the ancilla; the learned unitaries are fixed after training.  Equation~(\ref{eq:S-constrained-diffusion-layer}) instead fixes the weak collision and readout instrument, publishes its outcomes, and restricts the adjustable operation to a record-conditioned, system-only Hamiltonian with an explicit peak budget.  Recent measurement-based and reverse-time quantum diffusion models do use continuous monitored trajectories and state- or record-conditioned Hamiltonian denoising, making them the closer algorithmic relatives \cite{Liu2025MBQDM,Bompais2026Reverse,Gabbassov2026ReverseSSE}.

\begin{table}[tb]
\caption{Operational relation between the original QuDDPM and the constrained diffusion layer.  ``Covered'' refers to the theorem proved here, not to expressive inclusion of one architecture in the other.}
\label{tab:S-diffusion-bridge}
\centering
\begin{tabular}{p{0.24\textwidth}p{0.32\textwidth}p{0.34\textwidth}}
\toprule
Item & Original QuDDPM & Instrument-constrained layer here \\
\midrule
Per-step quantum operation & learned joint system--ancilla unitary and ancilla measurement & fixed weak collision and readout, followed by bounded system Hamiltonian \\
Online use of the record & reverse unitaries fixed after training & controller may depend causally on the public record \\
Continuum statement & no identification asserted here & fixed-policy weak convergence to Eq.~(\ref{eq:S-complete-diffusion-SME}) \\
Theorem coverage & outside the control contract & all predictable finite-variation controls for the limiting SME \\
\bottomrule
\end{tabular}
\end{table}

Consequently the present construction is a meaningful \emph{record-conditioned controlled quantum diffusion}: the weak instrument supplies the reference stochastic process, its outcomes supply the latent trajectory, and the bounded Hamiltonian transports the simple Haar prior toward a target state-family law.  It is not by itself a full learned diffusion architecture.  This paper does not specify a data-to-prior forward schedule, prove that an optimizer learns $H_\theta$, or transfer the peak-cost theorem to unrestricted trainable joint unitaries.  Its contribution is instead a resource law inside a well-defined physical diffusion class: channel-equivalent noise layers can have different generation costs because their conditional covariance points differently relative to the target family.

\section{Finite-clock limit and applicability window}

Let $\C_{\delta,h}$ denote the analogue of Eq.~(\ref{eq:S-cost}) for the exact finite-step macro-instrument, with one arbitrary single-qubit Hamiltonian slot of duration $h$ after each complete $X/Y/Z$ triple.
This discrete contract has a terminal correction that has no bounded continuous-time counterpart.

\begin{proposition}[fixed-clock terminal correction]
\label{prop:S-terminalpulse}
For every $h>0$, $\delta\in[0,1]$, target probability law on pure qubit states, and $\epsilon\ge0$,
\begin{equation}
 \C_{\delta,h}(\epsilon)\le\frac{\pi}{2h}.
\label{eq:S-terminalpulse}
\end{equation}
\end{proposition}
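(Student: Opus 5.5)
The strategy is to use only the final control slot as a one-shot terminal pulse and let every earlier slot be idle. Under the finite-step contract, the last macro-step runs the complete $X/Y/Z$ triple on $[T-h,T)$. One arbitrary Hamiltonian slot of duration $h$ then acts on the post-measurement state $\rho_{N-1/2}$, and this state is a known pure qubit state determined by the public record. The delivered label is $\rho_T=e^{-ihH}\rho_{N-1/2}e^{ihH}$, and no further monitoring follows. I will therefore show that, for any target law $\nu$ on pure qubit states, a record-measurable choice of $H$ with $\|H\|_{\rm op}\le\pi/(2h)$ makes $\Law(\rho_T)=\nu$ exactly. This gives error $0\le\epsilon$ and hence $\C_{\delta,h}(\epsilon)\le\pi/(2h)$ for every $\delta$.

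The first step is a rotation lemma. For any two pure qubit states $[\psi],[\varphi]$ there is a Hermitian $H$ with $\|H\|_{\rm op}\le\pi/(2h)$ and $e^{-ihH}[\psi]=[\varphi]$. To build it, let the Bloch vectors be $\bm r,\bm s$ with angle $\alpha\in[0,\pi]$. Take $H=(\alpha/2h)\,\bm m\cdot\bm\sigma$, where $\bm m$ is a unit vector orthogonal to both. When $\bm r=\pm\bm s$, any orthogonal unit $\bm m$ will do. The unitary $e^{-ih H}$ rotates the Bloch sphere by angle $\alpha$ about $\bm m$, and $\|H\|_{\rm op}=\alpha/(2h)\le\pi/(2h)$. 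I will choose $\bm m$ by a fixed Borel rule, namely the normalized cross product, with a fixed fallback on the measure-zero antipodal and identical sets. This makes the map $(\psi,\varphi)\mapsto H$ Borel.

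The second step converts the rotation lemma into a policy with the right terminal law. Use the declared independent public seed $\xi$, which is permitted in $\mathcal F_0^{\rm obs}$, to draw $\varphi\sim\nu$ independently of everything else; a measurable sampler exists because the pure-state space is a standard Borel space. Set $H=0$ in all slots except the last. In the last slot, apply the rotation-lemma Hamiltonian taking $\rho_{N-1/2}$ to $\varphi$. This $H$ is a Borel function of the public record through $\rho_{N-1/2}$ and of $\xi$, so it is admissible in the declared post-triple slot. Then $\rho_T=\varphi$ almost surely, so $\Law(\rho_T)=\nu$ whatever the measurement history. The idle slots require no assumption on $\delta$, and since $\rho_T$ does not depend on the pre-pulse law, the bound holds for every prior and every target.

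The main obstacle is checking that the contract really allows this construction. Three points need confirming: the last slot follows the last triple with no monitoring afterwards, the seed is public and independent, and measurability holds on the degenerate sets. If the seed were disallowed, I would instead randomize using the final record bits. That alternative is more delicate, because it needs a measurable map from the record-conditioned law to $\nu$, for instance a quantile-type transport using the continuous component. I will therefore rely on $\xi$, which the paper explicitly includes in the public filtration.
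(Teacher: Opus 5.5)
Your proposal is correct and follows the paper's proof: sample the target ray with the declared independent seed, then use the final post-triple Hamiltonian slot as a single terminal pulse, which suffices because any two pure qubit rays are within Fubini--Study distance $\pi/2$ and $Uh\ge\pi/2$. Your explicit rotation $H=(\alpha/2h)\,\bm m\cdot\bm\sigma$ with a Borel choice of axis (on the antipodal set the axis must be a Borel function of $\bm r$ orthogonal to it, not a single fixed vector) is a concrete version of the paper's statement that norm-$U$ Hamiltonian motion can traverse distance $Uh$.
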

\begin{proof}
After the final triple, the complete record and known initial state determine the conditional pure state.
Use the declared independent seed to sample a target ray from the desired law.
Any two pure qubit rays have Fubini--Study distance at most $\pi/2$.
A Hamiltonian with norm $U$ can traverse distance $Uh$ in the final slot, so $Uh\ge\pi/2$ maps the conditional state exactly to the sampled target.
This is a causal construction under the discrete timing contract and uses neither postselection nor a hidden record.
\end{proof}

Thus fixed $h$ followed by $\epsilon\downarrow0$ gives a bounded sufficient peak, whereas the continuous residual-noise theorem gives a divergent peak for fixed $\delta>0$.
The limits are noncommuting because Eq.~(\ref{eq:S-terminalpulse}) concentrates a finite rotation into a slot whose duration vanishes as $h\to0$.
It should not be used as a counterexample to the continuous theorem or as evidence that clock rate is free.

The continuous boundary layer has width $w\asymp\kappa\delta/U$ and one finite step has normal diffusion scale $s_h\asymp\sqrt{\kappa\delta h}$.
A necessary resolution heuristic for an explicit discretization to see that layer is
\begin{equation}
 s_h\ll w
 \quad\Longleftrightarrow\quad
 \lambda_{h,\delta}:=\frac{hU^2}{\kappa\delta}\ll1.
\label{eq:S-clockparameter}
\end{equation}
Along the continuous fixed-$\delta$ accuracy law $U\asymp\kappa\delta/\epsilon$, this becomes
\begin{equation}
 h\ll\frac{\epsilon^2}{\kappa\delta}.
\label{eq:S-clockwindow}
\end{equation}
Equations~(\ref{eq:S-clockparameter})--(\ref{eq:S-clockwindow}) are an applicability scale obtained by comparing two resolved lengths.
They are not a uniform convergence theorem over all discrete feedback policies; Proposition~\ref{prop:S-terminalpulse} shows why such a statement would need an additional joint peak--bandwidth resource contract.

\section{Continuous numerics, finite-step diagnostics, and data provenance}
\label{sec:S-diagnostics}

\subsection{Main-figure continuous calculation for \texorpdfstring{$d=4$}{d=4}}
\label{sec:S-main-numerics}

Main Fig.~2(a) evaluates the radial value in Theorem~\ref{thm:S-multiqubit} for $d=4$, $\Gamma=4$, $\kappa=\Gamma/(2d)=0.5$, $T=1$, and the complex-Haar prior.  These deterministic calculations contain no trajectory-sampling error.  The numerical approximation and its resolution diagnostics are distinguished from the exact control-theoretic characterization of that value.

\paragraph{Coordinates, generator, and initial law.}
For the tangent case use $x=\sin^2(2q)\in[0,1]$.  It\^o's formula applied to Eq.~(\ref{eq:S-multi-radial}) gives the drift and variance
\begin{align*}
 c_x(x)&=4\kappa\{\delta(d-1)-(d\delta+1)x-(1-\delta)x^2\}
          -4U\sqrt{x(1-x)},\\
 a_x(x)&=16\kappa x(1-x)[\delta+(1-\delta)x].
\end{align*}
The implementation at $\delta=0$ uses nodes $x_i=i/N$.  The node $x=0$ is absorbing after first capture; the node $x=1$ has zero variance and the one-sided inward drift of this generator.  The exact initial CDF is $F_x(x)=x^{(d-1)/2}$, so node weights are its differences across Voronoi-cell edges.  For $d=4$, the density is $3\sqrt{x}/2$, not the uniform qubit density in $\sin(2q)$.

At positive $\delta$, the boundary layer is resolved in
$\theta=\arctan[\tan(2q)/\sqrt\delta]\in[0,\pi/2]$.
Set $g(\theta)=\cos^2\theta+\delta\sin^2\theta$; then
$f_\delta(q)=\delta/g(\theta)$ and
\begin{align*}
 c_\theta&=\frac{2\sqrt\delta}{f_\delta(q)}[b_{d,\delta}(q)-U]
            -\frac{\kappa\sqrt\delta\,f'_\delta(q)}{f_\delta(q)},
 &a_\theta&=4\kappa g(\theta),\\
 F_\theta(\theta)&=
 \left[\frac{\sqrt\delta\sin\theta}{\sqrt{g(\theta)}}\right]^{d-1}.
\end{align*}
There are $N$ cell centers $\theta_i=(i+1/2)\pi/(2N)$, with exact Haar cell masses from $F_\theta$.  Exterior links are omitted from the first and last discrete cells.  This is the numerical boundary closure for the diffusion's inaccessible endpoints, not an additional reflecting physical controller.  Its error is included in the successive-grid proxy and is not separately certified.

For either coordinate, the interior off-diagonal row-generator entries at grid spacing $\Delta$ are
\begin{equation*}
 Q_{i,i+1}=\frac{a_i}{2\Delta^2}+\frac{\max(c_i,0)}{\Delta},\qquad
 Q_{i,i-1}=\frac{a_i}{2\Delta^2}+\frac{\max(-c_i,0)}{\Delta},
 \qquad Q_{ii}=-\sum_{j\ne i}Q_{ij}.
\end{equation*}
Implicit Euler with $\Delta t=T/N_t$ propagates the payoff $\sin q$ under $Q$ and the initial mass under $Q^{\mathsf T}$, using sparse LU factorization.  Their dual evaluations, total mass, and minimum mass are saved with every run.

\paragraph{Displayed points and numerical resolution.}
Each of the $24$ parameter points has runs at $(N,N_t)=(512,4096)$, $(1024,2048)$, and $(1024,4096)$.  Six additionally use $(2048,4096)$: $(\delta,U)=(0,0)$, $(0.02,0)$, $(0.02,8)$, $(1,0)$, $(1,8)$, and $(1,16)$.  Table~\ref{tab:S-main-point-provenance} reports the chosen spatial grid and all values, including the three tangent points omitted from the logarithmic plot.  Let $e_x$ be the difference between the chosen spatial grid and the next coarser grid at $N_t=4096$, and $e_t$ the difference between $N_t=2048$ and $4096$ at $N=1024$.  Even when $N=2048$ is selected, the temporal diagnostic is on $N=1024$.  The plotted resolution proxy is $2(e_x+e_t)$; it is a refinement diagnostic, not a confidence interval or rigorous discretization bound.  Tangent values are displayed only when they exceed this proxy.  The original $78$ runs and all original plotted values are retained.

\begin{table}[tbp]
\caption{Main Fig.~2(a) point values and resolution proxies for $d=4$, $\kappa=0.5$, $T=1$. All chosen runs use $N_t=4096$. Columns $e_x,e_t$ are successive-grid and time differences; the displayed proxy is $2(e_x+e_t)$. The three starred tangent points are retained here but omitted from the plot because their values do not exceed that proxy. None of these proxies is a rigorous error bound.}
\label{tab:S-main-point-provenance}
\centering\small
\begin{tabular}{rrrrrrr}
\toprule
$\delta$ & $U$ & $N$ & $\widehat E^*$ & $e_x$ & $e_t$ & $2(e_x+e_t)$\\
\midrule
0 & 0 & 2048 & $0.05131$ & $0.000727$ & $2.717\times10^{-5}$ & $0.001508$ \\
0 & 1 & 1024 & $2.918\times10^{-5}$ & $7.419\times10^{-7}$ & $4.387\times10^{-7}$ & $2.361\times10^{-6}$ \\
0 & 2 & 1024 & $3.574\times10^{-9}$ & $3.107\times10^{-10}$ & $1.912\times10^{-10}$ & $1.004\times10^{-9}$ \\
0 & 4 & 1024 & $1.171\times10^{-19}\,{}^{*}$ & $5.113\times10^{-20}$ & $3.451\times10^{-20}$ & $1.713\times10^{-19}$ \\
0 & 8 & 1024 & $6.317\times10^{-51}\,{}^{*}$ & $3.934\times10^{-50}$ & $2.989\times10^{-50}$ & $1.385\times10^{-49}$ \\
0 & 16 & 1024 & $1.432\times10^{-149}\,{}^{*}$ & $8.023\times10^{-144}$ & $2.043\times10^{-143}$ & $5.691\times10^{-143}$ \\
0.02 & 0 & 2048 & $0.1061$ & $0.0001594$ & $1.953\times10^{-5}$ & $0.000358$ \\
0.02 & 1 & 1024 & $0.01479$ & $4.233\times10^{-5}$ & $6.466\times10^{-7}$ & $8.596\times10^{-5}$ \\
0.02 & 2 & 1024 & $0.007493$ & $3.128\times10^{-5}$ & $3.904\times10^{-10}$ & $6.256\times10^{-5}$ \\
0.02 & 4 & 1024 & $0.003775$ & $2.999\times10^{-5}$ & $2.725\times10^{-14}$ & $5.997\times10^{-5}$ \\
0.02 & 8 & 2048 & $0.001889$ & $1.483\times10^{-5}$ & $1.104\times10^{-14}$ & $2.967\times10^{-5}$ \\
0.02 & 16 & 1024 & $0.000967$ & $2.98\times10^{-5}$ & $2.075\times10^{-15}$ & $5.959\times10^{-5}$ \\
0.1 & 0 & 1024 & $0.1946$ & $0.0001437$ & $8.743\times10^{-6}$ & $0.0003048$ \\
0.1 & 1 & 1024 & $0.06805$ & $0.0001302$ & $1.428\times10^{-6}$ & $0.0002632$ \\
0.1 & 2 & 1024 & $0.03659$ & $8.269\times10^{-5}$ & $1.836\times10^{-9}$ & $0.0001654$ \\
0.1 & 4 & 1024 & $0.01869$ & $7.028\times10^{-5}$ & $1.438\times10^{-13}$ & $0.0001406$ \\
0.1 & 8 & 1024 & $0.009426$ & $6.738\times10^{-5}$ & $3.329\times10^{-15}$ & $0.0001348$ \\
0.1 & 16 & 1024 & $0.004752$ & $6.604\times10^{-5}$ & $7.738\times10^{-15}$ & $0.0001321$ \\
1 & 0 & 2048 & $0.4318$ & $2.006\times10^{-5}$ & $1.693\times10^{-11}$ & $4.013\times10^{-5}$ \\
1 & 1 & 1024 & $0.3449$ & $2.819\times10^{-5}$ & $4.343\times10^{-8}$ & $5.646\times10^{-5}$ \\
1 & 2 & 1024 & $0.2674$ & $9.767\times10^{-5}$ & $2.084\times10^{-8}$ & $0.0001954$ \\
1 & 4 & 1024 & $0.1687$ & $0.0001708$ & $3.538\times10^{-11}$ & $0.0003417$ \\
1 & 8 & 2048 & $0.09121$ & $9.924\times10^{-5}$ & $3.839\times10^{-13}$ & $0.0001985$ \\
1 & 16 & 2048 & $0.04664$ & $0.0001031$ & $2.188\times10^{-13}$ & $0.0002062$ \\
\bottomrule
\end{tabular}
\end{table}

The displayed tangent value at $U=2$ was also refined directly, as shown in Table~\ref{tab:S-tangent-U2-refinement}.  The four additional runs use the same solver.  The finest value, $3.2259917\times10^{-9}$, differs from the original $3.5743282\times10^{-9}$ by less than its $1.0039119\times10^{-9}$ proxy.  This check supports the small plotted value without claiming that the proxy is a proved error bound or replacing the original main-figure data.

\begin{table}[tb]
\caption{Direct refinement of the displayed tangent point $d=4,\delta=0,U=2,T=1$. Additional runs use the same positive generator as the main figure; they are not an independent solver.}
\label{tab:S-tangent-U2-refinement}
\centering
\begin{tabular}{rrr}
\toprule
$N$ & $N_t$ & $\widehat E^*$\\
\midrule
512 & 4096 & $3.8850345\times10^{-9}$ \\
1024 & 2048 & $3.7655778\times10^{-9}$ \\
1024 & 4096 & $3.5743282\times10^{-9}$ \\
1024 & 8192 & $3.4819185\times10^{-9}$ \\
2048 & 4096 & $3.4276104\times10^{-9}$ \\
2048 & 8192 & $3.3385702\times10^{-9}$ \\
4096 & 16384 & $3.2259917\times10^{-9}$ \\
\bottomrule
\end{tabular}
\end{table}

\paragraph{An observable same-budget window.}
For $\epsilon=0.01$, the analytic tangent construction gives $U_T^{\rm ub}=2.9313642$, while quadrature of the stationary lower bound at $\delta=1$ gives $U_N^{\rm lb}\simeq74.974995$.  Under a target-law perturbation $\sigma=0.002$, Eq.~(\ref{eq:S-perturbed-budget-window}) gives the sufficient/necessary thresholds $2.9827406$ and $62.469991$.  At the common budget $U=4$, the perturbed tangent error is at most $0.00203252$, whereas every normal controller has error at least approximately $0.166492$.  Independent quadratures in $q$ and in the main script's transformed coordinate agree within $2.1\times10^{-17}$ at these threshold examples.  This is floating-point consistency, not interval certification.  The underlying analytic inequalities supply the existence of the window.

\paragraph{Independent cross-discretization check.}
A second implementation does not import the upwind generator or its Haar
weights.  At $\delta=0$ it uses the regularized coordinate
$y=\sin(2q)=\sqrt{x}$, for which
\begin{equation}
 \dd y_t=\left[-4\kappa y_t-2U\sqrt{1-y_t^2}\right]\dd t
 +2\sqrt\kappa\,y_t\sqrt{1-y_t^2}\dd W_t,
 \qquad \rho_{\rm Haar}(y)=3y^2,
\label{eq:S-independent-y}
\end{equation}
and at positive $\delta$ it uses centered differences in $\theta$, sparse-LU
implicit Euler, PCHIP interpolation, and independent Haar quadrature.  The six points were fixed before comparison: $(\delta,U)=(0,0),(0,1),(0.02,0),(0.02,8),
(0.1,0),(1,8)$ and compared three-grid continuum diagnostics against the
upwind calculation.  Every cross-method difference was below
$\max(5\times10^{-6},0.25e_{\rm old})$.  The full gate, which additionally
requires monotone three-grid differences and observed order at least one,
passed at $(0,1),(0.02,8),(1,8)$.  It remains inconclusive at the other three
points because the upwind observed orders were $0.8981$, $0.9980$, and
$0.9988$.  The corresponding cross-method differences were
$7.83\times10^{-5}$, $4.01\times10^{-5}$, and $1.81\times10^{-5}$, all inside
their fixed tolerances.  All 56 raw refinement runs were finite; the complete
grid values, cutoff and time checks, software environment, script hashes, and
the three statuses remain inconclusive in the numerical record.  These
results support cross-method consistency but are not an a posteriori or
rigorous continuum error certificate, and the original Fig.~2 values remain
paired with their stated resolution proxies.

\paragraph{High-order zero-control check.}
The three unresolved $U=0$ points were subsequently recomputed by a third implementation that imports neither finite-difference generator.  It uses Chebyshev--Gauss--Lobatto method-of-lines discretization and continuous-time BDF propagation with an analytic matrix Jacobian.  For $\delta=0$ it works in $y=\sin(2q)$ with the first-hit absorbing condition at $y=0$; for $\delta=0.02,0.1$ it works directly in $x=\sin^2(2q)$ and enforces the exact degenerate endpoint equations, without a cutoff.  Independent Gauss--Jacobi quadrature and off-grid residual evaluation at seventeen positive times give the $N=256$ values
\begin{equation*}
 E_{0,0}^{(256)}=0.05227952905,\quad
 E_{0.02,0}^{(256)}=0.10594949609,\quad
 E_{0.1,0}^{(256)}=0.19444388366.
\end{equation*}
The corresponding maximum off-grid PDE residuals are
$4.67\times10^{-10}$, $5.80\times10^{-11}$, and $8.13\times10^{-11}$; strict-tolerance changes are below $1.4\times10^{-10}$, and halving the maximum BDF step changes none of the printed values.  These diagnostics support the terminal observable values, but a rigorous error bound still requires a residual-to-observable stability estimate.  The original order and residual-decrease criteria remain \textsc{inconclusive} at all three points; their thresholds, observed orders, and failed checks are retained in the reproduction notes.  Small floating-point residuals alone are not a proof of continuum accuracy.

\begin{figure}[t]
 \includegraphics[width=0.56\textwidth]{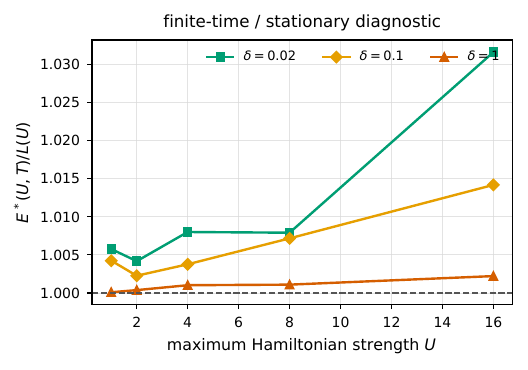}
 \caption{Continuous $d=4$ transient-to-stationary diagnostic.  The plotted quantity is the numerically evaluated finite-time optimum divided by the analytic stationary lower certificate.  Ratios close to one show relaxation at $T=1$; differences smaller than the reported discretization proxies are unresolved and are not used as evidence for the cost separation.}
 \label{fig:S-transient-ratio}
\end{figure}

\subsection{Full two-qubit finite-step check}

\begin{figure}[t]
 \includegraphics[width=0.94\textwidth]{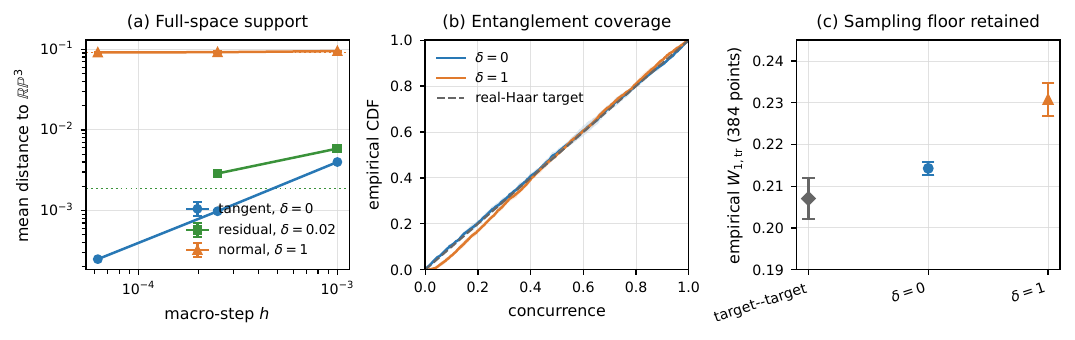}
 \caption{Frozen full-state $d=4$ diagnostics from the original full-step sample-and-hold controller.  Panel (c) uses twelve independent target--target ensembles, each containing $1024$ target states but using the first $384$ in the assignment.  This is distinct from the older qubit angular-$W_1$ sampling study.}
 \label{fig:S-twoqubit-old}
\end{figure}

The existing $d=4$ finite-step experiment implements all fifteen nonidentity two-qubit Pauli instruments on full complex-Haar initial states and stores every unprojected terminal vector.
It uses the same total depolarizing channel, $\Gamma=4$, $T=1$, and the explicit radial controller, with three independent seeds and $1024$ trajectories per condition.
At $U=8$, the mean nearest-$\mathbb{RP}^3$ trace distances were
\begin{center}
\begin{tabular}{c@{\qquad}cc}
\toprule
$h$ & tangent $\delta=0$ & normal $\delta=1$\\
\midrule
$10^{-3}$ & $0.003993\pm0.000034$ & $0.094987\pm0.001350$\\
$2.5\times10^{-4}$ & $0.000979\pm0.000020$ & $0.092039\pm0.001822$\\
$6.25\times10^{-5}$ & $0.000249\pm0.000001$ & $0.091474\pm0.000699$\\
\bottomrule
\end{tabular}
\end{center}
where the uncertainties are seed standard deviations.
The batch contains $67{,}584$ saved generated states over all conditions.
For the uniform tangent condition at $h=2.5\times10^{-4}$, the mean concurrence was $0.49334\pm0.00396$, compared with the analytic real-Haar mean $1/2$.
The $384$-point assignment distance had a target--target floor near $0.21$, so it cannot certify a population Wasserstein error at the support-distance scale.
These data test the declared controller and the full entangled target family; they do not replace the all-controller proof above or establish a scalable local implementation.

The stored controller applies a full angle $Uh$ before every fifteen-measurement macro-step.  For $0<q<Uh$ its exact unitary maps $q$ to $|q-Uh|$, so a phase spread inside this overshoot interval gives the observed mean residual close to $Uh/2$.  Deterministic checks verify branchwise real invariance, Born and macro-step normalization, the Hamiltonian norm, exact matrix exponentiation, global-phase covariance, operation order, and terminal save time.  The original policy is therefore a legal full-step sample-and-hold strategy, but it is not the stopped/limited strategy used in the continuous proof.

A separate finite-step diagnostic completes all fifteen measurements and then applies a Hamiltonian with slot amplitude $\min(U,q/h)\le U$.  With $U=8$, two seeds of $256$ trajectories at each of $h=10^{-3},2.5\times10^{-4},6.25\times10^{-5}$ all reached stable-coordinate support distance zero at the saved terminal time.  The raw vectors and step diagnostics are retained.  This result identifies the old $O(Uh)$ term as a policy overshoot rather than a physical tangent-noise floor; the post-macro-step policy belongs to the discrete cost $\C_{\delta,h}$ and is not a finite-$h$ proof of the continuous all-controller optimum.

For projected-label diagnostics, the squared coordinates of real Haar measure on $S^3$ follow a Dirichlet law whose four parameters all equal $1/2$.  Hence
\begin{equation*}
 \E x_i^2=\frac14,\qquad \E x_i^4=\frac18,\qquad
 \E x_i^2x_j^2=\frac1{24}\quad(i\ne j).
\end{equation*}
Recalculation from the unprojected frozen states gives pooled maximum deviations $(0.00309,0.00323,0.00098)$ for tangent $h=6.25\times10^{-5}$ and $(0.00716,0.00796,0.00089)$ for normal at the same $h$, compared with $(0.00474,0.00385,0.00075)$ in twelve pooled target ensembles.  These even moments are low-cost anisotropy diagnostics; the projection is used only to assign an evaluation label and does not certify the full law or alter any generated state.

\subsection{Qubit finite-clock diagnostics}

The confirmatory finite-step experiment uses the full macro-instrument of the theorem.
Each macro-step applies exact $X,Y,Z$ binary instruments with
$p_h=(1-e^{-2\kappa h})/2$, then one geodesic Hamiltonian rotates toward $\M$ by at most Fubini--Study distance $Uh$.
The normal and tangent conditions replay the same tagged uniform streams and use the same Haar states, $\kappa=0.7$, $T=1$, outcome rate, update time, and controller.
Only the $Y$ readout changes.

The refined protocol was frozen before its remote run.
It used four new seeds, $2048$ trajectories per seed and condition, the normal grids
$U=8$ with $h=2^{-11},2^{-12},2^{-13}$ and
$U=16$ with $h=2^{-12},2^{-13},2^{-14}$, and tangent controls at $h=2^{-13}$.
The continuous finite-horizon reference is Eq.~(\ref{eq:S-finite-optimum}), evaluated independently by the conservative Markov discretization described below.
Seed standard deviations are reported after the means:
\begin{table}[tb]
\caption{Finite-clock support error of the explicit complete-Pauli controller.  Values after $\pm$ are sample standard deviations across four seeds; $E^*(U,1)$ is the continuous finite-horizon optimum.}
\label{tab:S-clockpoints}
\centering
\begin{tabular}{c c c c c}
\toprule
$U$ & $h$ & $hU^2/\kappa$ & $\E\sin q_T$ & error/$E^*$\\
\midrule
$8$  & $2^{-11}$ & $0.04464$ & $0.041814\pm0.000997$ & $0.9722$\\
$8$  & $2^{-12}$ & $0.02232$ & $0.041573\pm0.000915$ & $0.9666$\\
$8$  & $2^{-13}$ & $0.01116$ & $0.043337\pm0.001028$ & $1.0076$\\
$16$ & $2^{-12}$ & $0.08929$ & $0.019934\pm0.000582$ & $0.9152$\\
$16$ & $2^{-13}$ & $0.04464$ & $0.021224\pm0.000308$ & $0.9744$\\
$16$ & $2^{-14}$ & $0.02232$ & $0.021030\pm0.000334$ & $0.9654$\\
\bottomrule
\end{tabular}
\end{table}
The first-order two-grid estimates are
$E^*(8,1)=0.0430079$ and $E^*(16,1)=0.0217825$, with successive-grid differences $1.21\times10^{-4}$ and $1.20\times10^{-4}$.
Their numerical proximity to the stationary values $L(8)=0.0430072$ and $L(16)=0.0217811$ is a relaxation result at this $T$, not an identification made by assumption.
The finest finite-step values are respectively $0.76\%$ above and $3.46\%$ below the finite-horizon references.
Adjacent means are not strictly monotone: their differences are comparable to seed variation.
Accordingly, the pre-specified frozen checks \texttt{normal\_monotone\_finer\_u8} and
\texttt{normal\_monotone\_finer\_u16} failed.
The data therefore provide resolved-window consistency while the monotone-convergence condition fails.

For the continuous computation we used the nonsingular coordinate $s=|y|=\sin(2q)$, whose Haar initial density is uniform on $[0,1]$.  Under maximal inward feedback,
\begin{equation}
 \dd s=[-4\kappa s-2U\sqrt{1-s^2}]\dd t
 +2\sqrt{\kappa(1-s^2)[\delta+(1-\delta)s^2]}\dd W+\dd K^0.
\label{eq:S-scoordinate}
\end{equation}
For $\delta>0$, $K^0$ enforces reflection; for $\delta=0$, zero is an absorbing atom and no reflecting term is used.  A positive birth--death approximation of the backward generator was propagated by implicit Euler with a factored conservative M-matrix.  The reported computation used spatial grids $N=256,512,1024,2048$ on representative convergence pairs and $4096$ time steps, plus $1024$--$8192$ step refinement.  Across these cases the maximum mass error was $2.79\times10^{-11}$, the minimum mass was nonnegative, and the maximum forward--backward duality discrepancy was $1.64\times10^{-12}$.  Tangent values below $10^{-12}$ were marked below numerical resolution rather than reported as exact zeros.  This deterministic solver evaluates the continuous reference; the finite-step points independently use the saved quantum-instrument trajectories.

\begin{figure}[t]
 \includegraphics[width=0.54\textwidth]{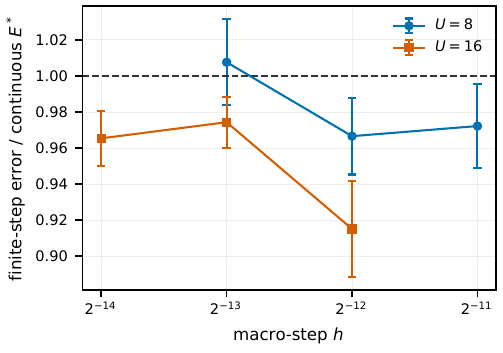}
 \caption{Finite-clock diagnostic showing all evaluated grid points.  The ordinate divides the support error of one explicit finite-step controller by the continuous finite-horizon optimum from Eq.~(\ref{eq:S-finite-optimum}).  Error bars are sample standard deviations over four seeds.  Values below one do not violate the continuous theorem because the discrete contract includes a post-triple correction slot and is a different admissible class.  Both sequences are nonmonotone and the two strict-monotonicity checks failed.}
 \label{fig:S-clockdiagnostic}
\end{figure}

We also checked the distribution along the target instead of reporting support distance alone.
Let $\widehat\mu_\Pi$ be the empirical law of each terminal state's nearest projection and let
$W_{1,\rm FS}^{\rm angle}(\widehat\mu_\Pi,\nu_R)$ be the circular one-dimensional Wasserstein distance in Fubini--Study arc length.
Samplewise nearest-point coupling and $d_{\rm tr}\le d_{\rm FS}$ imply the empirical full-law bound
\begin{equation}
 W_{1,\rm tr}(\widehat\mu,\nu_R)
 \le \frac1N\sum_{i=1}^N\sin q_i
 +W_{1,\rm FS}^{\rm angle}(\widehat\mu_\Pi,\nu_R).
\label{eq:S-empiricalfull}
\end{equation}
Table~\ref{tab:S-empiricallaw} reports the components at $h=2^{-13}$.
\begin{table}[tb]
\caption{Seedwise-averaged components of the empirical coupling upper bound at $h=2^{-13}$.  Each seed contains $2048$ trajectories.}
\label{tab:S-empiricallaw}
\centering
\begin{tabular}{c c c c c}
\toprule
instrument & $U$ & support & angular $W_1^{\rm FS}$ & full upper bound\\
\midrule
normal  & $8$  & $0.043337$ & $0.013212$ & $0.056549$\\
normal  & $16$ & $0.021224$ & $0.013571$ & $0.034796$\\
tangent & $8$  & $0$        & $0.014069$ & $0.014069$\\
tangent & $16$ & $0$        & $0.014456$ & $0.014456$\\
\bottomrule
\end{tabular}
\end{table}
The corresponding seed standard deviations of the full upper bound were
$0.00548,0.00579,0.00326,0.00337$.
Each angular Wasserstein value was computed from one seed's $2048$ trajectories and the four scalar values were then averaged; pooling all $8192$ trajectories before computing $W_1$ would define a different statistic.
The Haar initial states are common across all conditions for a fixed seed, and the tagged outcome stream is common only when both $h$ and the seed agree.

A separate sampling calculation used the same $N=2048$ seedwise statistic and the same four-seed averaging.
Across $1000$ independent four-seed groups drawn directly from the uniform target, the mean projected angular $W_1^{\rm FS}$ was $0.015934$ and the central $95\%$ range of the four-seed mean was $[0.011481,0.021417]$.
This band is a sampling reference and is not subtracted from the experimental statistic.
All four frozen angular-harmonic checks and the angular-$W_1$ gate passed.
Zero tangent support means that every stored finite-step sample was captured.
Each displayed tangent condition has four seeds times $2048$ trajectories, hence $N=8192$; the exact one-sided $95\%$ zero-failure binomial upper limit is
$1-0.05^{1/8192}=3.656\times10^{-4}$.
No samples from different conditions are combined, and this finite-sample statement is not used to infer a tail exponent.

In total, ten of twelve frozen gates passed and both failures remain in the formal summary.
The experiment tests one declared controller and neither proves a fixed-$h$ all-controller lower bound nor estimates $\C_\delta$.

\subsection{Single-channel mechanism experiment}

The analytic theorem uses the complete Pauli instrument because its normal radial diffusion is globally nondegenerate.
The finite-step mechanism experiment instead isolates one exactly solvable $Y$ channel.
Let
\begin{equation*}
 \alpha=\arcsin\tr(\rho\sigma_y)\in[-\pi/2,\pi/2].
\end{equation*}
The complex-Haar prior, $Y$ instrument, and geodesic controller are covariant under rotations about $Y$.
Conditional on $|\alpha|$, the azimuth is therefore uniform.
Mapping each state to the nearest point of the equator preserves that uniform marginal and attains the support lower bound, giving the exact identity
\begin{equation}
 W_{1,\rm tr}(\Law[\rho_T],\nu_R)
 =\E\sin\frac{|\alpha_T|}{2}.
\label{eq:S-exactW1}
\end{equation}

For $\phi=0$, the continuous latitude obeys
\begin{equation*}
 \dd\alpha_t=[v_t+2\gamma\sin\alpha_t\cos\alpha_t]\dd t
 +2\sqrt\gamma\cos\alpha_t\dd W_t,
 \qquad |v_t|\le2U.
\end{equation*}
Its noise degenerates at the Bloch poles, which is why it is not used for the global all-feedback theorem.
For $\phi=\pi/2$, every finite branch is a unitary rotation about $Y$.
The geodesic controller reduces $|\alpha|$ by $\min(2Uh,|\alpha|)$ after each step, so the tangent latitude is pathwise zero by time $T$ when $U\ge\pi/(4T)$.

The frozen remote protocol used
$\gamma=0.7$, $T=1$, $h=1/8192$,
$U\in\{0,0.5,1,2,4,8,16,32\}$,
four frozen seeds, and $4096$ trajectories per seed.
The four largest budgets are reported in Table~\ref{tab:S-singleY}.
\begin{table}[tb]
\caption{Single-$Y$ mechanism diagnostic at the four largest peak budgets.  The tangent zeros follow from the explicit pathwise controller.}
\label{tab:S-singleY}
\centering
\begin{tabular}{c c c}
\toprule
$U$ & normal $W_{1,\rm tr}$ & tangent $W_{1,\rm tr}$\\
\midrule
$4$  & $8.6239\times10^{-2}$ & $0$\\
$8$  & $4.3322\times10^{-2}$ & $0$\\
$16$ & $2.1025\times10^{-2}$ & $0$\\
$32$ & $9.2287\times10^{-3}$ & $0$\\
\bottomrule
\end{tabular}
\end{table}
Here ``zero'' is the pathwise analytic outcome of the tangent controller; the stored floating-point arrays agree to machine precision rather than defining a numerical threshold.
The normal finite-range fit has slope $-1.072$.
The interpolated peaks $4.314,8.636,16.688$ at
$\epsilon=0.08,0.04,0.02$ give slope $-0.976$.
These regressions use few accuracy levels and are reported as mechanism diagnostics, not confidence intervals for an asymptotic exponent.
The maximum seedwise relative change between the two finest grids was $0.215$, below the frozen $0.30$ threshold.
Kraus completeness and channel errors were below $1.57\times10^{-16}$, and an independent recomputation of all $64$ stored terminal arrays agreed within $5.25\times10^{-8}$, the float32 storage tolerance.

Table~\ref{tab:S-evidence} separates the conclusions supported by the analytic results and by the auxiliary diagnostics.
\begin{table}[tb]
\caption{Claim and evidence map.  Detailed hypotheses accompany each result; numerical diagnostics do not enlarge those hypotheses.}
\label{tab:S-evidence}
\centering\small
\begin{tabular}{P{0.25\textwidth}P{0.66\textwidth}}
\toprule
Result or diagnostic & Conclusion and operational scope \\
\midrule
Complete-Pauli theorem; qubit boundary case & Finite-time all-controller value for the uniform real-state law, stationary lower certificate, fixed-positive-$\delta$ accuracy class, and tangent construction.  These are continuous, unit-efficiency results; no matching tangent lower asymptotic is supplied. \\
Geometric lower bound; proper-slant example & Normal-noise obstruction under the full local and global increment assumptions.  The spin example has an orbit-supported prior; its normal matching upper bound and ambient-Haar capture remain open. \\
Sector and graph corollaries & Extensions under their stated restricted resources: all-to-all sector access or dressed-on-site controls.  Neither gives dimension-uniform total hardware costs or unrestricted global optimality for the graph case. \\
Target-law perturbation; efficiency floor & An additive Wasserstein perturbation window and a necessary mixed-state error floor.  These do not prove a feasible advantage at fixed nonunit efficiency or inverse-accuracy divergence at fixed nonzero perturbation. \\
Finite-step channel and continuum results & Exact discarded-record matching and fixed-regular-policy convergence.  A terminal discrete slot changes the admissible class; there is no uniform policy-wise limit or universal bandwidth theorem. \\
Numerical and learner diagnostics & Checks of the continuous value, particular finite-step strategies, or clipped bridge mechanisms.  They retain unresolved points and failed refinements and do not establish all-policy optimality independently of the proofs. \\
\bottomrule
\end{tabular}
\end{table}

\subsection{Additional mechanism diagnostics and evidentiary roles}

\subsubsection{Peak-clipped bridge}

For the scalar bridge
\begin{equation*}
 \dd q_t=\sigma\dd W_t,\qquad q_T\sim N(0,\epsilon^2),
\end{equation*}
we evaluated the exact Gaussian Doob drift after clipping it to $|u|\le U$.
The curves collapse under $r=U\epsilon/\sigma^2$.
At fixed normalized error $W_1/\epsilon=0.15$, the required peaks were
$4.323,9.088,18.706$ for
$\epsilon=0.08,0.04,0.02$, producing the exploratory slope $-1.057$.
The protocol used $\sigma^2=0.35$, four seeds, and $16384$ trajectories per condition.
Independent recomputation of $88$ saved arrays agreed to $8.88\times10^{-16}$.
This is a local boundary-layer diagnostic for an ideal bridge, not a channel-equivalent quantum-instrument comparison and not evidence for the all-feedback theorem.

\begin{figure}[t]
 \includegraphics[width=0.82\textwidth]{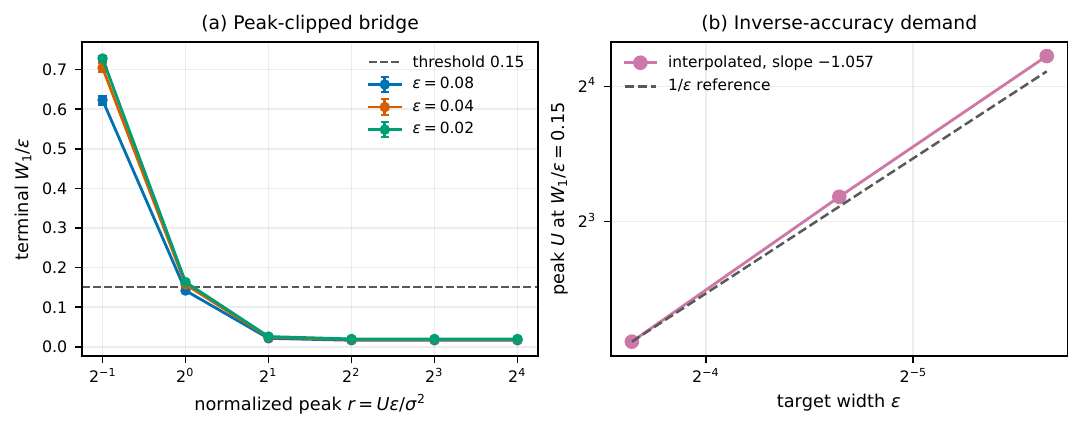}
 \caption{Peak-clipped bridge diagnostic.  The dimensionless collapse is consistent with inverse-accuracy peak demand.}
\end{figure}

\subsubsection{Same-prior \texorpdfstring{$d=4$}{d=4} double-sided bridge}

To compare against the closest fixed-diffusion construction without changing the prior, we also solved a double-sided Schr\"odinger system for the $d=4$, $\delta=1$ radial reference process with $\Gamma=4$, $\kappa=0.5$, and $T=1$.
The initial cell masses are the analytic complex-Haar masses $p_0=\pi$, while
\begin{equation}
 (p_T)_i\propto\pi_i\exp[-q_i^2/(2\epsilon^2)].
\label{eq:S-d4-bridge-target}
\end{equation}
For the implicit-Euler reference kernel $K$, Sinkhorn determines $f_0,g_T$ from
$f_0=p_0/(K^{N_t}g_T)$ and $g_T=p_T/[(K^T)^{N_t}f_0]$.
The unconstrained bridge is then advanced with the exact discrete Doob update
\begin{equation}
 p_{k+1}=h_{k+1}\odot K^T(p_k\oslash h_k),
 \qquad h_k=K^{N_t-k}g_T.
\label{eq:S-discrete-Doob}
\end{equation}
This uses precisely the same kernel as Sinkhorn and reproduces both marginals to numerical precision.
For the finite-peak diagnostic only, the continuous face demand $\kappa\,\Delta\log h/\Delta q$ is clipped to $[-U,U]$ and propagated by implicit substeps.

On the fine grid $(N,N_t)=(800,1600)$, the six unconstrained cases have $W_1/\epsilon<3.70\times10^{-11}$; the largest Sinkhorn initial and terminal L1 residuals over both grids are $1.015\times10^{-12}$ and $4.674\times10^{-15}$.
All 30 terminal arrays are finite and nonnegative, and clipped $W_1$ is nonincreasing in $U$ for every tested $\epsilon$ and grid.
Figure~\ref{fig:S-d4-bridge} shows that a fixed clipping level ceases to resolve the target as it narrows.

\begin{figure}[t]
 \includegraphics[width=0.92\textwidth]{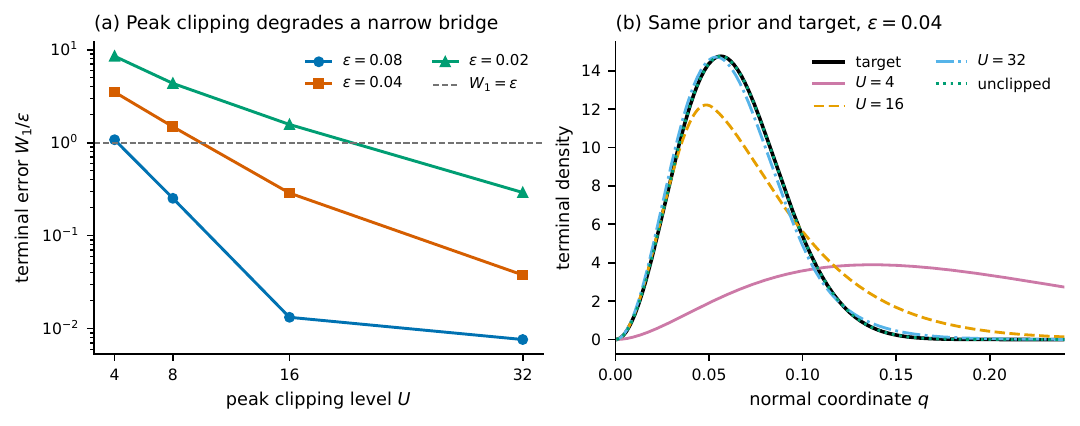}
 \caption{Same-prior $d=4$ bridge mechanism baseline on the fine grid.  (a) Terminal radial $W_1/\epsilon$ after clipping the ideal bridge demand.  The exact discrete unclipped bridge lies below $3.70\times10^{-11}$ and is omitted from the log-scale curves.  (b) Target and terminal densities for $\epsilon=0.04$.  The calculation fixes both Haar initial and target marginals; it does not optimize over quantum Hamiltonian feedback.}
 \label{fig:S-d4-bridge}
\end{figure}

Three of fifteen clipped coarse/fine comparisons exceed the specified $0.01$ threshold:
\begin{equation*}
 (\epsilon,U)=(0.04,32),(0.02,16),(0.02,32),
\end{equation*}
with $W_1/\epsilon$ differences $0.03044$, $0.03732$, and $0.03748$.
These three discrepancies preclude a continuous-limit convergence conclusion for the clipped-rate discretization.
The baseline establishes only that the unconstrained bridge demand, when subjected to the same kind of peak saturation, develops an accuracy-dependent boundary layer; it is neither a full $d=4$ quantum trajectory simulation nor an all-policy optimum.

\subsubsection{Complete-instrument learner stress test}

A separate frozen experiment used sequential $X/Y/Z$ instruments with
$\Gamma=4$, $T=1$, and $h=5\times10^{-4}$.
The Bloch-angle targets were
\begin{equation*}
 p_k(\Theta)=\frac{1+0.6\cos(k\Theta)}{2\pi},
 \qquad k=2,4,
\end{equation*}
where $\Theta=2\vartheta$ is the $2\pi$-periodic Bloch angle.
Every schedule and target had its own trained $320$-parameter controller, with three training seeds and five frozen test seeds.
At $U=8$, the normal-minus-tangent trace-$W_1$ gaps were
$0.05784$ with bootstrap interval $[0.04266,0.07301]$ for $k=2$ and
$0.05851$ with interval $[0.04434,0.07267]$ for $k=4$.
Terminal normal exposure had Spearman correlation $0.8944$ with the full-distribution gap.
This test asks whether the geometric ordering survives finite learner optimization after schedule-specific retraining.
It neither estimates the infimum in Eq.~(\ref{eq:S-cost}) nor reproduces an exact reverse-diffusion task, because the uncontrolled schedule endpoints differ.

\subsection{Data provenance and retained limitations}

The complete-Pauli resolution data are stored separately from the manuscript source.  Four seeds, $2048$ trajectories per seed, and eight conditions give $65536$ saved terminal states.  The archive contains $32$ arrays of shape $(2048,3)$; numerals such as \texttt{n16384} in array keys denote $1/h$ time steps, not trajectory counts.  Both nonmonotonic refinements in Fig.~\ref{fig:S-clockdiagnostic} are included in the displayed data.

The fourteen-file source manifest has matching hashes.  Recalculation from the terminal arrays reproduces support error within $6.65\times10^{-11}$, the empirical coupling bound within $2.44\times10^{-10}$, and the stationary integral within $2.04\times10^{-15}$; the largest float32 norm discrepancy is $4.16\times10^{-8}$.  This verifies the saved-state analysis rather than rerunning the trajectories.  Figure~\ref{fig:S-twoqubit-old}(c) uses twelve target--target ensembles of $1024$ states, with the first $384$ entering each assignment; their mean and seed standard deviation are $0.20704$ and $0.00498$.  The separate qubit angular-$W_1$ sampling baseline uses master seed $20260920$, $1000$ groups, four synthetic seeds per group, and $2048$ target samples per seed.  Scripts, raw arrays, manifests, thresholds, and unsuccessful checks are included in the reproducibility package.

The numerical scripts, saved terminal-state arrays, source manifests, thresholds, and retained unsuccessful checks underlying the reported diagnostics are preserved in a reproducibility archive.  They are available from the corresponding author, Qinglin Zhao (\href{mailto:qlzhao@must.edu.mo}{qlzhao@must.edu.mo}), upon reasonable request.  The archive's \texttt{README.md} specifies the calculation commands, the scope of each data batch, and a file-level manifest.  The revision notes distinguish calculations rerun for this version from inherited experiment records.  The archive is not attached to this arXiv version, and no public repository deposit is asserted here.

\section{Scope boundaries}

The complete-Pauli, single-excitation, and graph-dressed cost separations assume unit efficiency, a known record-conditioned pure state, a record-labelled output, and peak-bounded absolutely continuous feedback $H_t\dd t$.
The general terminal-window theorem instead states its stopped-It\^o, bracket, and finite-variation requirements explicitly and is not restricted to quantum projective space.
The finite-time exact value holds for the complete-Pauli full-space theorem at $d=2^n$; the qubit theorem separately handles its reflected lower boundary and smooth-approximation issue.
The multiqubit result uses the full $d=2^n$ system state, all $4^n-1$ nonidentity Pauli channels, a complex-Haar prior, the uniform $\mathbb{RP}^{d-1}$ target, and unrestricted global Hamiltonians.
The single-excitation result uses $m$ physical qubits, $m^2$ one- and two-body directions, a conserved $m$-dimensional sector, all-to-all pair access, and the operator norm restricted to that sector.
The graph corollary instead uses a graph-dressed product-Haar prior, the structured target $\mathcal N_G$, $3n$ bounded-support channels, and the per-site dressed-local class in Eq.~(\ref{eq:S-graph-controls}); it does not compare against unrestricted global Hamiltonians.
Unit efficiency is a quantum-limited benchmark rather than a present-hardware claim: superconducting-qubit weak measurement has reached $72\pm4\%$ efficiency \cite{Lecocq2021EfficientMeasurement}, while finite detector bandwidth has its own joint system--detector stochastic description \cite{AnnbyAndersson2022FiniteBandwidth}.
At efficiency below one, the public filter is mixed.  Theorem~\ref{thm:S-complete-Pauli-efficiency-floor} gives a positive all-controller pure-target floor for the main instruments, while their exact density-matrix-space minimax cost and a peak-dependent matching upper bound remain open.  Proposition~\ref{prop:S-efficiency-floor} is a sharper exact value only for its separately declared no-knowledge cancellation benchmark.
The score-based feedback construction of Dubey and John \cite{DubeyJohn2026ScoreFeedback} also uses the current of the same measurement increment: in their notation, $H_{\rm meas}=rA/\tau$ with $r\dd t=\langle A\rangle\dd t+\sqrt\tau\dd W$.  Its Hamiltonian increment therefore contains an $A\dd W/\sqrt\tau$ term.  This continuous-time resource differs from the almost-sure bound on the ordinary amplitude $H_t$ in $H_t\dd t$ used here; the present lower bound does not apply to that singular current-feedback class.
Same-increment no-knowledge feedback can alter the diffusion tensor \cite{Szigeti2014NoKnowledge,Saiphet2021DelayedFeedback}; its sample paths have unbounded variation and require a joint peak-bandwidth cost.
Delayed records, auxiliary system-reset or adaptive-ancilla channels beyond the fixed instrument dilation, postselection, and locality classes beyond the specific dressed-on-site contract are separate operational problems.

The two closest control formulations differ from the present resource question
along independent axes \cite{WisemanDoherty2005,OhzekiJordan2026MeasuredBridge}:
\begin{table}[tb]
\caption{Boundary to the closest measurement--control and bridge formulations.
The distinction is the optimized object and resource functional, not a claim
that the prior formulations cannot vary measurements or prepare endpoints.}
\label{tab:S-nearest-work}
\centering
\small
\begin{tabular}{@{}P{0.18\textwidth}P{0.36\textwidth}P{0.36\textwidth}@{}}
\toprule
Prior work & Established question & Boundary of the present theorem \\
\midrule
Wiseman--Doherty & For fixed linear system--environment dynamics, choose the environment measurement/unravelling that optimizes stationary linear-quadratic-Gaussian feedback performance; the result covers state-based and Markovian current feedback. & Fix one nonlinear terminal state-law task, prior, horizon, public-record interface, and $L^\infty$ Hamiltonian class; then compare exact channel-equivalent instruments and prove different accuracy exponents after optimizing over every declared causal controller. \\
Measured quantum Schr\"odinger bridge & For one chosen monitored reference diffusion, use a Doob--Sinkhorn potential to impose an endpoint law or terminal effect and synthesize a quadratic-cost Hamiltonian control signal. & Vary the physical reference instrument while preserving its discarded-record channel, and compare the minimum peak-resource scaling.  The all-controller normal lower bound is not obtained by solving a bridge. \\
\bottomrule
\end{tabular}
\end{table}
Wiseman--Doherty already show explicitly, in their Eq.~(24), that the optimized stationary quadratic cost depends on the unravelling through its conditional covariance.  The new assertion here is the nonlinear terminal-law peak--accuracy separation and finite-time all-controller value, not the general dependence of optimized control performance on monitoring.

Table~\ref{tab:S-diffusion-bridge} states the separate boundary to QuDDPM, and Proposition~\ref{prop:S-discrete-continuum-bridge} gives only fixed-policy convergence.  Joint measurement--control optimization also appears in the oscillator endpoint-path problem of Karmakar and Jordan \cite{KarmakarJordan2026Pontryagin}.  The present distinction requires all three ingredients: identical discarded-record channels, a common terminal-law task, and the minimum peak strength after optimization over all $\A_U$.
Control-limit results bound fidelity to a single target under decoherence or monitored feedback \cite{KobayashiYamamoto2019ControlLimit,OConnorMaGenoni2025FidelityBound}; optimal environment measurements have also been chosen for linear-quadratic objectives \cite{WisemanDoherty2005}, and reverse-time quantum diffusions construct ensemble generators for fixed monitored noise \cite{Bompais2026Reverse}.
Neither establishes a nonlinear terminal-law $L^\infty$ cost separation between channel-equivalent instruments.
Universal continuous-distribution generation by parameterized quantum circuits \cite{BartheEtAl2025ContinuousGenerators} concerns classical output distributions driven by circuit parameters; it does not compare record-conditioned quantum-state laws under a fixed Lindbladian.
Optimized unravelings for classical simulation \cite{ChenBaoChoi2024} minimize a computational trajectory objective, while gauge freedoms classify when monitorings give the same conditional trajectories \cite{Brown2025GaugeUnravellings}; the cost in Eq.~(\ref{eq:S-cost}) instead prices a physical Hamiltonian needed to realize a terminal ensemble.
The reflected-diffusion comparison and boundary-density ideas themselves belong to classical stochastic analysis \cite{LionsSznitman1984,KruhnerXu2023Density}; the novelty claim is the exact quantum-instrument matching, the radial reduction, and the ensuing peak-cost and calibration laws.
The two records here have the same public alphabet and rate but different state informativeness.  Equations~(\ref{eq:S-record-Fisher})--(\ref{eq:S-record-Fisher-delta}) quantify the fixed-state local record information per unit time.  They are neither the Fisher information of an entire closed-loop path nor record--target mutual information.  In this local sense the present one-parameter family does not contain an information-matched pair with distinct diagonal normal noise.
Accordingly, the theorem attributes the separation to the full conditional instrument---its noise orientation and information structure---rather than to a Lindblad generator alone.  Separating those two instrument properties at fixed full record Fisher information remains open.
The remaining laboratory idealizations are therefore explicit: complete record collection and sufficiently fast state estimation; the full-Haar theorem has an exponential monitor count and global control; the single-excitation result uses all-to-all sector hardware; and the graph corollary omits the circuit, ancilla, total-norm, efficiency, and delay costs of its bounded-support observables.\end{document}